\documentclass[sigconf]{aamas}

\usepackage[tbtags,fleqn]{mathtools}

\usepackage[ruled, linesnumbered, noend, noline]{algorithm2e}

\SetCommentSty{mycommfont}

\usepackage{multirow}

\usepackage{makecell}

\usepackage{array}

\theoremstyle{acmplain}
\newtheorem{theorem}{Theorem}[section]

\newtheorem{lemma}[theorem]{Lemma}
\newtheorem{claim}[theorem]{Claim}
\newtheorem{corollary}[theorem]{Corollary}
\theoremstyle{acmdefinition}

\newtheorem{definition}[theorem]{Definition}
\usepackage{balance} 

\doi{IPZQ7320}

\makeatletter
\gdef\@copyrightpermission{
  \begin{minipage}{0.2\columnwidth}
   \href{https://creativecommons.org/licenses/by/4.0/}{\includegraphics[width=0.90\textwidth]{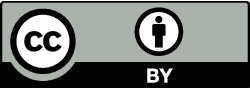}}
  \end{minipage}\hfill
  \begin{minipage}{0.8\columnwidth}
   \href{https://creativecommons.org/licenses/by/4.0/}{This work is licensed under a Creative Commons Attribution International 4.0 License.}
  \end{minipage}
  \vspace{5pt}
}
\makeatother

\setcopyright{ifaamas}
\acmConference[AAMAS '26]{Proc.\@ of the 25th International Conference
on Autonomous Agents and Multiagent Systems (AAMAS 2026)}{May 25 -- 29, 2026}
{Paphos, Cyprus}{C.~Amato, L.~Dennis, V.~Mascardi, J.~Thangarajah (eds.)}
\copyrightyear{2026}
\acmYear{2026}
\acmDOI{}
\acmPrice{}
\acmISBN{}

\title[A Radius-Sensitive Approximation Algorithm for Connected Submodular Maximization]{A Radius-Sensitive Approximation Algorithm for Connected Submodular Maximization}

\author{Philip Cervenjak}
\affiliation{
  \institution{School of Computing and Information Systems, The University of Melbourne}
  \city{Melbourne}
  \country{Australia}}
\email{pcervenjak@student.unimelb.edu.au}

\author{Junhao Gan}
\affiliation{
  \institution{School of Computing and Information Systems, The University of Melbourne}
  \city{Melbourne}
  \country{Australia}}
\email{junhao.gan@unimelb.edu.au}

\author{Naonori Kakimura}
\affiliation{
  \institution{Department of Mathematics, \\Keio University}
  \city{Yokohama}
  \country{Japan}}
\email{kakimura@math.keio.ac.jp}

\author{Seeun William Umboh}
\affiliation{
  \institution{School of Computing and Information Systems, The University of Melbourne \& ARC Training Centre in Optimisation Technologies, Integrated Methodologies, and Applications (OPTIMA)}
  \city{Melbourne}
  \country{Australia}
  }
\email{william.umboh@unimelb.edu.au}

\author{Anthony Wirth}
\affiliation{
  \institution{School of Computer Science, The University of Sydney \& School of Computing and Information Systems, The University of Melbourne\\[3mm]}
  \city{Sydney}
  \country{Australia}
  }
\email{anthony.wirth@sydney.edu.au}

\begin{abstract}
\textit{Connected Submodular Maximization} (\textbf{CSM}) is a graph problem with important applications to wireless network deployment, path planning, epidemic outbreaks, and cancer genome studies. In \textbf{CSM}, we are given a graph $G$, a non-negative monotone submodular function $f$ on subsets of the vertex set of $G$, and an integer $k$. The goal is to select a tree in $G$, with $k$ edges, whose vertex set maximizes $f$. We also study the more general Directed and Directed Rooted variants of \textbf{CSM} (\textbf{DCSM} and \textbf{DRCSM} respectively). In both variants, $G$ is directed and the solution must be an out-tree in $G$, with $k$ edges, whose vertex set maximizes $f$; \textbf{DRCSM} further specifies a vertex to be the root of the selected out-tree. For \textbf{CSM}, several previous works have proposed polynomial time approximation algorithms; the state-of-the-art polynomial time algorithm achieves a $\Omega(\frac{1}{\sqrt{k}})$-approximation. We can also parameterize the approximation factor by the radius of the optimal solution, denoted by $r$; the state-of-the-art polynomial time algorithm achieves a $\Omega(\frac{1}{r})$-approximation.

In this paper, we improve on the state-of-the-art approximation factor for \textbf{CSM} with respect to $r$ as well as $k$, noting that $r \leq k$. We propose a polynomial time framework that, for (Directed) \textbf{CSM}, achieves a $\Omega(\frac{\varepsilon^{3}}{{r}^{\varepsilon}})$-approximation for every constant $\varepsilon \in (0, 1]$. For \textbf{DRCSM}, our framework achieves a $\Omega(\frac{\delta \varepsilon^{3}}{{r}^{\varepsilon}})$-approximation that violates the size constraint by at most a factor of $1 + \delta$ for every $\delta \in [\frac{1}{k}, 1]$. A key component of our framework is \GR{}, an algorithm for \textbf{DRCSM} that outputs a bicriteria approximation, i.e., an approximate solution that violates the size constraint by at most some factor. \GR{} takes an algorithm with a bicriteria approximation factor in terms of $k$ and outputs a solution with the same bicriteria approximation factor (up to constants) in terms of $r$. Moreover, to use as a subroutine for \textbf{DRCSM}, we propose the algorithm \RAd{}, which achieves a $\frac{1}{d+1}$-approximation that violates the size constraint by at most a factor of $(d+1)^{2} k^{\frac{1}{d}}$. \RAd{} uses a recursive greedy strategy, with $d$ denoting the number of levels of recursion used. This enables the dependence on $\varepsilon$ in the approximation factors of our overall framework.
\end{abstract}

\ccsdesc[500]{Theory of computation~Routing and network design problems}
\ccsdesc[500]{Mathematics of computing~Graph algorithms}

\keywords{Combinatorial Optimization; Submodular Maximization; Network Design; Approximation Algorithms; Graph Algorithms}

\newcommand{\BibTeX}{\rm B\kern-.05em{\sc i\kern-.025em b}\kern-.08em\TeX}

\DeclareMathOperator*{\argmax}{arg\,max}

\newcommand{\Funcf}{f}
\newcommand{\GroundSet}{V}

\newcommand{\GroundSetSize}{n}

\newcommand{\Univ}{X}

\newcommand{\OptValue}{\textnormal{OPT}}
\newcommand{\OptValuep}{\OptValue'}

\newcommand{\OptTree}{\SetT^{*}}

\newcommand{\OptCenter}{\Vertv}
\newcommand{\OptNumSub}{\Delta}
\newcommand{\TreeWeight}{s}
\newcommand{\OptSubInd}{j}
\newcommand{\BestOptSubInd}{j^{*}}
\newcommand{\RecLevel}{\ell}
\newcommand{\IterInd}{i}

\newcommand{\SetX}{X}
\newcommand{\SetY}{Y}
\newcommand{\SetS}{S}
\newcommand{\BestSolSub}{\hat{\SetS}}
\newcommand{\SetT}{T}
\newcommand{\SetP}{P}

\newcommand{\Error}{\varepsilon}
\newcommand{\Errorv}{\delta}

\DeclareMathOperator{\polylog}{polylog}

\newcommand{\Card}{k}
\newcommand{\Cardp}{\Card'}

\newcommand{\Budget}{B}

\newcommand{\Funcg}{g}
\newcommand{\Funcgp}{g'}

\newcommand{\Reals}{\mathbb{R}}

\newcommand{\Max}{\textnormal{max}}
\newcommand{\Min}{\textnormal{min}}

\newcommand{\NumOptVertices}{k'}
\newcommand{\NumTerm}{z}
\newcommand{\NumTermp}{z'}

\newcommand{\HHop}{h}
\newcommand{\Graph}{G}

\newcommand{\VertexSet}{\GroundSet}
\newcommand{\EdgeSet}{E}
\newcommand{\NumVertices}{\GroundSetSize}

\newcommand{\NumEdges}{m}
\DeclareMathOperator{\Dist}{dist}

\newcommand{\Vertv}{v}
\newcommand{\Vertw}{w}

\newcommand{\Root}{\Vertv}

\newcommand{\OptVert}{w^{*}}

\newcommand{\Degree}{\Delta'}

\newcommand{\Ecc}{\epsilon}
\newcommand{\Radius}{r}
\newcommand{\OptRadius}{\Radius}
\newcommand{\Height}{\Radius}
\newcommand{\OptHeight}{\Height}

\newcommand{\NumSubtrees}{t}

\newcommand{\GR}{\textsc{GreedyRadius}}

\newcommand{\SetW}{W}

\newcommand{\SetTp}{\SetT'}
\newcommand{\Bestw}{\hat{\Vertw}_{\IterInd}}
\newcommand{\BestOptSub}{\hat{\SetT}_{\IterInd}}
\newcommand{\BestOptSuba}{\hat{\SetT}}

\newcommand{\Bestwa}{\hat{\Vertw}}

\newcommand{\Alg}{\textsc{Alg}}
\newcommand{\AppFactor}{\alpha}
\newcommand{\VioFactor}{\beta}
\newcommand{\Time}{\Gamma}

\newcommand{\TargetSubsize}{c}
\newcommand{\BestSubsize}{\hat{c}_{\IterInd}}
\newcommand{\BestSubsizea}{\hat{c}}
\newcommand{\TargetSubsizeMax}{\TargetSubsize_{\Max(\IterInd)}}

\newcommand{\TargetSubsizeMin}{\TargetSubsize_{\Min(\IterInd)}}

\newcommand{\TargetDiv}{q}
\newcommand{\List}{\mathcal{L}}
\newcommand{\SolSizeFunc}{M}

\newcommand{\RA}{\textsc{RecApprox}}

\newcommand{\RecDepth}{d}
\newcommand{\RAd}{\RA-$\RecDepth$}

\newcommand{\TargetSize}{b}

\newcommand{\Covf}{\Funcf(\VertexSet)}

\newcommand{\CSM}{Connected Submodular Maximization}
\newcommand{\CSMa}{\textbf{CSM}}

\newcommand{\DCSM}{Directed Connected Submodular Maximization}
\newcommand{\DCSMa}{\textbf{DCSM}}

\newcommand{\DRCSM}{Directed Rooted Connected Submodular Maximization}
\newcommand{\DRCSMa}{\textbf{DRCSM}}

\newcommand{\STOa}{\textbf{STO}}

\newcommand{\SO}{Submodular Orienteering}

\newcommand{\DRSTa}{\textbf{DRST}}

\newcommand{\DUSTa}{\textbf{DUST}}

\newcommand{\MCSBa}{\textbf{MCSB}}

\newcommand{\CMC}{Connected Maximum Coverage}
\newcommand{\CMCa}{\textbf{CMC}}

\newcommand{\DRCMC}{Directed Rooted Connected Maximum Coverage}
\newcommand{\DRCMCa}{\textbf{DRCMC}}

\newcommand{\CBCa}{\textbf{CBC}}

\newcommand{\DCBCa}{\textbf{DCBC}}

\newcommand{\BCDS}{Budgeted Connected Dominating Set}

\newcommand{\GSTa}{\textbf{GST}}

\newcommand{\DST}{Directed Steiner Tree}
\newcommand{\DSTa}{\textbf{DST}}

\newcommand{\PST}{Polymatroid Steiner Tree}
\newcommand{\PSTa}{\textbf{PST}}

\newcommand{\PDST}{Polymatroid Directed Steiner Tree}
\newcommand{\PDSTa}{\textbf{PDST}}

\newcommand{\Consts}{s}

\newcommand{\p}[1]{}

\begin{document}
\emergencystretch 3em


\pagestyle{fancy}
\fancyhead{}


\maketitle 


\section{Introduction}
We study the problem of \emph{\CSM{}} (\CSMa{}). In this problem, we are given an undirected graph $\Graph = (\VertexSet, \EdgeSet)$ with $\NumVertices$ vertices and $\NumEdges$ edges, a non-negative monotone submodular set function~$\Funcf$ whose ground set is~$\VertexSet$, and an integer~$\Card \geq 1$; the goal is to select a tree $\SetS \subseteq \Graph$, with~$\Card$ edges, whose vertex set maximizes~$\Funcf$.\footnote{Our formulation of \CSMa{} is equivalent to the formulation where the goal is to select a set $\SetS' \subseteq \VertexSet$, of $\Cardp$ vertices that are connected in $\Graph$, that maximizes $\Funcf$. This is because $\SetS'$ must have a spanning tree with $\Cardp-1 = \Card$ edges.} We also study \emph{\DCSM{}} (\DCSMa{}), which generalizes \CSMa{} by letting $\Graph$ be a directed graph; the goal is to select an out-tree (i.e., arborescence) $\SetS \subseteq \Graph$, with~$\Card$ edges, whose vertex set maximizes~$\Funcf$. We further study \emph{\DRCSM{}} (\DRCSMa{}), which generalizes \DCSMa{} by specifying a vertex $\Vertv$ to be the root of the selected out-tree.

\CSMa{} has received significant attention owing to its important applications. These include deploying a connected network, of limited size, with maximum coverage or throughput; specifically positioning a connected network of wireless routers or relays~\cite{kuo2014maximizing, Gao2018}, unmanned aerial vehicles~\cite{Xu2021, Xu2022, Li2023, Lv2024, wang2024approximation}, or wireless power chargers~\cite{Yu2019, zhou2021design, You2023}.
Another application of \CSMa{} is identifying a limited number of geographically connected regions that, if undervaccinated, are most susceptible to an epidemic outbreak \cite{Cadena2020}. This is motivated by the need for public health interventions, which are most effective within a small number of localized undervaccinated regions. Further, \CSMa{} is related to maintaining a communication network between robots in multi-robot task planning \cite{Shi2021}.

An important special case of \CSMa{} is \emph{\CMC{}} (\CMCa{}). Here, we are given a universe of elements $\Univ$, an undirected graph $\Graph$ where each vertex is a subset of $\Univ$, and an integer $\Card \geq 1$; the goal is to select a tree $\SetS \subseteq \Graph$, with $\Card$ edges, whose vertex set maximizes the number of its covered elements. \CMCa{} has application to deploying a wireless sensor network, of limited size, with maximum coverage \cite{Khuller2014, Huang2015, Huang2019}. \CMCa{} also has application to identifying genetic mutations associated with cancers~\cite{Vandin2011, Hochbaum2020}. In this application, we are given a universe of cancer patients and a gene interaction network, wherein each genetic mutation is associated with the subset of patients that have the mutation. As cancer is widely assumed to be caused by a network of mutated genes (called a \emph{pathway}), the goal is to select a limited set of connected gene mutations that `explain' the most cases of cancer (i.e., cover the most patients). \CMCa{} is also related to the Watchman Route Problem, where we are given a `map' of various locations, and the goal is to find a minimum-length path in the map such that every location is visible from somewhere along the path \cite{skyler2022solving, Livne2023}. Further, \DRCMC{} (\DRCMCa{}), i.e., the special case of \DRCSMa{} with a coverage objective, is related to problems where we are given a network that models interactions between agents, and the goal is to select a minimum-size rooted out-tree that reconstructs the propagation of some activity throughout the network, such as an epidemic outbreak~\cite{Rozenshtein2016, Mishra2023}.

\paragraph*{Approximation Algorithms.}
Most of the previous works on \CSMa{} have focused on \emph{polynomial time approximation algorithms}. \CSMa{} was first studied by Kuo et al. \cite{kuo2014maximizing}, where they showed a polynomial time $\frac{1-1/e}{5(\sqrt{\Card+1}+1)}$-approximation algorithm. Xu et al.~\cite{Xu2021} improved the approximation factor to $\frac{1-1/e}{\lfloor \sqrt{\Card+1} \rfloor}$, and Li et al. \cite{Li2023} further improved it to $\Omega( \sqrt{ \frac{\Consts}{\Card} })$, where $\Consts$ is a chosen integer parameter (their algorithm runs in polynomial time if $\Consts$ is set constant). Note that these works have only found constant-factor improvements in approximation. There also exists, for any constant $\Error > 0$, a polynomial time $\Omega(\frac{1}{\log^{2+\Error} \NumVertices})$-approximation algorithm for \CSMa{} as implied by Theorem 3.1 of Im et al. \cite{im2016minimum}. This, in turn, is implied by an algorithm for \PST{} due to Calinescu and Zelikovsky \cite{Calinescu2005}. However, this algorithm crucially relies on taking a tree embedding of the input graph, which does not work for general directed graphs.

We also mention that a number of works have studied budgeted variants of (Directed Rooted) \CSMa{} \cite{kuo2014maximizing, Ghuge2022, DAngelo2022} and (Directed Rooted) \CMCa{} \cite{Huang2015, Ran2016, Huang2019, DAngelo2022, DAngelo2025}, i.e., with edge or vertex costs. We outline these works in the related work (Section~\ref{sec:related work}), though we mention the state-of-the-art works on budgeted (Directed Rooted) \CSMa{} here.
Ghuge and Nagarajan \cite{Ghuge2022} studied \DRCSMa{} with edge costs, showing a quasi-polynomial time $\Omega(\frac{\log \log \NumOptVertices}{\log \NumOptVertices})$-approximation algorithm, where $\NumOptVertices$ is the number of vertices in an optimal tree. D'Angelo et al. \cite{DAngelo2022} studied (Directed) \CSMa{} and \DRCSMa{} with vertex costs, showing polynomial time algorithms: letting $\Budget$ denote the budget, they respectively gave a $\Omega(\frac{1}{\sqrt{\Budget}})$-approximation algorithm and, for every $\Errorv \in (0, 1]$, a $\Omega(\frac{\Errorv^{3}}{\sqrt{\Budget}})$-approximation algorithm that violates the budget by at most a factor of $1+\Errorv$.

Despite the aforementioned approximation algorithms, there appear to be no interesting approximation hardness results for \CSMa{}. The most we can say is that it is NP-hard to find an approximation better than $1-\frac{1}{e}$~\cite{feige1998threshold}, since the cardinality constrained problem is a special case (by taking a complete graph as input). However, for budgeted variants of \CSMa{}, there are near-logarithmic approximation hardness results \cite{Halperin2003, kortsarz2011approximating, Grandoni2019}. We outline these hardness results in the related work (Section~\ref{sec:related work}).

\paragraph*{Radius-Sensitive Approximation Algorithms.}
Previous works have also considered polynomial time algorithms for \CSMa{} whose approximation factors depend on one or more parameters that restrict the input graph,~$\Graph$, or the input function,~$\Funcf$. These algorithms are useful as, for certain parameter values, they can outperform algorithms whose approximation factors depend on $\NumVertices$ or $\Card$. In this paper, we are interested in the \emph{radius} of the optimal (out-)tree, denoted by $\OptRadius$. Other parameters that have been considered in the approximation factor include the doubling dimension of a metric graph \cite{Cadena2020}, the curvature of $\Funcf$ \cite{Lv2024}, and the $\HHop$-hop independence of $\Funcf$ \cite{Xu2022, Lv2024}.

For an instance of \CSMa{}, we define the radius, $\OptRadius$, of an optimal tree. First, let $\OptCenter$ be the \emph{center} of the optimal tree, i.e., $\OptCenter$ is the vertex that minimizes its maximum shortest distance to any other vertex in the optimal tree. Then $\OptRadius$ is this maximum distance. For an instance of \DCSMa{} or \DRCSMa{}, we define the radius, $\OptRadius$, of an optimal out-tree to be equivalent to its height, i.e., the number of edges in the longest directed path from its root. Note that $\OptRadius \leq \lceil \frac{\Card}{2} \rceil$ holds for an undirected tree and $\OptRadius \leq \Card$ for a directed out-tree.

The optimal solution radius~$\OptRadius$ is a natural parameter to consider as a smaller~$\OptRadius$ makes the vertices of an optimal (out-)tree more reachable from its center~$\OptCenter$, thus enabling better approximations. To see this intuitively, suppose~$\OptRadius = 1$; in this case, we can guess $\OptCenter$, initialize our solution with it, and then run the standard greedy algorithm on the (out-)neighbors of~$\OptCenter$. This achieves a $(1-\frac{1}{e})$-approximation, the same as for the cardinality constrained variant~\cite{nemhauser1978analysis}. Moreover, the optimal solution radius is expected to be small in `small-world' graphs, which are graphs where the average shortest distance between a pair of vertices is $O(\log \GroundSetSize)$.

Although the optimal solution radius $\OptRadius$ is a natural parameter, only two previous works have studied it in the approximation factor for \CSMa{} or the special case \CMC{} (\CMCa{}), both achieving $\Omega(\frac{1}{\OptRadius})$-approximations. Vandin et al. \cite{Vandin2011} first showed a polynomial time $\frac{e-1}{(2e-1)\OptRadius}$-approximation algorithm for \CMCa{}. Hochbaum and Rao \cite{Hochbaum2020} improved this approximation by a constant factor to $\max\{ (1 - \frac{1}{e} ) ( \frac{1}{\OptRadius} - \frac{1}{\Card+1} ), \frac{1}{\Card+1} \}$ and extended their result to \CSMa{}. However, observe that when $\OptRadius = \Theta(\Card)$, these algorithms only achieve a $\Omega(\frac{1}{\Card})$-approximation, the same as the trivial approximation achieved by selecting the most valuable vertex in $\Graph$. Thus, it would be ideal to achieve a unified approximation with the same dependence on $\OptRadius$ and $\Card$, asymptotically speaking.
 
\subsection{Our Contributions} \label{sec:contr}
Our main result is Theorem~\ref{thm:GR+RA for DCSM} below.
\begin{theorem} \label{thm:GR+RA for DCSM}
    Let $(\Graph, \Funcf, \Card)$ be an instance of (Directed) \CSMa{}. Then, for every $\Error \in (0, 1]$, there exists an $\frac{\Error^3}{16 (1+2\Error)^{3} {\OptRadius}^{\Error}}$-approximation algorithm for (Directed) \CSMa{} that runs in time $O(\Card \NumVertices^{\lceil\frac{1}{\Error}\rceil+2} \OptRadius^{2\lceil\frac{1}{\Error}\rceil+1})$.
\end{theorem}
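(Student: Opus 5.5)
The plan is to assemble Theorem~\ref{thm:GR+RA for DCSM} from the two components described in the abstract: \GR{}, which turns a bicriteria algorithm for \DRCSMa{} whose guarantee depends on $\Card$ into one with the same guarantee in terms of $\OptRadius$; and \RAd{}, which is a $\frac{1}{\RecDepth+1}$-approximation violating the size constraint by a factor of at most $(\RecDepth+1)^2\Card^{1/\RecDepth}$. Undirected \CSMa{} reduces to \DCSMa{} by bidirecting every edge. An optimal tree of radius $\OptRadius$ with center $\OptCenter$ becomes an out-tree rooted at $\OptCenter$ of height $\OptRadius$, and any out-tree with $\Card$ edges projects to a tree with at most $\Card$ edges, so nothing is lost. \DCSMa{} in turn reduces to \DRCSMa{} by guessing the root, i.e.\ trying all $\NumVertices$ vertices. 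This guessing accounts for one factor of $\NumVertices$ in the running time.

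For rooted instances, set $\RecDepth=\lceil 1/\Error\rceil$ and feed \RAd{} into \GR{}. By the transfer property, the output has approximation factor $\Omega(\frac{1}{\RecDepth+1})$ and violation factor $O((\RecDepth+1)^2\OptRadius^{1/\RecDepth})\le O((1+2\Error)^2\Error^{-2}\OptRadius^{\Error})$, using $\RecDepth+1\le \frac{1}{\Error}+2=\frac{1+2\Error}{\Error}$. I expect \GR{} to work roughly as follows. It repeatedly guesses a child $\Bestwa$ of the root and a target subtree size $\BestSubsizea$ from a geometric list. It then runs the subroutine with size budget $\BestSubsizea$ on depth-limited subproblems, where the radius $\OptRadius$ replaces $\Card$ because each subtree has at most as many relevant levels as the height. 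Finally it greedily picks the best marginal-value subtree per unit size. The constants lost there are what produce the final $16$.

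The last step converts the bicriteria solution, an out-tree with at most $\VioFactor\Card$ edges and value at least $\AppFactor\cdot\OptValue$, into a feasible one with $\Card$ edges. I will use the standard tree-decomposition lemma: any tree with $\VioFactor\Card$ edges can be partitioned into at most $O(\VioFactor)$ subtrees, each with at most $\Card$ edges (for example, about $2\VioFactor$ pieces). By subadditivity of $\Funcf$, the best piece has value at least $\AppFactor\OptValue/(2\VioFactor)$. If a piece has fewer than $\Card$ edges, I extend it arbitrarily using connectivity, which is safe by monotonicity. Piecing together the pieces gives
\[
\frac{1}{\RecDepth+1}\cdot\frac{\Error^2}{c(1+2\Error)^2\OptRadius^{\Error}}\ge \frac{\Error^3}{16(1+2\Error)^3\OptRadius^{\Error}},
\]
again using $\frac{1}{\RecDepth+1}\ge\frac{\Error}{1+2\Error}$. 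The partition step is exactly why the directed rooted version has to allow violation, whereas the unrooted versions do not.

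The main obstacle is the constant and runtime bookkeeping. I have to verify that the constants lost in \GR{}, in \RAd{}, and in the partition lemma multiply to at most $16$. For the running time, \RAd{} with depth $\RecDepth$ enumerates about $\NumVertices^{\RecDepth}$ vertex choices and $\OptRadius^{2\RecDepth}$ size and depth choices over its recursion. Together with the root guess, the outer loop of \GR{}, and the $\Card$ greedy iterations, this should yield $O(\Card\NumVertices^{\RecDepth+2}\OptRadius^{2\RecDepth+1})$. I would first check that \GR{} invokes the subroutine with budgets bounded by $\OptRadius$ rather than $\Card$, since that is what replaces $\Card$ by $\OptRadius$ in the polynomial factors.
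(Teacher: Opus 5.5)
Your skeleton matches the paper's. You guess the center (or root) and $r$, run \textsc{GreedyRadius} with \textsc{RecApprox}-$d$ as the subroutine, cut the output into out-subtrees of at most $k$ edges, keep the most valuable one, and set $d=\lceil 1/\varepsilon\rceil$ using $d+1\le\frac{1+2\varepsilon}{\varepsilon}$ and $r^{1/d}\le r^{\varepsilon}$.

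The gap is that the explicit constant is never obtained. You work only from the abstract's ``same factor up to constants'' and leave $c$ open. You also say the $16$ comes from losses inside \textsc{GreedyRadius}, while separately charging about a factor $2$ for the partition; both together would exceed $16$. What you need is the concrete guarantee: from an $(\alpha(k),\beta(k))$-approximation, \textsc{GreedyRadius} builds a $(\frac12\alpha(r),4\beta(r))$-approximation in time $O(\frac{kn}{r}\Gamma(n,r))$. Then $m(S)\le 4\beta(r)k$, and Lemma~\ref{lem:tree decomp} with $s=k/2$ splits $S$ into at most $\lfloor 8\beta(r)\rfloor$ out-subtrees with at most $k$ edges each. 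By subadditivity the best piece has value at least $\frac{1}{8\beta(r)}\cdot\frac{\alpha(r)}{2}\mathrm{OPT}$. With $\alpha=\frac{1}{d+1}$ and $\beta(r)=(d+1)^2r^{1/d}$ this is $\frac{1}{16(d+1)^3r^{1/d}}$, so $16=2\cdot 4\cdot 2$.

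Your guessed mechanism for \textsc{GreedyRadius} (children of the root, a geometric list of sizes, density-based selection) is not what yields this guarantee. Guessing only children of the root cannot reach optimal pieces rooted deeper in $T^*$. The actual argument goes as follows.
\begin{itemize}
\item Partition $T^*$ by Lemma~\ref{lem:tree decomp} with $s=r/2$ into at most $t=\lfloor 2k/r\rfloor$ out-subtrees of at most $r$ edges each. Their roots lie within distance $r-1$ of $v$ because $T^*$ has height $r$.
\item In each of $t$ rounds, try every $w$ with $\mathrm{dist}(v,w)\le r-1$ and call the subroutine with budget exactly $r$ on $f(\cdot\mid S_{\le i-1})$.
\item Keep the candidate with the largest marginal gain, and add a shortest $v$--$w$ path of at most $r$ edges.
\item The greedy argument gives $f(S)\ge(1-e^{-\alpha(r)})\mathrm{OPT}\ge\frac{\alpha(r)}{2}\mathrm{OPT}$. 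Each round adds at most $2\beta(r)r$ edges, so $m(S)\le 4\beta(r)k$.
\end{itemize}
So $k$ turns into $r$ because the subroutine's budget is $r$ and each connecting path costs no more than that budget.

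This also corrects your running-time count. \textsc{RecApprox}-$d$ at budget $r$ costs $O(n^{d+1}r^{2d+2})$, not about $n^d r^{2d}$. \textsc{GreedyRadius} makes at most $\lfloor 2k/r\rfloor\cdot n$ such calls, not $k$ rounds. This product already equals exactly $O(kn^{d+2}r^{2d+1})$. There is no room for a further factor $n$ from guessing the root: the stated bound is just the running time of \textsc{GreedyRadius} with this subroutine, and it does not charge for the guesses of $v$ and $r$.
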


Theorem~\ref{thm:GR+RA for DCSM} shows that, for every constant $\Error < \frac{1}{2}$, we improve on the previous polynomial time $\Omega(\frac{1}{\sqrt{\Card}})$-approximations \cite{kuo2014maximizing, Xu2021, Li2023} (recalling that $\OptRadius \leq \Card$). Moreover, for every constant $\Error < 1$, we improve on the previous polynomial time $\Omega(\frac{1}{\OptRadius})$-approximations \cite{Vandin2011, Hochbaum2020}.

We also obtain Theorem~\ref{thm:GR+RA for DRCSM} for the more general problem of \DRCSM{} (\DRCSMa{}). This theorem gives a bicriteria $(\AppFactor, \VioFactor)$-approximation algorithm, which is an algorithm that outputs an $\AppFactor$-approximate solution that may violate the size constraint by at most a factor of $\VioFactor \geq 1$.
\begin{theorem} \label{thm:GR+RA for DRCSM}
    Let $(\Graph, \Funcf, \Card, \Vertv)$ be an instance of \DRCSMa{}. Then, for every $\Error \in (0, 1]$ and every $\Errorv \in [\frac{1}{\Card}, 1]$, there exists a bicriteria $( \frac{\Errorv \Error^3}{16 (1+2\Error)^{3} {\OptRadius}^{\Error}}, 1+\Errorv)$-approximation algorithm for \DRCSMa{} that runs in time $O(\Card \NumVertices^{\lceil\frac{1}{\Error}\rceil+2} \OptRadius^{2\lceil\frac{1}{\Error}\rceil+1})$.
\end{theorem}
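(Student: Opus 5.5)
The plan is to obtain Theorem~\ref{thm:GR+RA for DRCSM} by composing two components of our framework: \RAd{} and \GR{}. We then convert the resulting bicriteria guarantee into a strict size bound via a pruning step whose loss is proportional to $\Errorv$.

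\textbf{Step 1: the base algorithm.} We use \RAd{} with $\RecDepth = \lceil \frac{1}{\Error} \rceil$. It is a bicriteria $(\frac{1}{\RecDepth+1}, (\RecDepth+1)^2 \Card^{1/\RecDepth})$-approximation for \DRCSMa{}, with guarantee in terms of $\Card$. Its running time is $O(\Card\, \NumVertices^{\RecDepth+O(1)}\cdot\mathrm{poly})$, coming from enumerating intermediate subtree roots at each of the $\RecDepth$ recursion levels.

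\textbf{Step 2: from $\Card$ to $\OptRadius$.} We feed \RAd{} into \GR{}. This yields, up to constants, the same bicriteria factors with $\Card$ replaced by $\OptRadius$, namely roughly $(\frac{c_1}{\RecDepth+1}, c_2(\RecDepth+1)^2\OptRadius^{1/\RecDepth})$. The mechanism is as follows. The optimal out-tree of height $\OptRadius$ is split, at the root, into subtrees that are greedily selected by size class. Each subtree has height below $\OptRadius$, so \RAd{} run on it with a budget comparable to its own size loses only $\OptRadius^{1/\RecDepth}$ in size once \GR{} guesses sizes geometrically; here the guessing granularity $(1+2\Error)$ produces the $(1+2\Error)^3$ factors. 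The running time contributes the extra $\OptRadius^{2\lceil 1/\Error\rceil+1}$ through the guesses of subtree sizes at each recursion level.

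\textbf{Step 3: enforcing size $(1+\Errorv)\Card$.} The output of Step 2 is a rooted out-tree $\SetS$ with at most $\VioFactor \Card$ edges, where $\VioFactor = O((\RecDepth+1)^2\OptRadius^{1/\RecDepth})$, and with value at least $\AppFactor\,\OptValue$. I would invoke the standard tree-decomposition lemma: any rooted out-tree with $L$ edges can be partitioned into $O(L/(\Errorv\Card))$ edge-disjoint subtrees, each of at most $\Errorv \Card$ edges. By submodularity and averaging, one of these subtrees, call it $\SetS'$, has value at least an $\Omega(\frac{\Errorv}{\VioFactor})$ fraction of $\Funcf(\SetS)$. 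Attaching $\SetS'$ to the root $\Vertv$ requires a path, which may be long. To handle this, I would instead choose the decomposition pieces so that they are connected to the root within the height bound. One option is to run \GR{} with target size $\Errorv\Card$ in the top-level pieces, so that the root path has at most $\OptRadius \le \Card$ edges. Concretely, each piece is taken to include its path to $\Vertv$, which costs at most $\Card$ extra edges, giving total size $(1+\Errorv)\Card$.

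Combining the factors gives $\frac{\Errorv}{(\RecDepth+1)^3 \OptRadius^{1/\RecDepth}}$. Using $\RecDepth+1 \le \frac{1+\Error}{\Error}$ and $\OptRadius^{1/\RecDepth}\le \OptRadius^{\Error}$, and accounting for the constants from \GR{}, this yields $\frac{\Errorv\Error^3}{16(1+2\Error)^3\OptRadius^{\Error}}$. The main obstacle is Step 3: one must keep connectivity to the fixed root while pruning to size $(1+\Errorv)\Card$ without losing more than a factor $\Errorv/\VioFactor$. Controlling the root-path length by the $\OptRadius$ height bound is the delicate point; I would try the path-plus-piece argument first.
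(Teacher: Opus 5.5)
Your overall architecture matches the paper's. You run \textsc{GreedyRadius} with \textsc{RecApprox}-$d$, $d=\lceil 1/\varepsilon\rceil$, as subroutine; Theorem~\ref{thm:GR} then gives a $(\frac{1}{2(d+1)}, 4(d+1)^2 r^{1/d})$-approximation. You then split the output into at most $\lfloor 8\beta(r)/\delta\rfloor$ pieces of at most $\delta k$ edges, keep the best one, and reconnect it to $v$ by a shortest path.

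\textbf{The gap: Step 3 is not closed.} You flag Step 3 as the delicate point, and nothing in your argument bounds the $v$-to-piece path by $k$.
\begin{itemize}
\item The output $S$ can have up to $4\beta(r)k$ edges, and its vertices can lie farther than $k$ from $v$ in $G$. For example, with $r=k$, \textsc{GreedyRadius} roots a subtree at distance up to $r-1$, and the subroutine's output, of size up to $\beta(r)r$, extends beyond that.
\item Your suggested fixes do not work. The pieces from Lemma~\ref{lem:tree decomp} have roots anywhere in $S$, not within distance $r$ of $v$. Running \textsc{GreedyRadius} ``with target size $\delta k$'' would approximate a different optimum.
\end{itemize}
The missing idea is a one-line preprocessing step: before running \textsc{GreedyRadius}, delete every vertex $w$ with $\mathrm{dist}(v,w)>k$.
\begin{itemize}
\item This preserves the optimum, because every vertex of an out-tree rooted at $v$ with $k$ edges is within distance $k$ of $v$.
\item Shortest $v$--$w$ paths survive for every retained $w$.
\item Hence the root of the chosen piece $S_{j^*}$ is reachable in at most $k$ edges. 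This gives total size at most $(1+\delta)k$, and value at least $f(S_{j^*})\ge \frac{\delta}{8\beta(r)}f(S)$ by monotonicity and subadditivity.
\end{itemize}

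\textbf{Step 2 and the constants are also off.}
\begin{itemize}
\item \textsc{GreedyRadius} does no splitting at the root and no geometric size guessing. It partitions $T^*$ via Lemma~\ref{lem:tree decomp} with $s=r/2$ into at most $\lfloor 2k/r\rfloor$ pieces of at most $r$ edges. Their roots lie within distance $r-1$ of $v$, and it calls the subroutine with size bound $r$.
\item The $r^{1/d}$ comes from that size bound $r$, not from a height bound. Child subtrees of the root have height below $r$ but can have size up to $k$, so your described mechanism would lose $k^{1/d}$.
\item The factor $(1+2\varepsilon)^3$ is not a guessing granularity. It comes from $d+1=\lceil 1/\varepsilon\rceil+1\le \frac{1}{\varepsilon}+2=\frac{1+2\varepsilon}{\varepsilon}$.
\item Your bound $d+1\le\frac{1+\varepsilon}{\varepsilon}$ is false unless $1/\varepsilon$ is an integer; $\varepsilon=0.4$ gives $d+1=4>3.5$.
\item With the correct bound and $1/d\le\varepsilon$, you get $\frac{\delta}{16(d+1)^3 r^{1/d}}\ge\frac{\delta\varepsilon^3}{16(1+2\varepsilon)^3 r^{\varepsilon}}$.
\item The running time is just $\lfloor 2k/r\rfloor n$ calls to \textsc{RecApprox}-$d$ on size-$r$ instances, each costing $O(n^{d+1}r^{2d+2})$. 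That totals $O(k\,n^{d+2}r^{2d+1})$.
\end{itemize}
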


\paragraph*{Our Algorithmic Framework.}
Our framework for achieving Theorems~\ref{thm:GR+RA for DCSM} and \ref{thm:GR+RA for DRCSM} involves combining two novel algorithms for \DRCSMa{}, namely \GR{} and \RAd{}, whose guarantees we outline below.

\GR{} takes a bicriteria $(\AppFactor(\Card), \VioFactor(\Card))$-approximation algorithm for \DRCSMa{} that runs in time $\Time(\NumVertices, \Card)$ and outputs a bicriteria $( \frac{1}{2} \AppFactor(\OptRadius), 4\VioFactor(\OptRadius) )$-approximation in time $O(\frac{\Card \NumVertices}{\OptRadius} \Time(\NumVertices, \OptRadius))$; we formally state these guarantees in Theorem~\ref{thm:GR}. 
By selecting a valuable size-$\Card$ out-subtree of this solution, we obtain a feasible $\frac{\AppFactor(\OptRadius)}{16\VioFactor(\OptRadius)}$-approximation for (Directed) \CSMa{}; we formally state this result in Corollary~\ref{cor:GR for DCSM}. Alternatively, by applying a simple trimming process, we obtain a bicriteria $(\frac{\Errorv \AppFactor(\OptRadius)}{16 \VioFactor(\OptRadius)}, 1+\Errorv)$-approximation for \DRCSMa{}; we formally state this result in Corollary~\ref{cor:GR for DRCSM}.

We further propose \RAd{} as an algorithm that can be used as a subroutine in our framework. For every integer $\RecDepth \geq 1$, \RAd{} is a bicriteria $(\frac{1}{\RecDepth+1}, (\RecDepth+1)^{2} \Card^{\frac{1}{\RecDepth}})$-approximation algorithm for \DRCSMa{} that runs in time $O(\NumVertices^{\RecDepth+1} \Card^{2\RecDepth+2})$. We formally state the performance guarantees of \RAd{} in Theorem~\ref{thm:RAd}.

By running \GR{} with subroutine \RAd{}, we obtain a bicriteria $(\frac{1}{2(\RecDepth+1)}, 4(\RecDepth+1)^{2} \OptRadius^{\frac{1}{\RecDepth}})$-approximation algorithm for \DRCSMa{} that runs in time $O(\Card \NumVertices^{\RecDepth+2} \OptRadius^{2\RecDepth+1})$. Then, by selecting a size-$\Card$ out-subtree (Corollary~\ref{cor:GR for DCSM}), we obtain a feasible $\frac{1}{16(\RecDepth+1)^{3} \OptRadius^{1/\RecDepth}}$-approximation for (Directed) \CSMa{}. Otherwise, by the trimming process (Corollary~\ref{cor:GR for DRCSM}), we obtain a bicriteria $(\frac{\Errorv}{16 (\RecDepth+1)^{3} \OptRadius^{1/\RecDepth}}, 1+\Errorv)$-approximation for \DRCSMa{}. We assign $\RecDepth = \lceil \frac{1}{\Error}\rceil$ in the above approximation factors and use the bounds $\frac{1}{\Error} \leq \lceil \frac{1}{\Error} \rceil \leq \frac{1}{\Error} + 1$ to derive Theorems~\ref{thm:GR+RA for DCSM} and \ref{thm:GR+RA for DRCSM} respectively.

In general, we can plug any algorithm for \DRCSMa{} into our framework to convert its (bicriteria) approximation factor's dependence on $\Card$ to $\OptRadius$. For example, we can plug in the quasi-polynomial time
$\Omega(\frac{\log \log \Card}{\log \Card})$-approximation algorithm by Ghuge and Nagarajan \cite{Ghuge2022} to achieve a bicriteria $(\Omega(\frac{\log \log \OptRadius}{\log \OptRadius}), 4)$-approximation.

\subsection{Technical Overview}
Here we outline the main ideas used in \GR{} and \RAd{}, while comparing with previous approaches. Both of our algorithms aim to construct an out-tree for \DRCSMa{} by greedily combining valuable out-subtrees, rather than valuable vertices, along with connecting paths from the root vertex.

\paragraph*{\GR{}.}
We first explain why the previous-best algorithm~\cite{Hochbaum2020} only achieves a $\Omega(\frac{1}{\OptRadius})$-approximation for \CSMa{}.
This algorithm first initializes a solution,~$\SetS$, with the root~$\Root$, and then uses the following greedy approach. While $\SetS$ can be feasibly updated: (1) find a vertex,~$\Bestwa$, of distance at most~$\OptRadius$ from~$\Root$, whose shortest path from~$\Root$ has maximum marginal gain to $\SetS$, and (2) add this shortest $\Root$--$\Bestwa$ path to~$\SetS$. The issue here is that, for a given~$\Bestwa$, there may be many shortest $\Root$--$\Bestwa$ paths (which cannot all be considered in polynomial time).
This means it is possible for the algorithm to just add $\Root$--$\Bestwa$ paths in which the only valuable vertex is $\Bestwa$. Thus, it may incur up to $\OptRadius$ edges to add each valuable vertex to the solution.

To improve on the above approach, \GR{} constructs a solution $\SetS$ by greedily adding valuable out-subtrees. \GR{} constructs each added out-subtree by guessing the root, $\Bestwa$, of a valuable size-$\OptRadius$ out-subtree, $\BestOptSuba$, of the optimal out-tree; then it calls the given $(\AppFactor(\Card), \VioFactor(\Card))$-approximation subroutine to solve a sub-instance of \DRCSMa{} with size constraint $\OptRadius$ and the guess of root $\Bestwa$. The subroutine must output an $(\AppFactor(\OptRadius), \VioFactor(\OptRadius))$-approximation of $\BestOptSuba$. By combining sufficiently many of these out-subtrees, along with connecting paths of length at most $\OptRadius-1$, the solution $\SetS$ is a $(\frac{1}{2}\AppFactor(\OptRadius), 4\VioFactor(\OptRadius))$-approximate out-tree.

\paragraph*{\texorpdfstring{\RAd{}}{\RA-\textit{d}}.}
\RAd{} is a procedure that initializes our main recursive algorithm, \RA{}, so that the maximum depth of its recursion tree (counted by `edges') is $\RecDepth$; we call $\RecDepth$ the \emph{recursion depth} of \RAd{}. Note that we use $\TargetSize$ to denote the size constraint of a sub-instance of \DRCSMa{} to distinguish it from the size constraint, $\Card$, of a main instance of \DRCSMa{}.

\RA{} uses a recursive greedy strategy that generalizes the strategy by Kuo et al.~\cite{kuo2014maximizing} for implicitly achieving a bicriteria $(1 - \frac{1}{e}, \TargetSize)$-approximation for \DRCSMa{} (which they used to achieve an overall $\Omega(\frac{1}{\sqrt{\Card}})$-approximation for \CSMa{}).
We explain the basic approach of Kuo et al.~\cite{kuo2014maximizing} first: given an instance of \DRCSMa{} with size constraint $\TargetSize$ and root $\Root$, initialize a solution, $\SetS$, with the root $\Root$. Then greedily add $\TargetSize$ vertices, of distance at most $\TargetSize$ from $\Vertv$, to $\SetS$, along with connecting paths from $\Vertv$.

To implement a recursive greedy strategy, \RA{} takes in its input a value $\TargetDiv > 1$; this is to reduce the size constraint to at most $\frac{\TargetSize}{\TargetDiv}$ in the recursive calls. Then \RA{} constructs a solution $\SetS$ by greedily adding valuable out-subtrees. \RA{} constructs each added out-subtree by guessing the size, $\BestSubsizea \leq \frac{\TargetSize}{\TargetDiv}$, and root, $\Bestwa$, of a valuable out-subtree, $\BestOptSuba$, of the optimal out-tree; then it makes a recursive call to solve a sub-instance of \DRCSMa{} with the guesses of size $\BestSubsizea$ and root $\Bestwa$. \RA{} continues adding valuable out-subtrees until the sum of their corresponding size-guesses is $\TargetSize$. \RA{} also adds connecting paths of length at most $\TargetSize - 1$.

Lastly, to achieve the bicriteria $(\frac{1}{\RecDepth+1}, (\RecDepth+1)^2 \Card^{\frac{1}{\RecDepth}})$-approximation for \DRCSMa{}, \RAd{} calls \RA{} with size constraint $\Card$, root $\Vertv$, and value $\TargetDiv = \Card^{\frac{1}{\RecDepth}}$, and returns the solution output by \RA{}. We can see that the assignment of $\TargetDiv = \Card^{\frac{1}{\RecDepth}}$ ensures a recursion depth of $\RecDepth$ since it makes the size constraints in a chain of recursive calls at most $\Card^{\frac{\RecDepth}{\RecDepth}}, \Card^{\frac{\RecDepth-1}{\RecDepth}}, \dots, \Card^{\frac{1}{\RecDepth}}, \Card^{\frac{0}{\RecDepth}}$.

We point out that \RA{} uses a similar recursion to an existing algorithm by Calinescu and Zelikovsky \cite{Calinescu2005} for the problem of \PDST{}. Using additional ideas, it is possible to adapt their algorithm to achieve a bicriteria approximation for \DRCSMa{} similar to that of \RA{}, but the resulting running time is worse than that of \RA{} by a polynomial factor. We compare our algorithm with theirs and sketch how to adapt it for \DRCSMa{} in the related work (Section~\ref{sec:related work}).

\subsection{Paper Structure}
We present preliminaries in Section~\ref{sec:prelim}, \GR{} in Section~\ref{sec:poly time approx algorithm}, \RA{} and \RAd{} in Section~\ref{sec:rec greedy approx algorithm}, related work in Section~\ref{sec:related work}, and conclusions in Section~\ref{sec:con}.

\section{Preliminaries} \label{sec:prelim}
In this paper, $\Graph$ denotes an undirected or directed graph, $\VertexSet(\Graph)$ denotes the vertex set of $\Graph$, and $\EdgeSet(\Graph)$ denotes the edge set of $\Graph$. Define functions $\NumVertices(\Graph) \coloneqq |\VertexSet(\Graph)|$, which is the \emph{cardinality} of $\Graph$, and $\NumEdges(\Graph) \coloneqq |\EdgeSet(\Graph)|$, which is the \emph{size} of $\Graph$. If $\Graph$ refers to the input graph of our problem, we simply use $\VertexSet$, $\EdgeSet$, $\NumVertices$, and $\NumEdges$ to denote the vertex set, edge set, cardinality, and size of $\Graph$ respectively.

Given an undirected or directed graph $\Graph$, the \emph{distance} from vertex $\Vertv$ to vertex $\Vertw$, $\Dist_{\Graph}(\Vertv, \Vertw)$, is the number of edges in a shortest (directed) path from $\Vertv$ to $\Vertw$. If $\Graph$ refers to the graph in the problem input, we may simply use $\Dist(\Vertv, \Vertw)$ to denote the distance from $\Vertv$ to $\Vertw$.

\paragraph*{Undirected Graph Terms.}
Given an undirected graph $\SetT$, the \emph{eccentricity} of a vertex $\Vertv \in \VertexSet(\SetT)$, $\Ecc_{\SetT}(\Vertv)$, is the maximum distance from $\Vertv$ to $\Vertw$ over all $\Vertw \in \VertexSet(\SetT)$; that is, $\Ecc_{\SetT}(\Vertv) \coloneqq \max_{\Vertw \in \VertexSet(\SetT)} \Dist_{\SetT}(\Vertv, \Vertw)$.

The \emph{radius}, $\Radius$, of $\SetT$ is the minimum eccentricity over all $\Vertv \in \VertexSet(\SetT)$; that is, $\Radius = \min_{\Vertv \in \VertexSet(\SetT)} \Ecc_{\SetT}(\Vertv)$. A \emph{center} of $\SetT$ is a vertex $\Vertv \in \VertexSet(\SetT)$ whose eccentricity is equal to the radius (there can be more than one center).

\paragraph*{Directed Graph Terms.}
A directed graph $\SetT$ is an \emph{out-tree} (i.e., \emph{arborescence}) if there exists a vertex $\Vertv \in \VertexSet(\SetT)$ such that for every $\Vertw \in \VertexSet(\SetT)$, there is exactly one directed path in $\SetT$ from $\Vertv$ to $\Vertw$. The vertex $\Vertv$ is called the \emph{root} of $\SetT$.

The \emph{radius} (i.e., \emph{height}), $\Height$, of an out-tree $\SetT$ is the maximum distance from its root $\Vertv$ to $\Vertw$ over all $\Vertw \in \VertexSet(\SetT)$; that is, $\Height = \max_{\Vertw \in \VertexSet(\SetT)} \Dist_{\SetT}(\Vertv, \Vertw)$.

\paragraph*{Submodular Functions.}
Let $\Funcf$ be a set function whose ground set is $\VertexSet$. Given subsets $\SetX, \SetY \subseteq \VertexSet$, let $\Funcf(\SetX \mid \SetY) \coloneqq \Funcf(\SetX \cup \SetY) - \Funcf(\SetY)$, which is called the \emph{marginal gain} of $\SetX$ to $\SetY$. As an abuse of notation, we may apply $\Funcf$ to a subgraph $\SetS \subseteq \Graph$, which means applying it to the vertex set of $\SetS$. Also, we may apply $\Funcf$ to a single vertex $\Vertv \in \VertexSet$, which means applying it to the set $\{ \Vertv \}$.

We consider functions $\Funcf$ that are non-negative monotone submodular, as defined below.

\begin{definition}[Submodular function] \label{def:sub func}
    A set function $\Funcf : 2^{\GroundSet} \rightarrow \Reals$ is \emph{submodular} iff for all $\SetX, \SetY \subseteq \GroundSet$ where $\SetX \subseteq \SetY$, and for all $\Vertv \in \GroundSet \setminus \SetY,\, \Funcf( \Vertv \mid \SetX) \geq \Funcf( \Vertv \mid \SetY)$.
\end{definition}

A set function $\Funcf$ is \emph{monotone} iff for all $\SetX, \SetY \subseteq \GroundSet$ where $\SetX \subseteq \SetY,\, \Funcf(\SetX) \leq \Funcf(\SetY)$; and $\Funcf$ is \emph{non-negative} iff for all $\SetX \subseteq \GroundSet,\, \Funcf(\SetX) \geq 0$.

We make the standard assumption that $\Funcf$ (or the marginal gain function) is queried via a \emph{value oracle}; we specifically assume a \emph{strong} value oracle, which allows both feasible and infeasible sets to be queried. For convenience, we analyze the running time of each algorithm by only counting the number of value oracle queries it makes.

\paragraph*{Problem Definitions.}
The problem of \emph{\CSM{}} (\CSMa{}) is defined as follows. We are given an undirected graph $\Graph = (\VertexSet, \EdgeSet)$ with $\NumVertices$ vertices and $\NumEdges$ edges, a non-negative monotone submodular function $\Funcf : 2^\VertexSet \rightarrow \Reals_{\geq 0} $, and an integer $\Card \geq 1$. The goal is to select a tree $\SetS \subseteq \Graph$, with $\NumEdges(\SetS) \leq \Card$, that maximizes $\Funcf$.

The problem of \emph{\DCSM} (\DCSMa{}) generalizes \CSMa{} by letting $\Graph$ be a directed graph and additionally requiring that the selected $\SetS \subseteq \Graph$ is an out-tree. The problem of \emph{\DRCSM{}} (\DRCSMa{}) in turn generalizes \DCSMa{} by specifying a vertex $\Vertv$ to be the root of the selected out-tree $\SetS \subseteq \Graph$.

Note that we use $\Card$ to constrain the \emph{size}, i.e., number of edges, of $\SetS$. This is convenient for us since our analyses heavily rely on partitioning trees into edge-disjoint (out-)subtrees and constructing solutions out of edge-disjoint (out-)subtrees.

We use $\OptTree$ to denote an optimal (out-)tree to any one of our problems, and  $\Vertv$ and $\OptHeight$ to denote its root and radius (height) respectively.

\paragraph*{Bicriteria Approximation.}
Let $\AppFactor \in (0, 1]$ and $\VioFactor \geq 1$ be values, which we call the \emph{approximation factor} and the \emph{violation factor} respectively. Then, for any given instance of one of our problems, an \emph{$(\AppFactor, \VioFactor)$-approximation} is a solution $\SetS$ with value $\Funcf(\SetS) \geq \AppFactor \Funcf(\OptTree)$ and size $\NumEdges(\SetS) \leq \VioFactor \Card$. Further, for any one of our problems, an \emph{$(\AppFactor, \VioFactor)$-approximation algorithm} returns an $(\AppFactor, \VioFactor)$-approximation for every instance of the problem.

\paragraph*{Tree Partitioning.}
Throughout our analyses, we use Lemma~\ref{lem:tree decomp} below, which is a simplified version of Lemma 2 of Khani and Salavatipour \cite{Khani2014}.

\begin{lemma}[Tree partitioning] \label{lem:tree decomp}
    Let $\TreeWeight > 0$ be a real value, and $\SetT$ be an out-tree satisfying $\NumEdges(\SetT) \geq \TreeWeight$. Then $\SetT$ can be partitioned into $\OptNumSub$ edge-disjoint out-subtrees $\SetT_{1}, \dots, \SetT_{\Delta}$ such that for each $\OptSubInd \in [1,\OptNumSub]\colon 1 \leq \NumEdges(\SetT_{\OptSubInd}) \leq \lfloor 2\TreeWeight \rfloor$, and $1 \leq \OptNumSub \leq \lfloor \frac{\NumEdges(\SetT)}{\TreeWeight} \rfloor$.
\end{lemma}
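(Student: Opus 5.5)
The plan is to prove the tree-partitioning statement by a greedy bottom-up peeling argument, with induction on $\NumEdges(\SetT)$. Write $m = \NumEdges(\SetT)$. If $m \leq \lfloor 2\TreeWeight \rfloor$, take $\OptNumSub = 1$ and $\SetT_1 = \SetT$. Then $1 \leq \NumEdges(\SetT_1) \leq \lfloor 2\TreeWeight\rfloor$, since $m \geq \TreeWeight > 0$ forces $m \geq 1$. Also $\lfloor m/\TreeWeight \rfloor \geq 1$ because $m \geq \TreeWeight$.

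Otherwise $m > 2\TreeWeight$, and we will cut off a single piece $\SetT_1$ with $\TreeWeight \leq \NumEdges(\SetT_1) \leq \lfloor 2\TreeWeight\rfloor$ so that the remainder is still an out-tree (with the same root) of size at least $\TreeWeight$. To find this piece, for each vertex $u$ let $D(u)$ be the out-subtree of $\SetT$ consisting of $u$ and all its descendants. Choose a vertex $u$ of maximum depth subject to $\NumEdges(D(u)) \geq \TreeWeight$; such a $u$ exists because the root qualifies. Every child $c$ of $u$ then has $\NumEdges(D(c)) < \TreeWeight$, so each "branch" $B_c = D(c) \cup \{(u,c)\}$ has $\NumEdges(B_c) = \NumEdges(D(c)) + 1$. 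Since edge counts are integers and $\NumEdges(D(c)) < \TreeWeight$, we get $\NumEdges(D(c)) \leq \lceil \TreeWeight \rceil - 1$, and hence $\NumEdges(B_c) \leq \lceil \TreeWeight \rceil$.

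Next, add branches $B_c$ of $u$ one at a time until the accumulated edge count first reaches at least $\TreeWeight$. This must happen, because the branches together have $\NumEdges(D(u)) \geq \TreeWeight$ edges. Before the last branch is added the total is $< \TreeWeight$. The last branch adds at most $\lceil \TreeWeight \rceil$ edges, so the union $\SetT_1$ (the chosen branches, all rooted at $u$) has fewer than $\TreeWeight + \lceil\TreeWeight\rceil$ edges. This union is an out-subtree rooted at $u$ with at least $\TreeWeight$ edges.

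The main obstacle is getting the upper bound $\lfloor 2\TreeWeight\rfloor$ for non-integer $\TreeWeight$. Working with integers, the prefix total before the last branch is at most $\lceil\TreeWeight\rceil - 1$, and the last branch has at most $\lceil \TreeWeight \rceil$ edges, giving $\NumEdges(\SetT_1) \leq 2\lceil\TreeWeight\rceil - 1$. I would verify that $2\lceil \TreeWeight\rceil - 1 \leq \lfloor 2\TreeWeight \rfloor$. Write $\TreeWeight = a + \theta$ with $a$ an integer and $\theta \in [0,1)$. If $\theta = 0$, the left side is $2a - 1 \leq 2a$. If $\theta > 0$, the left side is $2a + 1 \leq \lfloor 2a + 2\theta\rfloor$ only when $\theta \geq 1/2$. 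So when $0 < \theta < 1/2$ the bound needs refinement. There I would use a sharper bound: if the prefix total $p$ satisfies $p \leq a$ and the last branch has at most $a+1$ edges, then whenever the sum exceeds $2a$, swap to taking only the last branch if that alone is at least $\TreeWeight$. Alternatively, if this case analysis fails, I would fall back on citing Lemma 2 of Khani and Salavatipour, of which this lemma is a simplification.

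Removing the edges of $\SetT_1$ leaves $\SetT$ still an out-tree with the same root, since $u$ keeps its remaining children and its own parent edge. The remainder has $m - \NumEdges(\SetT_1) \geq m - 2\TreeWeight > 0$ edges. To ensure the remainder has at least $\TreeWeight$ edges so that induction applies, note that if it has fewer than $\TreeWeight$ edges we can simply merge it into $\SetT_1$ instead; but this could break the size bound, so the better move is to stop and let the remainder be a final piece of size between $1$ and $\lfloor 2\TreeWeight\rfloor$. Otherwise, apply induction to the remainder: it yields at most $\lfloor (m - \NumEdges(\SetT_1))/\TreeWeight\rfloor \leq \lfloor m/\TreeWeight\rfloor - 1$ pieces, using $\NumEdges(\SetT_1) \geq \TreeWeight$. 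For the stopping case, the count is two pieces, and $m > 2\TreeWeight$ gives $\lfloor m/\TreeWeight\rfloor \geq 2$, so the bound on $\OptNumSub$ holds.
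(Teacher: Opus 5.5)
The paper gives no proof of this lemma; it cites it as a simplification of Lemma 2 of Khani and Salavatipour. Your bottom-up peeling is the standard argument for such partition lemmas, and its skeleton is sound. You take the deepest vertex $u$ with $m(D(u))\ge s$, peel off a pendant group of branches of $u$ with at least $s$ edges, and then recurse or stop. The remainder is an out-tree with the same root and at least one edge. The two-piece stopping case is fine because $m>2s$ gives $\lfloor m/s\rfloor\ge 2$. In the inductive case, $m(T_1)\ge s$ gives $\lfloor (m-m(T_1))/s\rfloor\le\lfloor m/s\rfloor-1$.

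There are two loose ends.

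First, you left the case $s=a+\theta$ with $0<\theta<\tfrac12$ hedged, but your swap works and needs no fallback citation.
\begin{itemize}
\item The prefix has $p\le a$ edges, and every branch has at most $\lceil s\rceil=a+1$ edges.
\item So the total can exceed $\lfloor 2s\rfloor=2a$ only when $p=a$ and the last branch has exactly $a+1>s$ edges.
\item In that case, taking that branch alone gives a pendant piece with between $s$ and $a+1\le 2a$ edges, provided $a\ge 1$.
\end{itemize}

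A cleaner organization is to split up front on the branches of $u$.
\begin{itemize}
\item If some branch $B_c$ has at least $s$ edges, take it alone. It has at most $\lceil s\rceil\le\lfloor 2s\rfloor$ edges.
\item Otherwise every branch has fewer than $s$ edges. Greedy accumulation then stops with fewer than $2s$ edges, hence at most $\lfloor 2s\rfloor$.
\end{itemize}
Either way the piece is pendant, so the remainder is still an out-tree, and the counting goes through unchanged.

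Second, the condition $a\ge 1$ is not an artifact of your method: the lemma as stated is false for $0<s<\tfrac12$. There $\lfloor 2s\rfloor=0$, while every piece must have at least one edge; take $T$ to be a single edge and $s=\tfrac14$. So your proof should explicitly assume $s\ge\tfrac12$, i.e., $\lfloor 2s\rfloor\ge 1$. Under that assumption both cases above go through: $\lceil s\rceil\le\lfloor 2s\rfloor$ holds exactly when $s\ge\tfrac12$. This assumption is satisfied in every application in the paper, where $2s\in\{r,k,\delta k\}$ with $r\ge 1$ and $\delta\ge 1/k$.
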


\section{\GR{}: Radius-Sensitive Approximation Algorithm} \label{sec:poly time approx algorithm}
We present the algorithm \GR{} for \DRCSMa{}, with pseudocode in Algorithm~\ref{alg:GR} and performance guarantees in Theorem~\ref{thm:GR}. We explain how to use \GR{} to achieve a feasible approximation for (Directed) \CSMa{} in Section~\ref{sec:feasible approx for DCSM} and a bicriteria approximation with $(1+\Errorv)$-violation factor for \DRCSMa{} in Section~\ref{sec:violation approx for DRCSM}.

\begin{algorithm}[ht]

\DontPrintSemicolon
\caption{\GR}
\label{alg:GR}

\KwIn{$\Graph = (\VertexSet, \EdgeSet)$: directed graph, $\Funcf : 2^\VertexSet \rightarrow \Reals_{\geq 0}$: value oracle, $\Card \geq 1$: size constraint, $\Vertv \in \VertexSet$: root, $\Height \in [1, \Card]$: radius of $\OptTree$, \Alg{}: $(\AppFactor, \VioFactor)$-approximation subroutine.}

\KwOut{$\SetS \subseteq \Graph$: an out-tree, with root $\Vertv$, that $( \frac{1}{2} \AppFactor(\Height), 4\VioFactor(\Height) )$-approximates $\OptTree$.}

$\SetS_{\leq 0} \gets (\{ \Vertv \}, \varnothing)$ \tcp*[r]{Initialize the output solution} \label{line:GR init sol}

$\SetW \gets $ set of vertices $\Vertw \in \VertexSet$ where $\Dist(\Vertv, \Vertw) \leq \Height - 1$\; \label{line:GR init dist r vertices}

$\NumSubtrees \gets \lfloor \frac{2 \Card}{\Height} \rfloor$\; \label{line:GR num updates}

\For(\tcp*[f]{Add approximate out-subtrees over $\NumSubtrees$ iterations}){$\IterInd = 1, \dots, \NumSubtrees$} {\label{line:GR update loop}
    \For{$\Vertw \in \SetW$}{\label{line:GR dist r vertex loop}
    
        $\SetS^{\Vertw}_{\IterInd} \gets \Alg( \Graph, \Funcf( \,\cdot \mid \SetS_{\leq \IterInd - 1}), \Height, \Vertw)$ \tcp*[r]{$\NumEdges(\SetS^{ \Vertw }_{\IterInd}) \leq \VioFactor(\Height) \Height$} \label{line:GR call RG}
    }
    
    $\Vertw_{\IterInd} \gets \argmax_{\Vertw \in \SetW} \Funcf(\SetS^{\Vertw}_{\IterInd} \mid \SetS_{\leq \IterInd - 1})$ \tcp*[r]{Select an out-subtree greedily} \label{line:GR select subtree greedily}

    $\SetP_{\IterInd} \gets$ shortest path from $\Vertv$ to $\Vertw_{\IterInd}$ \tcp*[r]{$\NumEdges(\SetP_{\IterInd}) \leq \Height \leq \VioFactor(\Height) \Height$} \label{line:GR path to w tree}
    
    $\SetS_{\leq \IterInd} \gets \SetS_{\leq \IterInd-1} \cup \SetS^{ \Vertw_{\IterInd} }_{\IterInd} \cup \SetP_{\IterInd}$ \tcp*[r]{$\NumEdges(\SetS^{ \Vertw_{\IterInd} }_{\IterInd} \cup \SetP_{\IterInd}) \leq 2\VioFactor(\Height) \Height$} \label{line:GR update sol}
}

$\SetS \gets \SetS_{\leq \IterInd}$ \;

\KwRet{$\SetS$}

\end{algorithm}

\subsection{Overview of \GR{}}
\GR{} takes as input a directed graph $\Graph$, a non-negative monotone submodular function $\Funcf$, a size constraint $\Card \geq 1$, a root vertex $\Vertv$, the optimal solution's radius $\Radius$, and an $(\AppFactor(\Card), \VioFactor(\Card))$-approximation subroutine \Alg{}; we assume that \Alg{} takes an instance $(\Graph, \Funcf, \Card, \Vertv)$ of \DRCSMa{} as input. \GR{} outputs an out-tree $\SetS \subseteq \Graph$, with root $\Vertv$, that $(\frac{1}{2} \AppFactor(\Height), 4\VioFactor(\Height) )$-approximates $\OptTree$.

To give a high-level overview of how \GR{} works, we first assume $\OptTree$ is partitioned into $\OptNumSub \leq \lfloor \frac{2 \Card}{\OptRadius} \rfloor$ edge-disjoint out-subtrees, $\OptTree_{1}, \dots, \OptTree_{\OptNumSub}$, each of size at most $\OptRadius$ (Corollary~\ref{cor:opt tree decomp}). The roots of these out-subtrees are all of distance $\OptRadius-1$ from $\Root$. \GR{} initializes the output solution, $\SetS$, with the root,~$\Root$, and updates it over $\NumSubtrees = \lfloor \frac{2 \Card}{\OptRadius} \rfloor$ iterations. Let $\SetS_{\leq 0} = (\{\Vertv\}, \varnothing)$ and, for each $\IterInd \in [1, \NumSubtrees]$, let $\SetS_{\leq \IterInd}$ be the $\IterInd$th partial solution, $\BestOptSub$ be the out-subtree from $\OptTree_{1}, \dots, \OptTree_{\OptNumSub}$ with maximum marginal gain to $\SetS_{\leq \IterInd-1}$, and $\Bestw$ be the root of $\BestOptSub$. Then, in the $\IterInd$th iteration, \GR{} performs these steps: (1) guess $\Vertw_{\IterInd} = \Bestw$, (2) call \Alg{} to construct an out-subtree $\SetS^{ \Vertw_{\IterInd} }_{\IterInd}$, with root $\Vertw_{\IterInd}$, that $(\AppFactor(\Radius), \VioFactor(\Radius))$-approximates $\BestOptSub$, and (3) add $\SetS^{ \Vertw_{\IterInd} }_{\IterInd}$ and a $\Vertv$--$\Vertw_{\IterInd}$ connecting path to $\SetS_{\leq \IterInd-1}$ to get $\SetS_{\leq \IterInd}$.

\subsection{Analysis of \GR{}}
Our main result is Theorem~\ref{thm:GR} below. The approximation factor follows from Lemma~\ref{lem:GR approx factor} and the violation factor follows from Lemma~\ref{lem:GR violation factor}. The running time easily follows from \GR{} making at most $\lfloor \frac{2\Card}{\OptRadius} \rfloor \NumVertices$ calls to both \Alg{} and $\Funcf$ (in the statement of Theorem~\ref{thm:GR}, $\Time(\NumVertices, \OptRadius)$ is the running time of \Alg{} with an input graph $\Graph$ of $\NumVertices$ vertices, and size constraint $\OptRadius$).

\begin{theorem} \label{thm:GR}
    Let $(\Graph, \Funcf, \Card, \Vertv)$ be an instance of \DRCSMa{} and \Alg{} be an $(\AppFactor(\Card), \VioFactor(\Card))$-approximation algorithm for \DRCSMa{} that runs in time $\Time(\NumVertices, \Card)$. Then \GR{}, with subroutine \Alg{}, is a $( \frac{1}{2} \AppFactor(\Height), 4\VioFactor(\Height) )$-approximation algorithm for \DRCSMa{} that runs in time $O(\frac{\Card\NumVertices}{\OptRadius} \Time(\NumVertices, \Radius))$.
\end{theorem}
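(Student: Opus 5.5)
The proof splits into three parts: the violation factor, the running time, and the approximation factor. The main work is the approximation factor.

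\emph{Size and running time.} By Line~\ref{line:GR update sol}, each of the $\NumSubtrees = \lfloor \frac{2\Card}{\Height}\rfloor$ iterations adds at most $\VioFactor(\Height)\Height$ edges from $\SetS^{\Vertw_\IterInd}_\IterInd$. It also adds at most $\Height-1 \le \VioFactor(\Height)\Height$ edges from $\SetP_\IterInd$. Hence $\NumEdges(\SetS) \le \frac{2\Card}{\Height}\cdot 2\VioFactor(\Height)\Height = 4\VioFactor(\Height)\Card$.

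We must also check that $\SetS$ remains an out-tree rooted at $\Vertv$. The union of out-trees and paths all reachable from $\Vertv$ is a connected subgraph in which every vertex is reachable from $\Vertv$. Taking a BFS out-arborescence of it from $\Vertv$ gives an out-tree with the same vertex set, so $\Funcf$ is unchanged and the number of edges does not increase.

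For the running time, there are $\NumSubtrees|\SetW| \le \frac{2\Card}{\Height}\NumVertices$ calls to \Alg{}, each with size constraint $\Height$. This gives the bound $O(\frac{\Card\NumVertices}{\Height}\Time(\NumVertices,\Height))$.

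\emph{Approximation.} Apply Lemma~\ref{lem:tree decomp} to $\OptTree$ with $\TreeWeight = \Height/2$. We may assume $\NumEdges(\OptTree)\ge \Height/2$, since otherwise we pad or handle the case trivially. This yields $\OptNumSub \le \lfloor 2\Card/\Height\rfloor = \NumSubtrees$ edge-disjoint out-subtrees $\OptTree_1,\dots,\OptTree_\OptNumSub$, each with at most $\Height$ edges. Each root $\Bestw$ lies on $\OptTree$ and is not a leaf of any subtree, so $\Dist(\Vertv,\Bestw)\le \Height-1$ and $\Bestw\in\SetW$.

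In iteration $\IterInd$, let $\BestOptSub$ be the subtree maximizing $\Funcf(\OptTree_\OptSubInd\mid \SetS_{\le\IterInd-1})$. The function $\Funcf(\cdot\mid\SetS_{\le\IterInd-1})$ is non-negative, monotone and submodular. Moreover, $\BestOptSub$ is feasible for the sub-instance with root $\Bestw$ and size constraint $\Height$. Therefore the call for $\Vertw=\Bestw$ returns $\SetS^{\Bestw}_\IterInd$ with marginal gain at least $\AppFactor(\Height)\Funcf(\BestOptSub\mid\SetS_{\le\IterInd-1})$. The greedy choice on Line~\ref{line:GR select subtree greedily} can only do better, and adding $\SetP_\IterInd$ does not hurt by monotonicity.

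By submodularity and monotonicity,
\[
\Funcf(\BestOptSub\mid\SetS_{\le\IterInd-1}) \ge \frac{1}{\OptNumSub}\sum_\OptSubInd \Funcf(\OptTree_\OptSubInd\mid\SetS_{\le\IterInd-1}) \ge \frac{1}{\NumSubtrees}\bigl(\Funcf(\OptTree)-\Funcf(\SetS_{\le\IterInd-1})\bigr).
\]
Hence $\Funcf(\OptTree)-\Funcf(\SetS_{\le\IterInd}) \le (1-\frac{\AppFactor(\Height)}{\NumSubtrees})\bigl(\Funcf(\OptTree)-\Funcf(\SetS_{\le\IterInd-1})\bigr)$. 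Unrolling this recurrence over the $\NumSubtrees$ iterations gives
\[
\Funcf(\SetS)\ge (1-e^{-\AppFactor(\Height)})\Funcf(\OptTree)\ge (1-1/e)\AppFactor(\Height)\Funcf(\OptTree)\ge \tfrac12\AppFactor(\Height)\Funcf(\OptTree),
\]
where the middle inequality uses concavity of $1-e^{-x}$ on $[0,1]$.

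\emph{Main obstacle.} The delicate step is the decomposition. We need the subtree count $\OptNumSub$ to be at most the iteration count $\NumSubtrees$, and every subtree root to lie within distance $\Height-1$ of $\Vertv$. The choice $\TreeWeight=\Height/2$ gives both the size bound $\lfloor 2\TreeWeight\rfloor=\Height$ and the count bound. The root-distance claim needs the observation that a subtree root with at least one edge below it cannot sit at the maximum depth $\Height$. The degenerate cases, namely $\Height$ small or $\OptTree$ having fewer than $\Height/2$ edges, are absorbed by taking a single subtree equal to $\OptTree$ itself, rooted at $\Vertv\in\SetW$.
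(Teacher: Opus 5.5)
Your proposal is correct and follows the paper's proof essentially step for step. It uses the same decomposition of $\OptTree$ via Lemma~\ref{lem:tree decomp} with $\TreeWeight = \Height/2$, the same greedy recurrence with factor $(1 - \AppFactor(\Height)/\NumSubtrees)$ unrolled over $\NumSubtrees = \lfloor 2\Card/\Height \rfloor$ iterations, and the same per-iteration edge bound of $2\VioFactor(\Height)\Height$, while adding details the paper leaves implicit (subtree roots have depth at most $\Height-1$, working with $\OptNumSub \le \NumSubtrees$ instead of assuming equality, and extracting a spanning out-arborescence so the union is genuinely an out-tree).
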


\paragraph*{Analysis Notation for \GR{}.}
We use the following notation based on the pseudocode of \GR{} (Algorithm~\ref{alg:GR}). Recall that $\OptHeight$ denotes the radius of the optimal out-tree $\OptTree$. Let $\NumSubtrees = \lfloor \frac{2 \Card}{\Height} \rfloor$ denote the total number of updates \GR{} makes in Line~\ref{line:GR update sol} to construct the output solution. Let $\SetS_{\leq 0} = ( \{ \Vertv \}, \varnothing)$ denote the initial state of the output solution as in Line~\ref{line:GR init sol}, and $\SetS_{\leq \IterInd}$ denote the $\IterInd$th partial solution as in Line~\ref{line:GR update sol} of the $\IterInd$th iteration of the Line~\ref{line:GR update loop} loop. Further, let $\SetS^{\Vertw_{\IterInd}}_{\IterInd}$ denote the $\IterInd$th approximate out-subtree appended to the output solution, and $\SetP_{\IterInd}$ denote the path added to connect $\Vertv$ to $\Vertw_{\IterInd}$, the root of $\SetS^{\Vertw_{\IterInd}}_{\IterInd}$.

\paragraph*{Partitioning the Optimal Solution for \GR{}.} 
First, an optimal out-tree $\OptTree$ can be partitioned into edge-disjoint out-subtrees each of size at most $\Height$ as in Corollary~\ref{cor:opt tree decomp} below. This corollary directly follows from Lemma~\ref{lem:tree decomp} by setting $\TreeWeight = \frac{\Height}{2}$ and $\SetT = \OptTree$, where $\NumEdges(\OptTree) = \Card$.

\begin{corollary} \label{cor:opt tree decomp}
    Let $\Height \geq 1$ be an integer. Then $\OptTree$ can be partitioned into $\OptNumSub$ edge-disjoint out-subtrees $\OptTree_{1}, \dots, \OptTree_{\Delta}$ such that for each $\OptSubInd \in [1,\OptNumSub] \colon 1 \leq \NumEdges(\OptTree_{\OptSubInd}) \leq \Height$, and $1 \leq \OptNumSub \leq \lfloor \frac{2 \Card}{\Height} \rfloor$.
\end{corollary}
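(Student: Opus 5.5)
We obtain Corollary~\ref{cor:opt tree decomp} by applying Lemma~\ref{lem:tree decomp} with $\TreeWeight = \frac{\Height}{2}$ and $\SetT = \OptTree$, and then simplifying the bounds it produces. Before invoking the lemma we check its hypothesis $\NumEdges(\OptTree) \geq \TreeWeight$. We may assume $\NumEdges(\OptTree) = \Card$: if an optimal out-tree has fewer edges, monotonicity of $\Funcf$ lets us extend it to exactly $\Card$ edges without losing value, and the paper states $\NumEdges(\OptTree) = \Card$. Since $\Height \in [1, \Card]$, we get $\TreeWeight = \frac{\Height}{2} \leq \Card = \NumEdges(\OptTree)$, so the lemma applies. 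It also needs $\TreeWeight > 0$, which holds because $\Height \geq 1$.

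Lemma~\ref{lem:tree decomp} then gives a partition of $\OptTree$ into edge-disjoint out-subtrees $\OptTree_{1}, \dots, \OptTree_{\OptNumSub}$. Each part satisfies $1 \leq \NumEdges(\OptTree_{\OptSubInd}) \leq \lfloor 2 \cdot \tfrac{\Height}{2} \rfloor$. Since $\Height$ is an integer, $\lfloor 2 \cdot \tfrac{\Height}{2} \rfloor = \lfloor \Height \rfloor = \Height$, which is the claimed size bound. The number of parts satisfies $1 \leq \OptNumSub \leq \lfloor \NumEdges(\OptTree)/\TreeWeight \rfloor$. Substituting $\NumEdges(\OptTree) = \Card$ and $\TreeWeight = \Height/2$ gives $\lfloor \Card/(\Height/2) \rfloor = \lfloor \frac{2\Card}{\Height} \rfloor$, which is the claimed bound on $\OptNumSub$.

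The argument is a direct instantiation of the lemma. The only points that need care are the hypothesis $\NumEdges(\OptTree) \geq \Height/2$, which rests on the size assumption $\NumEdges(\OptTree) = \Card$ and on $\Height \leq \Card$, and the use of integrality of $\Height$ to drop the floor in the size bound.
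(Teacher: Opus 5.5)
Your proposal is correct and is the paper's argument: instantiate Lemma~\ref{lem:tree decomp} with $\TreeWeight = \frac{\Height}{2}$ and $\SetT = \OptTree$, then use $\NumEdges(\OptTree) = \Card$ and the integrality of $\Height$ to reduce the bounds to $\Height$ and $\lfloor \frac{2\Card}{\Height} \rfloor$. One small note: the padding step for $\NumEdges(\OptTree) = \Card$ is unnecessary, and it can fail when fewer than $\Card+1$ vertices are reachable from the root. Since $\OptTree$ has height $\Height$, you already have $\NumEdges(\OptTree) \geq \Height \geq \frac{\Height}{2}$, and $\NumEdges(\OptTree) \leq \Card$ gives $\lfloor \NumEdges(\OptTree)/\TreeWeight \rfloor \leq \lfloor \frac{2\Card}{\Height} \rfloor$.
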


\paragraph*{Approximation Factor of \GR{}.}
\begin{lemma} \label{lem:GR approx factor}
    Let $(\Graph, \Funcf, \Card, \Vertv)$ be an instance of \DRCSMa{}. Then \GR{} outputs an out-tree $\SetS$ satisfying $\Funcf(\SetS) \geq \frac{1}{2} \AppFactor(\Height) \Funcf(\OptTree)$.
\end{lemma}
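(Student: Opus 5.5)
We prove the lemma with a standard greedy argument: each iteration captures an $\AppFactor(\Height)$ fraction of the best remaining piece of the optimum, and we sum these gains over the $\NumSubtrees$ iterations.

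\emph{Partition of the optimum.} We first apply Corollary~\ref{cor:opt tree decomp} to $\OptTree$, obtaining edge-disjoint out-subtrees $\OptTree_1,\dots,\OptTree_{\OptNumSub}$ with $\NumEdges(\OptTree_{\OptSubInd})\le \Height$ and $\OptNumSub\le \NumSubtrees=\lfloor 2\Card/\Height\rfloor$. Each $\OptTree_{\OptSubInd}$ has at least one edge, so its root is not a leaf of $\OptTree$. Hence that root lies at distance at most $\Height-1$ from $\Vertv$ in $\OptTree$, and therefore in $\Graph$, so it belongs to $\SetW$. Since the pieces cover all vertices of $\OptTree$, monotone submodularity gives
\[
\Funcf(\OptTree\mid \SetS_{\le \IterInd-1})\le \sum_{\OptSubInd}\Funcf(\OptTree_{\OptSubInd}\mid \SetS_{\le \IterInd-1}).
\]
Consequently the best piece $\BestOptSub$ satisfies $\Funcf(\BestOptSub\mid \SetS_{\le \IterInd-1})\ge \frac{1}{\OptNumSub}\bigl(\Funcf(\OptTree)-\Funcf(\SetS_{\le \IterInd-1})\bigr)$, using monotonicity for $\Funcf(\OptTree\cup\SetS_{\le\IterInd-1})\ge\Funcf(\OptTree)$.

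\emph{Per-iteration gain.} The function $\Funcf(\cdot\mid \SetS_{\le \IterInd-1})$ is non-negative monotone submodular. $\BestOptSub$ is a feasible solution of the sub-instance rooted at $\Bestw\in\SetW$ with size constraint $\Height$. So the call to \Alg{} at $\Vertw=\Bestw$ returns $\SetS^{\Bestw}_{\IterInd}$ with marginal gain at least $\AppFactor(\Height)\Funcf(\BestOptSub\mid \SetS_{\le \IterInd-1})$. The greedy choice of $\Vertw_{\IterInd}$ does at least as well, and adding $\SetP_{\IterInd}$ only helps by monotonicity. Writing $g_{\IterInd}=\Funcf(\SetS_{\le\IterInd})-\Funcf(\SetS_{\le\IterInd-1})$, we get
\[
g_{\IterInd}\ge \frac{\AppFactor(\Height)}{\NumSubtrees}\bigl(\Funcf(\OptTree)-\Funcf(\SetS_{\le\IterInd-1})\bigr),
\]
using $\OptNumSub\le\NumSubtrees$.

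\emph{Concluding.} The cleanest route is a case split. If at some iteration $\Funcf(\SetS_{\le\IterInd-1})\ge \frac12\Funcf(\OptTree)$, we are done by monotonicity. Otherwise every iteration gains at least $\frac{\AppFactor(\Height)}{2\NumSubtrees}\Funcf(\OptTree)$, and summing over the $\NumSubtrees$ iterations gives $\Funcf(\SetS)\ge\frac12\AppFactor(\Height)\Funcf(\OptTree)$. The recursion $1-(1-\AppFactor/\NumSubtrees)^{\NumSubtrees}\ge 1-e^{-\AppFactor}\ge \AppFactor/2$ would give the same bound, but the case split is simpler.

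\emph{Main obstacle.} The delicate step is the membership $\Bestw\in\SetW$. It relies on piece roots being non-leaves of $\OptTree$ and hence at depth at most $\Height-1$. We also need that applying \Alg{} to the marginal function legitimately compares against $\BestOptSub$ as a feasible solution of height and size within the sub-instance's constraint $\Height$. Both points should be stated carefully.
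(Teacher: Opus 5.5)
Your proposal is correct and follows the paper's proof: the same partition via Corollary~\ref{cor:opt tree decomp}, and the same per-iteration bound obtained by applying \Alg{} to the marginal function at the guessed root. The only difference is the final aggregation; the paper chains the multiplicative recursion to get $(1-e^{-\AppFactor(\Height)})\Funcf(\OptTree)$ and then relaxes it to $\frac12\AppFactor(\Height)\Funcf(\OptTree)$, while your threshold case split reaches $\frac12\AppFactor(\Height)$ directly with a slightly weaker intermediate constant. Your explicit non-leaf argument for $\Bestw\in\SetW$, and your use of $\OptNumSub\le\NumSubtrees$ in place of the paper's ``without loss of generality $\OptNumSub=\NumSubtrees$'', are small gains in rigor.
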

\begin{proof}
    As shown by Corollary~\ref{cor:opt tree decomp}, $\OptTree$ can be partitioned into $\OptNumSub$ edge-disjoint out-subtrees $\OptTree_{1}, \dots, \OptTree_{\OptNumSub}$ each of size at most $\Height$, where $1 \leq \OptNumSub \leq \lfloor \frac{2 \Card}{\Height} \rfloor$. Assume without loss of generality that $\OptNumSub = \NumSubtrees = \lfloor \frac{2 \Card}{\Height} \rfloor$, recalling that $\NumSubtrees$ denotes the total number of updates in Line~\ref{line:GR update sol} to construct $\SetS$.
    
    Now consider the $\IterInd$th iteration of the Line~\ref{line:GR update loop} update loop. Let $\BestOptSub$ be the subtree amongst $\OptTree_{1}, \dots, \OptTree_{\NumSubtrees}$ that maximizes $\Funcf(\BestOptSub \mid \SetS_{\leq \IterInd - 1})$. By a standard analysis from the submodularity of $\Funcf$ and the maximality of $\BestOptSub$, Ineq.~\eqref{eqn:opt subtree gain} below holds.
    \begin{align}
        \Funcf( \BestOptSub \mid \SetS_{\leq \IterInd - 1}) \geq \frac{1}{\NumSubtrees} ( \Funcf( \OptTree ) - \Funcf( \SetS_{\leq \IterInd - 1} ) ). \label{eqn:opt subtree gain}
    \end{align}
    
    Let $\Bestw$ be the root of $\BestOptSub$. It holds that $\Bestw$ is within distance $\Radius-1$ of $\Vertv$, so \GR{} must guess $\Vertw = \Bestw$ in the inner Line~\ref{line:GR dist r vertex loop} loop. Also, $\BestOptSub$ has size at most $\Height$ by definition. Thus, $\BestOptSub$ is a feasible solution to the instance $(\Graph, \Funcf( \,\cdot \mid \SetS_{\leq \IterInd - 1}), \Height, \Vertw)$ of \DRCSMa{}. This means that the corresponding call to \Alg{} in Line~\ref{line:GR call RG} outputs an out-tree $\SetS^{\Bestw}_{\IterInd}$ satisfying
    \begin{align}
        \Funcf( \SetS^{\Bestw}_{\IterInd} \mid \SetS_{\leq \IterInd - 1}) \geq \AppFactor(\Height) \Funcf( \BestOptSub \mid \SetS_{\leq \IterInd - 1}). \label{eqn:subtree approx factor}
    \end{align}
    In the $\IterInd$th Line~\ref{line:GR update sol} update, \RA{} actually adds the out-subtree with maximum marginal gain along with its connecting path, namely $\SetS^{\Vertw_{\IterInd}}_{\IterInd} \cup \SetP_{\IterInd}$. Hence, we prove Ineq.~\eqref{eqn:partial sol gap} for all iterations $\IterInd \in [1, \NumSubtrees]$ below; the 1st inequality holds by the monotonicity of $\Funcf$, the 2nd by the maximality of $\SetS^{\Vertw_{\IterInd}}_{\IterInd}$, the 3rd by Ineq.~\eqref{eqn:subtree approx factor}, and the 4th by Ineq.~\eqref{eqn:opt subtree gain}.
    \begin{align}
        \Funcf(\SetS^{\Vertw_{\IterInd}}_{\IterInd} \cup \SetP_{\IterInd} \mid \SetS_{\leq \IterInd - 1}) &\geq \Funcf(\SetS^{\Vertw_{\IterInd}}_{\IterInd} \mid \SetS_{\leq \IterInd - 1})
        \geq \Funcf(\SetS^{\Bestw}_{\IterInd} \mid \SetS_{\leq \IterInd - 1}) \notag \\
        &\geq \AppFactor(\Height) \Funcf( \BestOptSub \mid \SetS_{\leq \IterInd - 1}) \notag \\
        &\geq \frac{\AppFactor(\Height)}{\NumSubtrees} ( \Funcf( \OptTree ) - \Funcf( \SetS_{\leq \IterInd - 1} ) ), \notag \\
        \Funcf(\OptTree) - \Funcf(\SetS_{\leq \IterInd}) &\leq \left( 1 - \frac{\AppFactor(\Height)}{\NumSubtrees} \right) (  \Funcf(\OptTree) - \Funcf(\SetS_{\leq \IterInd-1}) ). \label{eqn:partial sol gap}
    \end{align}

    Finally, \GR{} makes $\NumSubtrees$ updates to construct $\SetS$, so we can chain Ineq.~\eqref{eqn:partial sol gap} $\NumSubtrees$ times. Thus, we prove the lemma below; the 2nd inequality holds by the non-negativity of $\Funcf$. 
    \begin{align}
        \Funcf(\OptTree) - \Funcf(\SetS) &\leq \left( 1 - \frac{\AppFactor(\Height)}{\NumSubtrees} \right)^{\NumSubtrees} \left(  \Funcf(\OptTree) - \Funcf(\SetS_{\leq 0}) \right) \notag \\
        &\leq \left( 1 - \frac{\AppFactor(\Height)}{\NumSubtrees} \right)^{\NumSubtrees} \Funcf(\OptTree) \notag \\
        &\leq e^{-\AppFactor(\Height)} \Funcf(\OptTree) \notag
        \leq \left(1- \frac{ \AppFactor(\Height) }{2} \right) \Funcf(\OptTree), \notag \\
        \Funcf(\SetS) &\geq \frac{ \AppFactor(\Height) }{2} \Funcf(\OptTree). \tag*{\qedhere}
    \end{align}    
\end{proof}

\paragraph*{Violation Factor of \GR{}.}
\begin{lemma} \label{lem:GR violation factor}
    Let $(\Graph, \Funcf, \Card, \Vertv)$ be an instance of \DRCSMa{}. Then \GR{} outputs an out-tree $\SetS$ satisfying $\NumEdges(\SetS) \leq 4\VioFactor(\Height) \Card$.
\end{lemma}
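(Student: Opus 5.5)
We bound the size of $\SetS$ by adding up, over the $\NumSubtrees$ iterations, the edges added in Line~\ref{line:GR update sol}. The output $\SetS = \SetS_{\leq \NumSubtrees}$ is a union of $\SetS_{\leq 0}$, which has no edges, and the graphs $\SetS^{\Vertw_{\IterInd}}_{\IterInd} \cup \SetP_{\IterInd}$ for $\IterInd \in [1, \NumSubtrees]$. So $\NumEdges(\SetS) \leq \sum_{\IterInd=1}^{\NumSubtrees} \bigl( \NumEdges(\SetS^{\Vertw_{\IterInd}}_{\IterInd}) + \NumEdges(\SetP_{\IterInd}) \bigr)$; overlaps between the pieces only reduce the count.

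For each iteration we bound the two terms separately. The call to \Alg{} in Line~\ref{line:GR call RG} has size constraint $\Height$, so its violation guarantee gives $\NumEdges(\SetS^{\Vertw_{\IterInd}}_{\IterInd}) \leq \VioFactor(\Height)\Height$. This bound holds regardless of whether the instance's optimum is meaningful, since the violation factor bounds the output size for every instance. The path $\SetP_{\IterInd}$ is a shortest path to some $\Vertw_{\IterInd} \in \SetW$, and every vertex of $\SetW$ is within distance $\Height - 1$ of $\Vertv$. Hence $\NumEdges(\SetP_{\IterInd}) \leq \Height - 1 \leq \VioFactor(\Height)\Height$, using $\VioFactor \geq 1$. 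Each iteration therefore contributes at most $2\VioFactor(\Height)\Height$ edges.

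Summing over the iterations with $\NumSubtrees = \lfloor \frac{2\Card}{\Height} \rfloor \leq \frac{2\Card}{\Height}$ gives $\NumEdges(\SetS) \leq \frac{2\Card}{\Height} \cdot 2\VioFactor(\Height)\Height = 4\VioFactor(\Height)\Card$.

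It remains to check that $\SetS$ is an out-tree rooted at $\Vertv$, which is the only step needing care. The union of out-trees and paths may contain a vertex with two incoming edges, or a cycle, so the union need not itself be an out-tree. We fix this by taking a BFS/shortest-path out-arborescence from $\Vertv$ inside the union. Every vertex of the union is reachable from $\Vertv$: vertices of $\SetP_{\IterInd}$ directly, and vertices of $\SetS^{\Vertw_{\IterInd}}_{\IterInd}$ through $\Vertw_{\IterInd}$. The arborescence spans the same vertex set, so $\Funcf$ is unchanged, and it has at most as many edges as the union. The size bound above therefore still holds.
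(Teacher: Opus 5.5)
Your proposal is correct and follows the same argument as the paper: each iteration contributes $\NumEdges(\SetS^{\Vertw_{\IterInd}}_{\IterInd}) + \NumEdges(\SetP_{\IterInd}) \leq 2\VioFactor(\Height)\Height$ edges, and summing over $\NumSubtrees = \lfloor 2\Card/\Height \rfloor$ iterations gives the bound. Your extra step of extracting a shortest-path out-arborescence from $\Vertv$ inside the union repairs a point the paper leaves implicit, namely that the raw union need not be an out-tree, and it preserves both the vertex set (hence $\Funcf$) and the edge bound.
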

\begin{proof}
    For each $\IterInd \in [1, \NumSubtrees]$, the out-subtree $\SetS_{\IterInd}^{\Vertw_{\IterInd}}$ is constructed by the subroutine \Alg{} with size constraint $\Height$ (Line~\ref{line:GR call RG} of \GR{}). Thus, it holds that $\NumEdges( \SetS_{\IterInd}^{\Vertw_{\IterInd}} ) \leq \VioFactor(\Height) \Height$. Further, the connecting path $\SetP_{\IterInd}$ has size $\NumEdges(\SetP_{\IterInd}) \leq \Height \leq \VioFactor(\Height) \Height$. Therefore, $\NumEdges( \SetS_{\IterInd}^{\Vertw_{\IterInd}} \cup \SetP_{\IterInd}) \leq 2 \VioFactor(\Height) \Height$. From this, we bound the size of $\SetS$ below, recalling that $\NumSubtrees = \lfloor \frac{2\Card}{\Height} \rfloor$.
    \begin{align}
        \NumEdges(\SetS) \leq \sum_{\IterInd = 1}^{\NumSubtrees} \NumEdges( \SetS_{\IterInd}^{\Vertw_{\IterInd}} \cup \SetP_{\IterInd})
        \leq 2 \NumSubtrees \VioFactor(\Height) \Height \leq \frac{4 \Card}{\Height} \VioFactor(\Height) \Height = 4 \VioFactor(\Height) \Card. \tag*{\qedhere}
    \end{align}
\end{proof}

\subsection{Feasible Approximation for (Directed) \CSMa{}} \label{sec:feasible approx for DCSM}
Here we explain how \GR{} can be used to achieve a feasible $\frac{\AppFactor(\Height)}{16\VioFactor(\Height)}$-approximation for the problem of (Directed) \CSMa{}.

First, an out-tree $\SetS$ can be partitioned into out-subtrees as in Corollary~\ref{cor:sol tree decomp unrooted} below. This corollary directly follows from Lemma~\ref{lem:tree decomp} by setting $\TreeWeight = \frac{\Card}{2}$ and $\SetT = \SetS$, where $\NumEdges(\SetS) \leq 4 \VioFactor(\Height) \Card$.

\begin{corollary} \label{cor:sol tree decomp unrooted}
    Let $\Height \geq 1$ be an integer. Then $\SetS$ can be partitioned into $\OptNumSub$ edge-disjoint out-subtrees $\SetS_{1}, \dots, \SetS_{\Delta}$ such that for each $\OptSubInd \in [1,\OptNumSub] \colon 1 \leq \NumEdges(\SetS_{\OptSubInd}) \leq \Card$, and $1 \leq \OptNumSub \leq \lfloor 8 \VioFactor(\Height) \rfloor$.
\end{corollary}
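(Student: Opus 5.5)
The plan is to apply Lemma~\ref{lem:tree decomp} directly, with $\TreeWeight = \frac{\Card}{2}$ and $\SetT = \SetS$. Here $\SetS$ is the out-tree returned by \GR{}, and Lemma~\ref{lem:GR violation factor} gives $\NumEdges(\SetS) \leq 4\VioFactor(\Height)\Card$. The lemma then produces edge-disjoint out-subtrees $\SetS_{1}, \dots, \SetS_{\OptNumSub}$. Each satisfies $1 \leq \NumEdges(\SetS_{\OptSubInd}) \leq \lfloor 2 \cdot \frac{\Card}{2} \rfloor = \Card$, since $\Card$ is an integer. The number of pieces satisfies $\OptNumSub \leq \lfloor \NumEdges(\SetS) / (\Card/2) \rfloor \leq \lfloor 8\VioFactor(\Height) \rfloor$. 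Floors are monotone, so the last step passes directly through the floor.

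The only real obstacle is the hypothesis $\NumEdges(\SetS) \geq \TreeWeight = \frac{\Card}{2}$ of Lemma~\ref{lem:tree decomp}, which need not hold. For example, the calls to \Alg{} might return tiny trees, or $\SetS$ might even be the single vertex $\Vertv$. I would handle this by cases.

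\emph{Case $\NumEdges(\SetS) \geq \frac{\Card}{2}$.} The argument above applies directly.

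\emph{Case $1 \leq \NumEdges(\SetS) < \frac{\Card}{2}$.} Take the trivial partition $\OptNumSub = 1$ with $\SetS_{1} = \SetS$. It has between $1$ and $\Card$ edges, and $1 \leq \lfloor 8\VioFactor(\Height) \rfloor$ because $\VioFactor \geq 1$.

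\emph{Case $\NumEdges(\SetS) = 0$.} The statement's requirement $\NumEdges(\SetS_{\OptSubInd}) \geq 1$ cannot literally hold. Here I would either read the corollary as applying when $\SetS$ has at least one edge, or note that the single-vertex tree is the trivial piece for the downstream use. I would flag this in one sentence rather than belabor it.

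Finally, the parameter $\Height$ enters only through the bound on $\NumEdges(\SetS)$ from Lemma~\ref{lem:GR violation factor}. So no further dependence on it needs checking.
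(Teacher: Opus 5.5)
Your proposal is correct and matches the paper's proof, which is just the one-line application of Lemma~\ref{lem:tree decomp} with $s = k/2$ and $T = S$, combined with the bound $m(S) \leq 4\beta(r)k$ from Lemma~\ref{lem:GR violation factor}. Your extra case for $m(S) < k/2$, where the lemma's hypothesis fails and the one-piece partition works, and your flag on the degenerate $m(S) = 0$ case are legitimate refinements that the paper silently skips.
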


Now we prove the required approximation result in Corollary~\ref{cor:GR for DCSM}, which follows from Theorem~\ref{thm:GR} and Corollary~\ref{cor:sol tree decomp unrooted}.

\begin{corollary}\label{cor:GR for DCSM}
     Let $(\Graph, \Funcf, \Card)$ be an instance of (Directed) \CSMa{} and \Alg{} be an $(\AppFactor(\Card), \VioFactor(\Card))$-approximation algorithm for \DRCSMa{} that runs in time $\Time(\NumVertices, \Card)$. Then there exists an $\frac{\AppFactor(\Height)}{16\VioFactor(\Height)}$-approximation algorithm for (Directed) \CSMa{} that runs in time $O(\frac{\Card\NumVertices}{\OptRadius} \Time(\NumVertices, \Height))$.
\end{corollary}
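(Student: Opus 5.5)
The plan is to reduce (Directed) \CSMa{} to \DRCSMa{} by guessing the root, run \GR{}, and then cut the resulting oversized out-tree down to a feasible piece. Concretely, for each candidate root $\Vertv \in \VertexSet$ and each candidate radius $\Height \in [1, \Card]$, run \GR{} with subroutine \Alg{} on $(\Graph, \Funcf, \Card, \Vertv)$. For an undirected instance, first orient every edge in both directions. A tree in $\Graph$ rooted at its center $\OptCenter$ is then an out-tree of height $\OptRadius$ in the bidirected graph, and conversely every out-tree of the bidirected graph projects to a tree of $\Graph$ with the same vertex set and no more edges. For the correct guess $\Vertv = \OptCenter$ and $\Height = \OptRadius$, Theorem~\ref{thm:GR} yields an out-tree $\SetS$ rooted at $\Vertv$ with $\Funcf(\SetS) \geq \frac{1}{2}\AppFactor(\Height)\Funcf(\OptTree)$ and $\NumEdges(\SetS) \leq 4\VioFactor(\Height)\Card$.

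Next, apply Corollary~\ref{cor:sol tree decomp unrooted} to partition $\SetS$ into $\OptNumSub \leq \lfloor 8\VioFactor(\Height)\rfloor$ edge-disjoint out-subtrees $\SetS_1,\dots,\SetS_{\OptNumSub}$, each with between $1$ and $\Card$ edges. If $\NumEdges(\SetS) < \Card$, there is nothing to do, since $\SetS$ is already feasible; this is the case where the hypothesis $\NumEdges(\SetS)\geq \TreeWeight$ of Lemma~\ref{lem:tree decomp} may fail. Since the pieces are edge-disjoint and every vertex of $\SetS$ lies in some piece (each piece has at least one edge and $\SetS$ is connected), the vertex sets of the $\SetS_{\OptSubInd}$ cover $\VertexSet(\SetS)$. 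By monotonicity and subadditivity of a non-negative submodular function, $\sum_{\OptSubInd}\Funcf(\SetS_{\OptSubInd}) \geq \Funcf(\SetS)$. Hence the best piece satisfies
\[
\Funcf(\SetS_{\OptSubInd}) \geq \frac{\Funcf(\SetS)}{8\VioFactor(\Height)} \geq \frac{\AppFactor(\Height)}{16\VioFactor(\Height)}\Funcf(\OptTree).
\]
Each piece is an out-tree with at most $\Card$ edges, so it is feasible for \DCSMa{}; in the undirected case, its projection is a tree of $\Graph$ with at most $\Card$ edges.

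The algorithm outputs the best feasible piece over all guesses. Candidates can be compared directly by $\Funcf$-value, and every candidate is feasible, so the output is at least as good as the one produced under the correct guess. The partition can be found in polynomial time by the constructive proof of Lemma~\ref{lem:tree decomp}, which needs no oracle calls.

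The step needing the most care is the running time. Naively, guessing $\Vertv$ and $\Height$ multiplies the bound of Theorem~\ref{thm:GR} by a factor of order $\NumVertices\Card$. To control this, I would first try to argue that the guess of $\Height$ can be absorbed, for instance by summing $\frac{\Card\NumVertices}{\Height}\Time(\NumVertices,\Height)$ over $\Height$ when $\Time$ grows polynomially in its second argument, and that the root guess is either absorbed into the inner loop over $\SetW$ or matches the convention the paper uses for Theorem~\ref{thm:GR}. If this bookkeeping does not close, the result still holds with an extra polynomial factor from the guessing. The approximation part of the argument is the routine submodularity averaging described above.
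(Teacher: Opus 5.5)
Your proposal is correct and is essentially the paper's proof: guess the center (or root) $v$ and the radius $r$, run \GR{}, split its output with Corollary~\ref{cor:sol tree decomp unrooted}, and keep the most valuable piece by subadditivity. Your additional care (bidirecting undirected edges, the case $m(S) < k$ where Lemma~\ref{lem:tree decomp} need not apply, and returning the best feasible candidate over all guesses) fills in details the paper leaves implicit.

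Your running-time worry is legitimate, but the paper's proof does not resolve it either. It simply charges the single \GR{} run made with the correct guesses, so the stated bound should be read per guess, with $v$ and $r$ treated as known (recall $r$ is an input to \GR{}). Summing over candidate radii would not recover an $r$-dependent bound: for polynomially growing $\Gamma$, that sum is dominated by radii close to $k$, so it gives a bound in $k$ rather than $r$.
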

\begin{proof}
    Given an instance $(\Graph, \Funcf, \Card)$ of \CSMa{} (or \DCSMa{}) and its optimal out-tree $\OptTree$, guess $\Root$ and $\Radius$ as the center (or root) vertex and the radius of $\OptTree$, respectively.
    
    First run \GR{} with input $(\Graph, \Funcf, \Card, \Root, \Height, \Alg)$. By Theorem~\ref{thm:GR}, this gives an out-tree $\SetS$ with value $\Funcf(\SetS) \geq \frac{1}{2} \AppFactor(\Height) \Funcf(\OptTree)$ and size $\NumEdges(\SetS) \leq 4 \VioFactor(\Height) \Card$ in time $O(\frac{\Card\NumVertices}{\OptRadius} \Time(\NumVertices, \Height))$.
    
    Now partition $\SetS$ into $\OptNumSub \leq \lfloor 8 \VioFactor(\Height) \rfloor$ edge-disjoint out-subtrees $\SetS_{1}, \dots, \SetS_{\Delta}$ as in Corollary~\ref{cor:sol tree decomp unrooted}. Let $\BestSolSub$ be the out-subtree with maximum value $\Funcf(\BestSolSub)$, which is the final out-tree as required. We have that $\BestSolSub$ is a feasible out-tree as $\NumEdges(\BestSolSub) \leq \Card$. Further, by the submodularity and non-negativity of $\Funcf$, and the bound of $\OptNumSub \leq \lfloor 8 \VioFactor(\Height) \rfloor$, $\BestSolSub$ has the required approximation factor as shown below.
    \begin{align}
         \Funcf(\BestSolSub) \geq \frac{1}{\OptNumSub} \Funcf(\SetS) \geq \frac{1}{8 \VioFactor(\Height)} \Funcf(\SetS) \geq \frac{\AppFactor(\Height)}{16 \VioFactor(\Height)} \Funcf(\OptTree). \tag*{\qedhere}
    \end{align} 
\end{proof}

\subsection{Bicriteria Approximation for \DRCSMa{} with \texorpdfstring{$(1+\Errorv)$}{(1 + \textit{delta})}-Violation Factor} \label{sec:violation approx for DRCSM}
Here we explain how \GR{} can be used to achieve, for every $\Errorv \in [\frac{1}{\Card}, 1]$, a $(  \frac{\Errorv \AppFactor(\Height)}{16 \VioFactor(\Height)}, 1 + \Errorv)$-approximation for \DRCSMa{}.

First, an out-tree $\SetS$ can be partitioned into out-subtrees as in Corollary~\ref{cor:sol tree decomp rooted} below. This corollary directly follows from Lemma~\ref{lem:tree decomp} by setting $\TreeWeight = \frac{\Errorv \Card}{2}$ and $\SetT = \SetS$, where $\NumEdges(\SetS) \leq 4 \VioFactor(\Height) \Card$. 

\begin{corollary} \label{cor:sol tree decomp rooted}
    Let $\Height \geq 1$ be an integer and $\Errorv \in [\frac{1}{\Card}, 1]$ be a value. Then $\SetS$ can be partitioned into $\OptNumSub$ edge-disjoint out-subtrees $\SetS_{1}, \dots, \SetS_{\Delta}$ such that for each $\OptSubInd \in [1,\OptNumSub] \colon 1 \leq \NumEdges(\SetS_{\OptSubInd}) \leq \Errorv \Card$, and $1 \leq \OptNumSub \leq \lfloor \frac{8 \VioFactor(\Height)}{\Errorv} \rfloor$.
\end{corollary}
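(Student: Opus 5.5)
This follows by applying Lemma~\ref{lem:tree decomp} to $\SetT = \SetS$ with $\TreeWeight = \frac{\Errorv \Card}{2}$, in the same way as Corollary~\ref{cor:sol tree decomp unrooted}. We have to check the lemma's hypotheses, translate its two guarantees into the stated bounds, and handle the small case where the hypothesis fails.

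\textbf{Hypothesis.} We need $\TreeWeight > 0$ and $\NumEdges(\SetS) \geq \TreeWeight$. Positivity holds because $\Errorv \geq \frac{1}{\Card} > 0$. For the second condition, note that $\SetS$ contains the paths added in Line~\ref{line:GR update sol}, so in general it has many edges. Still, $\NumEdges(\SetS) < \frac{\Errorv \Card}{2}$ is possible, and we treat that case separately. In that case we take the trivial partition $\OptNumSub = 1$ with $\SetS_1 = \SetS$. Then $\NumEdges(\SetS_1) \leq \Errorv \Card$ holds immediately. The bound $1 \leq \lfloor \frac{8 \VioFactor(\Height)}{\Errorv} \rfloor$ holds since $\VioFactor \geq 1$ and $\Errorv \leq 1$. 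The lower bound $1 \leq \NumEdges(\SetS_1)$ needs $\SetS$ to have at least one edge. This holds because \GR{} adds $\SetS^{\Vertw_1}_1 \cup \SetP_1$ with $\NumSubtrees \geq 1$. The only exception is the degenerate case where everything is the single root; there we either note that the statement is vacuous or regard $\SetS$ as having at least one edge.

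\textbf{Main case.} When $\NumEdges(\SetS) \geq \frac{\Errorv\Card}{2}$, Lemma~\ref{lem:tree decomp} gives edge-disjoint out-subtrees with $1 \leq \NumEdges(\SetS_{\OptSubInd}) \leq \lfloor 2 \cdot \frac{\Errorv\Card}{2} \rfloor = \lfloor \Errorv \Card \rfloor \leq \Errorv\Card$. It also bounds the number of parts by $\OptNumSub \leq \lfloor \NumEdges(\SetS)/\TreeWeight \rfloor$. Substituting $\NumEdges(\SetS) \leq 4\VioFactor(\Height)\Card$ from Lemma~\ref{lem:GR violation factor} gives
\[
\OptNumSub \leq \left\lfloor \frac{4\VioFactor(\Height)\Card}{\Errorv\Card/2} \right\rfloor = \left\lfloor \frac{8\VioFactor(\Height)}{\Errorv} \right\rfloor ,
\]
using that the floor is monotone.

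\textbf{Where care is needed.} The main point is the edge case $\NumEdges(\SetS) < \TreeWeight$, together with checking that the floors behave. Both are routine; the rest is direct substitution.
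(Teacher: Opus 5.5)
Your proposal is correct and takes the same route as the paper, which simply invokes Lemma~\ref{lem:tree decomp} with $\TreeWeight = \frac{\Errorv\Card}{2}$, $\SetT = \SetS$, and the bound $\NumEdges(\SetS) \leq 4\VioFactor(\Height)\Card$; that is your main case. Your separate handling of $\NumEdges(\SetS) < \frac{\Errorv\Card}{2}$, where the lemma's hypothesis fails and the paper says nothing, is a correct addition, but the edgeless case $\NumEdges(\SetS) = 0$ is not ``vacuous''---there the statement is false as written, a degeneracy it inherits from the paper rather than a flaw in your argument.
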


Now we prove the required approximation result in Corollary~\ref{cor:GR for DRCSM}, which follows from Theorem~\ref{thm:GR} and Corollary~\ref{cor:sol tree decomp rooted}.

\begin{corollary}\label{cor:GR for DRCSM}
    Let $(\Graph, \Funcf, \Card, \Vertv)$ be an instance of \DRCSMa{} and \Alg{} be an $(\AppFactor(\Card), \VioFactor(\Card))$-approximation algorithm for \DRCSMa{} that runs in time $\Time(\NumVertices, \Card)$. Then, for every $\Errorv \in [\frac{1}{\Card}, 1]$, there exists a $( \frac{\Errorv \AppFactor(\Height)}{16 \VioFactor(\Height)}, 1 + \Errorv)$-approximation algorithm for \DRCSMa{} that runs in time $O(\frac{\Card\NumVertices}{\OptRadius} \Time(\NumVertices, \Height))$.
\end{corollary}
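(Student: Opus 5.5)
The plan is to run \GR{} as in Theorem~\ref{thm:GR}, cut its output into small pieces using Corollary~\ref{cor:sol tree decomp rooted}, keep the most valuable piece, and reconnect that piece to the root $\Vertv$ with a short path. Because $\Errorv \Card + \Card = (1+\Errorv)\Card$, the reconnecting path may have up to $\Card$ edges. The only real obstacle is making sure such a path exists.

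The first step handles this obstacle. I guess $\Height$ and replace $\Graph$ by the subgraph $\Graph'$ induced on the vertices $\Vertw$ with $\Dist(\Vertv,\Vertw)\le \Height$. Since $\OptTree$ is an out-tree rooted at $\Vertv$ of height $\Height$, it lies entirely in $\Graph'$. So $(\Graph',\Funcf,\Card,\Vertv)$ has the same optimum, and so does every sub-instance that \GR{} passes to \Alg{}: each $\BestOptSub$ used in the proof of Lemma~\ref{lem:GR approx factor} also lies in $\Graph'$. Running \GR{} on $\Graph'$ therefore gives, by Theorem~\ref{thm:GR}, an out-tree $\SetS$ rooted at $\Vertv$ with $\Funcf(\SetS)\ge \frac12\AppFactor(\Height)\Funcf(\OptTree)$ and $\NumEdges(\SetS)\le 4\VioFactor(\Height)\Card$. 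Moreover, every vertex of $\SetS$ is within distance $\Height\le\Card$ of $\Vertv$ in $\Graph$. The running time is $O(\frac{\Card\NumVertices}{\Height}\Time(\NumVertices,\Height))$, since restricting the graph only shrinks it.

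Next, I partition $\SetS$ via Corollary~\ref{cor:sol tree decomp rooted} into $\OptNumSub\le \lfloor 8\VioFactor(\Height)/\Errorv\rfloor$ edge-disjoint out-subtrees, each with at most $\Errorv\Card$ edges. The pieces cover $V(\SetS)$, so subadditivity (from submodularity and non-negativity) shows that the best piece $\BestSolSub$ satisfies
\[
\Funcf(\BestSolSub)\ge \frac{\Funcf(\SetS)}{\OptNumSub}\ge \frac{\Errorv}{8\VioFactor(\Height)}\Funcf(\SetS)\ge \frac{\Errorv\AppFactor(\Height)}{16\VioFactor(\Height)}\Funcf(\OptTree).
\]

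Finally, let $u$ be the root of $\BestSolSub$ and let $\SetP$ be a shortest $\Vertv$--$u$ path in $\Graph$, which has at most $\Height\le\Card$ edges. The union $\BestSolSub\cup\SetP$ may fail to be an out-tree, because $\SetP$ could enter $\BestSolSub$ at a non-root vertex. To fix this, I take a BFS out-tree from $\Vertv$ in the subgraph $\BestSolSub\cup\SetP$. Every vertex of this subgraph is reachable from $\Vertv$, so the BFS tree spans all of its vertices. It has at most $|V(\BestSolSub\cup\SetP)|-1\le \NumEdges(\BestSolSub)+\NumEdges(\SetP)\le \Errorv\Card+\Card$ edges. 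By monotonicity its value is at least $\Funcf(\BestSolSub)$, which gives the claimed $(\frac{\Errorv\AppFactor(\Height)}{16\VioFactor(\Height)},1+\Errorv)$ bicriteria guarantee.

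Guessing $\Height$ over $[1,\Card]$ and keeping the best output does not change the asymptotic running time, if we charge the guess as the paper does elsewhere. The partition and BFS steps add only polynomial overhead that is dominated by the \GR{} calls.
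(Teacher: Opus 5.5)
Your proof is correct and follows essentially the same route as the paper: run \GR{}, split its output with Corollary~\ref{cor:sol tree decomp rooted}, keep the most valuable piece, and reattach it to $\Vertv$ by a shortest path of at most $\Card$ edges. The only differences are minor. You prune to distance $\Height$ rather than $\Card$, which is equally valid since $\OptTree$ survives either pruning and $\Height \le \Card$. Your BFS extraction from the union of the piece and the path also repairs a point the paper glosses over: that union need not be an out-tree when the path enters the piece at a non-root vertex.
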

\begin{proof}
    Given an instance $(\Graph, \Funcf, \Card, \Vertv)$ of \DRCSMa{}, assume that we prune the input graph $\Graph$ so that it only contains those vertices $\Vertw$ with $\Dist(\Vertv, \Vertw) \leq \Card$ (which preserves the optimal solution), and guess  $\Radius$ as the radius of $\OptTree$.

    First run \GR{} with input $(\Graph, \Funcf, \Card, \Root, \Height, \Alg)$. By Theorem~\ref{thm:GR}, this gives an out-tree $\SetS$ with value $\Funcf(\SetS) \geq \frac{1}{2} \AppFactor(\Height) \Funcf(\OptTree)$ and size $\NumEdges(\SetS) \leq 4 \VioFactor(\Height) \Card$ in time $O(\frac{\Card\NumVertices}{\OptRadius} \Time(\NumVertices, \Height))$.
    
    Now partition $\SetS$ into $\OptNumSub$ edge-disjoint out-subtrees $\SetS_{1}, \dots, \SetS_{\Delta}$ as in Corollary~\ref{cor:sol tree decomp rooted}. Let $\SetS_{\BestOptSubInd}$ be the out-subtree with maximum value, and $\Vertw_{\BestOptSubInd}$ be its root. Let $\BestSolSub$ be the new out-subtree formed by combining a shortest $\Vertv$--$\Vertw_{\BestOptSubInd}$ path with $\SetS_{\BestOptSubInd}$, giving the final out-tree as required. By the monotonicity, submodularity, and non-negativity of $\Funcf$, and the bound of $\OptNumSub \leq \lfloor \frac{8 \VioFactor(\Height)}{\Errorv} \rfloor$, $\BestSolSub$ has the required approximation factor as shown below.
    \begin{align}
         \Funcf(\BestSolSub) \geq \Funcf(\SetS_{\BestOptSubInd}) \geq \frac{1}{\OptNumSub} \Funcf(\SetS) \geq \frac{\Errorv}{8 \VioFactor(\Height)} \Funcf(\SetS) \geq \frac{\Errorv \AppFactor(\Height)}{16 \VioFactor(\Height)} \Funcf(\OptTree). \notag
    \end{align} 
    Further, $\BestSolSub$ has the required violation factor since the shortest $\Vertv$--$\Vertw_{\BestOptSubInd}$ path has at most $\Card$ edges (by the pruning of $\Graph$), and $\NumEdges(\SetS_{\BestOptSubInd}) \leq \Errorv \Card$, giving 
    \begin{align}
        \NumEdges(\BestSolSub) = \Dist(\Vertv, \Vertw_{\BestOptSubInd}) + \NumEdges(\SetS_{\BestOptSubInd})\leq \Card + \Errorv \Card = (1+\Errorv) \Card. \tag*{\qedhere}
    \end{align}
\end{proof}

\section{\texorpdfstring{\RAd{}}{\RA-\textit{d}}: Recursive Greedy Approximation Algorithm} \label{sec:rec greedy approx algorithm}
We present the algorithm \RA{} for \DRCSMa{}, with pseudocode in Algorithm~\ref{alg:RA} and performance guarantees in Theorem~\ref{thm:RA}. We also define \RAd{} to take an instance $(\Graph, \Funcf, \Card, \Vertv)$ of \DRCSMa{} and output the solution from \RA$(\Graph, \Funcf, \Card, \Vertv, \Card^{\frac{1}{\RecDepth}})$. We give the performance guarantees of \RAd{} in Theorem~\ref{thm:RAd}.

\begin{algorithm}
\DontPrintSemicolon
\caption{\RA}
\label{alg:RA}

\KwIn{$\Graph=(\VertexSet, \EdgeSet)$: directed graph, $\Funcf : 2^\VertexSet \rightarrow \Reals_{\geq 0}$: value oracle, $\TargetSize \geq 1$: size constraint, $\Vertv \in \VertexSet$: root, $\TargetDiv > 1$: size divisor.}

\KwOut{$\SetS \subseteq \Graph$: an out-tree, with root $\Vertv$, that $( \frac{1}{\RecLevel+1}, 1 + 3\RecLevel\TargetDiv + \RecLevel \log_{3/2} ( \frac{\TargetSize}{\TargetDiv} ) )$-approximates $\OptTree$, where $\RecLevel$ is the integer satisfying $\TargetDiv^{\RecLevel-1} < \TargetSize \leq \TargetDiv^{\RecLevel}$.}

\If(\tcp*[f]{Base case: greedily add an out-neighbor of the root}){$\TargetSize = 1$}{ \label{line:RG base case}
    $\SetW \gets$ set of out-neighbors of $\Vertv$ \; \label{line:RG neighbors of v}
    $\Bestwa \gets \argmax_{\Vertw \in \SetW} \Funcf( \{\Vertv, \Vertw\} )$ \; \label{line:RG base case select w}
    $\SetS \gets \left( \{ \Vertv, \Bestwa \}, \{ (\Vertv, \Bestwa) \} \right)$ \; \label{line:RG base case update sol}
}

\Else(\tcp*[f]{Recursive case: add approximate out-subtrees until all $\TargetSize$ edges of $\OptTree$ have been approximated}){ \label{line:RG inductive case}

$\SetS_{\leq 0} \gets (\{ \Vertv \}, \varnothing)$ \tcp*[r]{Initialize the output out-tree} \label{line:RG init sol}

$\SetW \gets$ set of vertices $\Vertw \in \VertexSet$ where $\Dist(\Vertv, \Vertw) \leq \TargetSize - 1$ \; \label{line:RG set of rec centers}

$\TargetSize_{0} \gets \TargetSize$ \;

$\IterInd \gets 0$

\While{$\TargetSize_{\IterInd} > 0$}{ \label{line:RG update loop}

    $\IterInd \gets \IterInd + 1$

    $\TargetSubsizeMin \gets \lceil \frac{1}{3} \max\{ \min\{ \frac{\TargetSize}{\TargetDiv}, \TargetSize_{\IterInd-1}\}, 1 \} \rceil $\; \label{line:RG assign target subsize min}

    $\TargetSubsizeMax \gets \lfloor \max\{ \min\{ \frac{\TargetSize}{\TargetDiv}, \TargetSize_{\IterInd-1}\}, 1 \} \rfloor $ \; \label{line:RG assign target subsize max}
    
    \For{$\Vertw \in \SetW$ \textnormal{and} $\TargetSubsize = \TargetSubsizeMin, \dots, \TargetSubsizeMax $}{ \label{line:RG guess w and c}
        
        $\SetS^{\Vertw, \TargetSubsize}_{\IterInd} \gets \RA(\Graph, \Funcf( \, \cdot \mid \SetS_{\leq \IterInd-1}), \TargetSubsize, \Vertw, \TargetDiv)$ \; \label{line:RG find w tree}   
    }
    $\Vertw_{\IterInd}, \TargetSubsize_{\IterInd} \gets \argmax_{ \Vertw \in \SetW, \TargetSubsize \in [ \TargetSubsizeMin, \TargetSubsizeMax ] } \frac{1}{\TargetSubsize} \Funcf(\SetS^{\Vertw, \TargetSubsize}_{\IterInd} \mid \SetS_{\leq \IterInd-1}) $ \tcp*[r]{Select an out-subtree greedily by marginal density} \label{line:RG select subtree greedily}

    $\SetP_{\IterInd} \gets $ shortest path from $\Vertv$ to $\Vertw_{\IterInd}$ \tcp*[r]{$\NumEdges( \SetP_{\IterInd} ) \leq \TargetSize$} \label{line:RG v to w path}

    $\SetS_{\leq \IterInd} \gets \SetS_{\leq \IterInd-1} \cup \SetS^{\Vertw_{\IterInd}, \TargetSubsize_{\IterInd}}_{\IterInd} \cup \SetP_{\IterInd}$ \label{line:RG update sol} \;
    
    $\TargetSize_{\IterInd} \gets \TargetSize_{\IterInd-1} - \TargetSubsize_{\IterInd}$ \label{line:RG update target size rem} \;
}
$\SetS \gets \SetS_{\leq \IterInd}$\;
}

\KwRet{$\SetS$}
\end{algorithm}

\subsection{Overview of \RA{}}
\RA{} takes as input a directed graph $\Graph$, a non-negative monotone submodular function $\Funcf$, a size constraint $\TargetSize \geq 1$, a root vertex $\Vertv$, and a size divisor $\TargetDiv > 1$. We define the \emph{recursion level} of a call to \RA{} to be the integer $\RecLevel$ satisfying $\TargetDiv^{\RecLevel-1} < \TargetSize \leq \TargetDiv^{\RecLevel}$. \RA{} outputs an out-tree $\SetS \subseteq \Graph$, with root $\Vertv$, that $( \frac{1}{\RecLevel+1}, 1 + 3\RecLevel\TargetDiv + \RecLevel \log_{3/2} ( \frac{\TargetSize}{\TargetDiv} ) )$-approximates $\OptTree$.

In the base case where $\TargetSize = 1$, \RA{} simply returns an out-tree $\SetS$ consisting of the root, $\Root$, and an out-neighbor, $\Bestwa$, with maximum value.

To give a high-level overview of how \RA{} works in the recursive case where $\TargetSize \geq 2$, we first assume $\OptTree$ can be partitioned into $\OptNumSub(\IterInd)$ out-subtrees whose sizes depend on the $\IterInd$th iteration of the Line~\ref{line:RG update loop} loop. That is, $\OptTree$ can be partitioned into edge-disjoint out-subtrees, $\OptTree_{\IterInd, 1}, \dots, \OptTree_{\IterInd, \OptNumSub(\IterInd)}$, each of size in the range $[\TargetSubsizeMin, \TargetSubsizeMax]$, where the values of $\TargetSubsizeMin$ and $\TargetSubsizeMax$ depend on the $\IterInd$th iteration.
Note that the roots of the out-subtrees $\OptTree_{\IterInd, 1}, \dots, \OptTree_{\IterInd, \OptNumSub(\IterInd)}$ are all of distance $\TargetSize-1$ from $\Root$.

\RA{} initializes the output solution, $\SetS$, with the root,~$\Root$, and updates it while the remaining number of edges to approximate, $\TargetSize_{\IterInd}$, is non-zero. Let $\SetS_{\leq 0} = (\{\Vertv\}, \varnothing)$ and, for each iteration $\IterInd$, let $\SetS_{\leq \IterInd}$ be the $\IterInd$th partial solution, $\BestOptSub$ be the out-subtree from $\OptTree_{\IterInd, 1}, \dots, \OptTree_{\IterInd, \OptNumSub(\IterInd)}$ with maximum marginal density to $\SetS_{\leq \IterInd-1}$, $\Bestw$ be the root of $\BestOptSub$, and $\BestSubsize$ be the size of $\BestOptSub$. Then, in the $\IterInd$th iteration, \RA{} performs these steps: (1) guess $\Vertw_{\IterInd} = \Bestwa$ and $\TargetSubsize_{\IterInd} = \BestSubsize$, (2) make a recursive call to construct an out-subtree $\SetS^{\Vertw_{\IterInd}, \TargetSubsize_{\IterInd}}_{\IterInd}$, with root $\Vertw_{\IterInd}$, that is a bicriteria approximation of $\BestOptSub$, and (3) add $\SetS^{\Vertw_{\IterInd}, \TargetSubsize_{\IterInd}}_{\IterInd}$ and a $\Vertv$--$\Vertw_{\IterInd}$ connecting path to $\SetS_{\leq \IterInd-1}$ to get $\SetS_{\leq \IterInd}$.

\subsection{Analysis of \RA{}} \label{sec:RA analysis}
We state the performance guarantees of \RA{} in Theorem~\ref{thm:RA} below. We give a tree partitioning in Lemma~\ref{lem:balanced tree decomp} and Corollary~\ref{cor:target tree decomp} in order to prove the approximation factor in Lemma~\ref{lem:RG approx of sol}. The violation factor follows from Lemma~\ref{lem:RG sol size}, and the running time follows from Lemma~\ref{lem:RG running time}.

\begin{theorem} \label{thm:RA}
    Let $(\Graph, \Funcf, \TargetSize, \Vertv)$ be an instance of \DRCSMa{}. Let $\TargetDiv > 1$ be the size divisor given to \RA{} and $\RecLevel$ be the integer satisfying $\TargetDiv^{\RecLevel-1} < \TargetSize \leq \TargetDiv^{\RecLevel}$. Then \RA{} is a $( \frac{1}{\RecLevel+1}, 1 + 3\RecLevel\TargetDiv + \RecLevel \log_{3/2} ( \frac{\TargetSize}{\TargetDiv} ) )$-approximation algorithm for \DRCSMa{} and runs in time $O(\NumVertices^{\RecLevel+1} \TargetSize^{2\RecLevel+2})$.
\end{theorem}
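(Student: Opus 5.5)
We prove Theorem~\ref{thm:RA} by strong induction on the recursion level $\RecLevel$ (equivalently on $\TargetSize$), proving the three parts (approximation factor, violation factor, running time) together.

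\textbf{Base cases.} When $\TargetSize = 1$, the algorithm returns the best out-neighbor edge. This is optimal, since $\OptTree$ is a single edge from $\Vertv$, so it is a $(1,1)$-approximation. More generally, if $\TargetSize \le \TargetDiv$ then $\RecLevel = 1$, every recursive call has size $1$, and the loop reduces to a density greedy over single vertices at distance at most $\TargetSize-1$ with connecting paths. This case is handled by the same argument as the recursive step, with exact subcalls.

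\textbf{Partition of the optimum.} For the recursive step, fix an iteration $\IterInd$ and let $c_{\IterInd} = \max\{\min\{\frac{\TargetSize}{\TargetDiv}, \TargetSize_{\IterInd-1}\},1\}$. I would invoke the balanced partition (Lemma~\ref{lem:balanced tree decomp} / Corollary~\ref{cor:target tree decomp}, a refinement of Lemma~\ref{lem:tree decomp}). It splits $\OptTree$ into edge-disjoint out-subtrees whose sizes lie in $[\lceil c_{\IterInd}/3\rceil, \lfloor c_{\IterInd}\rfloor]$, the factor $3$ coming from splitting a tree into pieces of size between $s$ and $3s$. The roots of these pieces lie within distance $\TargetSize-1$ of $\Vertv$, so one of them, $\BestOptSub$, has marginal density to $\SetS_{\leq \IterInd-1}$ at least
\[
\frac{\Funcf(\OptTree)-\Funcf(\SetS_{\leq \IterInd-1})}{\TargetSize}.
\]
The algorithm tries the root and size $\BestSubsize$ of $\BestOptSub$. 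The corresponding subcall has level at most $\RecLevel-1$, because $\BestSubsize \le \TargetSize/\TargetDiv \le \TargetDiv^{\RecLevel-1}$. By induction, the subcall's output has gain at least $\frac{1}{\RecLevel}\Funcf(\BestOptSub\mid \SetS_{\leq \IterInd-1})$. Greedy selection by density over the same candidates then gives
\[
\Funcf(\SetS_{\leq \IterInd}\mid \SetS_{\leq \IterInd-1}) \ge \frac{\TargetSubsize_{\IterInd}}{\RecLevel\,\TargetSize}\bigl(\Funcf(\OptTree)-\Funcf(\SetS_{\leq \IterInd-1})\bigr).
\]
The values $\TargetSubsize_{\IterInd}$ sum exactly to $\TargetSize$, since $\TargetSubsizeMax \le \TargetSize_{\IterInd-1}$. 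Chaining these inequalities gives a gap of at most $\prod_{\IterInd}(1-\frac{\TargetSubsize_{\IterInd}}{\RecLevel \TargetSize}) \le e^{-1/\RecLevel}$, hence
\[
\Funcf(\SetS) \ge (1-e^{-1/\RecLevel})\Funcf(\OptTree) \ge \frac{1}{\RecLevel+1}\Funcf(\OptTree).
\]

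\textbf{Violation factor (the main obstacle).} The recursive outputs have total size at most
\[
\sum_{\IterInd} \TargetSubsize_{\IterInd}\bigl(1+3(\RecLevel-1)\TargetDiv+(\RecLevel-1)\log_{3/2}(\cdot)\bigr) \le \TargetSize\bigl(1+3(\RecLevel-1)\TargetDiv+(\RecLevel-1)\log_{3/2}(\TargetSize/\TargetDiv)\bigr),
\]
by induction. It remains to show that the connecting paths, each of length at most $\TargetSize$, add at most $\TargetSize(3\TargetDiv + \log_{3/2}(\TargetSize/\TargetDiv))$, i.e., that the number of iterations is at most $3\TargetDiv + \log_{3/2}(\TargetSize/\TargetDiv)$. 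There are two phases.

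\begin{itemize}
\item While $\TargetSize_{\IterInd-1} \ge \TargetSize/\TargetDiv$, each step consumes at least $\frac{\TargetSize}{3\TargetDiv}$, so there are at most $3\TargetDiv$ such steps.
\item Afterwards, each step consumes at least $\lceil \TargetSize_{\IterInd-1}/3\rceil$, so the remainder shrinks by a factor of $2/3$. Starting from below $\TargetSize/\TargetDiv$, this gives $O(\log_{3/2}(\TargetSize/\TargetDiv))$ steps, plus a constant absorbed once the remainder is small, where each step removes at least $1$.
\end{itemize}

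I expect the careful accounting here to be the hardest part: the rounding in $\lceil\cdot\rceil$ and $\lfloor\cdot\rfloor$, and matching the exact additive constants in the stated bound. The base case $\RecLevel = 1$ may also need separate handling for the term $\log_{3/2}(\TargetSize/\TargetDiv)$ when $\TargetSize < \TargetDiv$, where that logarithm is negative.

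\textbf{Running time.} Each level makes at most $\NumVertices \TargetSize$ subcalls per iteration, over at most $\TargetSize$ iterations, which gives the recurrence $T(\RecLevel) \le \NumVertices\TargetSize^2 \, T(\RecLevel-1)$. This unrolls to $O(\NumVertices^{\RecLevel+1}\TargetSize^{2\RecLevel+2})$.
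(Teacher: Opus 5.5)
Your proposal follows the paper's proof in all essentials. It uses the same per-iteration balanced partition of $T^*$ (Corollary~\ref{cor:target tree decomp}), the same density-greedy inequality with a recursive call of level at most $\ell-1$, the same two-phase iteration count (the paper's Claim~\ref{claim:recur num of updates UB}) feeding an induction on output size, and the same running-time recurrence. Your closing step $\prod_i\bigl(1-\frac{c_i}{\ell b}\bigr)\le e^{-1/\ell}\le\frac{\ell}{\ell+1}$ is a slightly cleaner substitute for the paper's AM--GM step followed by $(1-x)^t<\frac{1}{1+tx}$, and it correctly uses $\sum_i c_i=b$ exactly.

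The step you left open is the one that carries the stated constants, and as written it does not prove them. A phase-1 bound of $3q$ plus a phase-2 bound of $\log_{3/2}(b/q)+O(1)$ overshoots $3q+\log_{3/2}(b/q)$.

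\emph{Phase 1.} The additive constant can only be absorbed by sharpening this bound. The last phase-1 iteration still starts with remainder at least $b/q$, so $(t_1-1)\frac{b}{3q}\le b-\frac{b}{q}$, i.e.\ $t_1\le 3q-2$.

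\emph{Phase 2.} The last iteration starts with a remainder that is at least $1$ and at most $(2/3)^{t_2-1}\,b/q$. Hence $t_2\le 1+\log_{3/2}(b/q)$ whenever $t_2\ge 1$, and together $t\le 3q-1+\log_{3/2}(b/q)$. If $t_2=0$ and $b>q$, then $t_1\le 3q-2$ already suffices.

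\emph{The case $b\le q$.} You were right that this needs separate treatment. Here every $c_i=1$, so $t=b$. With $x=q/b\ge 1$ you get $3q+\log_{3/2}(b/q)-b\ge(3x-1)-\frac{\ln x}{\ln(3/2)}\ge(3x-1)-\frac{x-1}{\ln(3/2)}>0$. The paper's proof of Claim~\ref{claim:recur num of updates UB} asserts $t_2\le\log_{3/2}(b/q)+2$ uniformly. That fails when $t_2=0$ and $b<\frac{4}{9}q$, so the same patch is needed there too.

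\emph{Size induction.} When you replace each subcall's level $\ell_i$ and size $c_i$ by $\ell-1$ and $b$, state why this is monotone: $3q+\log_{3/2}(c/q)\ge 3q-\log_{3/2}q>0$ for every $c\ge 1$.

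With these fixes your argument proves the theorem as stated.
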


\paragraph*{Analysis Notation for \RA{}.}
We use the following notation throughout the analysis of \RA{}, particularly in the recursive case. The \emph{recursion level} (of a given \RA{} call) is the integer $\RecLevel$ satisfying $\TargetDiv^{\RecLevel-1} < \TargetSize \leq \TargetDiv^{\RecLevel}$. Let $\NumSubtrees$ denote the total number of updates \RA{} makes in Line~\ref{line:RG update sol} to construct the output solution. Let $\SetS_{\leq 0} = ( \{ \Vertv \}, \varnothing)$ denote the initial state of the output solution as in Line~\ref{line:RG init sol}, and $\SetS_{\leq \IterInd}$ denote the $\IterInd$th partial solution as in Line~\ref{line:RG update sol} of the $\IterInd$th iteration of the Line~\ref{line:RG update loop} loop. Further, let $\SetS^{\Vertw_{\IterInd}, \TargetSubsize_{\IterInd}}_{\IterInd}$ denote the $\IterInd$th approximate out-subtree added to the output solution, and $\SetP_{\IterInd}$ denote the path appended to connect $\Vertv$ to $\Vertw_{\IterInd}$, i.e., the root of $\SetS^{\Vertw_{\IterInd}, \TargetSubsize_{\IterInd}}_{\IterInd}$.

\paragraph*{Partitioning the Optimal Solution for \RA{}.}
First, for each iteration $\IterInd$ of the Line~\ref{line:RG update loop} loop, we show that $\OptTree$ can be partitioned into edge-disjoint out-subtrees each of size in the range $[\TargetSubsizeMin, \TargetSubsizeMax]$, where $\TargetSubsizeMin = \lceil \frac{1}{3} \max\{ \min\{ \frac{\TargetSize}{\TargetDiv}, \TargetSize_{\IterInd-1} \}, 1 \} \rceil$ and $\TargetSubsizeMax = \lfloor \frac{1}{3} \max\{ \min\{ \frac{\TargetSize}{\TargetDiv}, \TargetSize_{\IterInd-1} \}, 1 \} \rfloor$ as in Lines~\ref{line:RG assign target subsize min} and \ref{line:RG assign target subsize max}. We use the following tree partitioning lemma based on balanced separators for trees.

\begin{lemma}[Tree partitioning with balanced separators] \label{lem:balanced tree decomp}
    Let $\TreeWeight > 0$ be a value, and $\SetT$ be an out-tree satisfying $\NumEdges(\SetT) \geq \TreeWeight$. Then $\SetT$ can be partitioned into $\OptNumSub$ edge-disjoint out-subtrees $\SetT_{1}, \dots, \SetT_{\Delta}$ such that for each $\OptSubInd \in [1,\OptNumSub] \colon \TreeWeight \leq \NumEdges(\SetT_{\OptSubInd}) \leq 3 \TreeWeight$.
\end{lemma}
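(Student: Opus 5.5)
We prove Lemma~\ref{lem:balanced tree decomp} by induction on $\NumEdges(\SetT)$, with a greedy ``peel off a heavy subtree'' step. If $\NumEdges(\SetT) \leq 3\TreeWeight$, we take $\OptNumSub = 1$ and $\SetT_{1} = \SetT$; the hypothesis $\NumEdges(\SetT) \geq \TreeWeight$ gives the lower bound. So assume $\NumEdges(\SetT) > 3\TreeWeight$.

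For a vertex $u$ of $\SetT$, let $\SetT(u)$ denote the out-subtree of all descendants of $u$, with root $u$. Choose a vertex $u$ such that $\NumEdges(\SetT(u)) \geq \TreeWeight$ but every child $c$ of $u$ has $\NumEdges(\SetT(c)) < \TreeWeight$. Such a $u$ exists: walk down from the root, always moving to a child whose subtree has at least $\TreeWeight$ edges, until no such child exists. For each child $c$, call the ``branch'' $B_c$ the edge $(u,c)$ together with $\SetT(c)$. Then $\NumEdges(B_c) = 1 + \NumEdges(\SetT(c))$, which is less than $\TreeWeight + 1$. A more careful choice is needed here, to handle a non-integer $\TreeWeight$.

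The construction has three steps.
\begin{enumerate}
\item Order the children $c_1, c_2, \dots$ and accumulate branches at $u$ until the running total first reaches at least $\TreeWeight$. The result is an out-subtree $\SetT'$ rooted at $u$.
\item Since each branch has fewer than $\TreeWeight + 1$ edges, the total before adding the last branch was below $\TreeWeight$. Hence $\NumEdges(\SetT') < 2\TreeWeight + 1$. This needs tightening so that $\NumEdges(\SetT') \leq 3\TreeWeight$.
\item Remove the edges of $\SetT'$. Because $u$ remains attached to its parent (or is the root), the remainder $\SetT \setminus E(\SetT')$ is still an out-tree. It has more than $3\TreeWeight - \NumEdges(\SetT')$ edges.
\end{enumerate}

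The edge count of the remainder is the delicate point. We need the remainder to retain at least $\TreeWeight$ edges so that the induction applies. If it falls below $\TreeWeight$, we instead merge it into the piece containing the connection vertex, taking $\SetT'$ together with the remainder as a single piece. That piece has at most $\NumEdges(\SetT)$ edges, but we need this to be at most $3\TreeWeight$, which is not guaranteed in general.

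The cleaner fix is to choose the cut to guarantee both bounds directly. Whenever $\NumEdges(\SetT) > 3\TreeWeight$, we extract a piece $\SetT'$ with $\TreeWeight \leq \NumEdges(\SetT') \leq 2\TreeWeight$. The remainder then has more than $3\TreeWeight - 2\TreeWeight = \TreeWeight$ edges, so the inductive hypothesis applies to it. The upper bound $2\TreeWeight$ holds from step~2 when $\TreeWeight \geq 1$, and precisely it holds when the overshoot is at most one branch of size at most $\lceil \TreeWeight \rceil$.

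When $\TreeWeight < 1$, every piece must contain at least one edge. In that case we use single edges as pieces, each with $1 \leq 3\TreeWeight$ edges only if $\TreeWeight \geq 1/3$. So I would note that for $\TreeWeight < 1/3$ the statement needs $\TreeWeight \geq 1/3$ or integrality. In the paper's use, $\TreeWeight \geq 1/3$ is ensured by the $\max\{\cdot,1\}$ in $\TargetSubsizeMin$.

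The main obstacle is the integrality and rounding bookkeeping in step~2: bounding the overshoot so that $\NumEdges(\SetT') \leq 2\TreeWeight$, which leaves a remainder of at least $\TreeWeight$. Here I would use that branch sizes are integers below $\TreeWeight + 1$, hence at most $\lceil \TreeWeight \rceil$, and compare the result against $2\TreeWeight$.
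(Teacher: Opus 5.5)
Your peel-off induction is a reasonable alternative to the paper's route. However, the one quantitative step it depends on is false as you state it, and you leave that step unresolved.

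You claim that accumulating branches at $u$, in an arbitrary order, until the total reaches $s$ gives $m(T') \le 2s$ whenever $s \ge 1$. The integrality argument you sketch only yields $m(T') \le (\lceil s\rceil - 1) + \lceil s\rceil = 2\lceil s\rceil - 1$. This exceeds $2s$ whenever the fractional part of $s$ lies in $(0,\frac12)$, and the remainder can then fall below $s$.

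Concretely, take $s = \frac65$. Let $T$ have root $u$ with leaf children $c_1, c_3$ and a child $c_2$ whose only child is a leaf $g$. Then $m(T) = 4 > 3s$, and the downward walk stops at $u$, since the children's subtrees have $0$, $1$, $0$ edges. Accumulating $c_1$ then $c_2$ gives $m(T') = 3 > 2s$ and a remainder with $1 < s$ edge. Merging the remainder back yields all $4 > 3s$ edges. So neither of your two options closes the induction, even though the lemma holds for this tree: split it into the path $u \to c_2 \to g$ and the edges $(u,c_1),(u,c_3)$.

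The missing idea is a case split on branch sizes at $u$.
\begin{itemize}
\item If some branch has at least $s$ edges, take it alone. It has at most $\lceil s\rceil \le 2s$ edges when $s \ge 1$.
\item Otherwise every branch has fewer than $s$ edges. The accumulated piece then has fewer than $s + s = 2s$ edges, with no rounding needed.
\end{itemize}
Either way $s \le m(T') \le 2s$, and the remainder is an out-tree with more than $3s - 2s = s$ edges. The induction then goes through, using your single-edge treatment when $\frac13 \le s < 1$.

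The paper sidesteps this bookkeeping. It repeatedly splits any piece with more than $3s$ edges at a vertex that separates it into two out-subtrees, each with between a third and two thirds of its edges. Both parts therefore automatically keep more than $s$ edges.

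Your remark that the statement fails for $s < \frac13$ is correct. The paper's argument also tacitly needs $s \ge \frac13$, since a one-edge piece cannot be split; its application in Corollary~\ref{cor:target tree decomp} guarantees this.
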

\begin{proof}
    Every out-tree $\SetTp$ with $\NumEdges(\SetTp) \geq 2$ has a $\frac{1}{3}$-$\frac{2}{3}$ separator \cite{Lipton1979}, i.e., a vertex $\Vertv$ that splits $\SetTp$ into two edge-disjoint out-subtrees $\SetTp_1$ and $\SetTp_2$ such that $\frac{1}{3}\NumEdges(\SetTp) \leq \NumEdges(\SetTp_1) \leq \NumEdges(\SetTp_2) \leq \frac{2}{3}\NumEdges(\SetTp)$. Therefore, we can partition $\SetT$ using the following procedure.
    \begin{enumerate}
        \item Initialize a list of out-subtrees, $\List$, containing only $\SetT$.
        \item Repeat the following until every $\SetTp \in \List$ satisfies $\TreeWeight \leq \NumEdges(\SetTp) \leq 3\TreeWeight$: for each $\SetTp \in \List$ satisfying $\NumEdges(\SetTp) > 3\TreeWeight$, split it into $\SetTp_1$ and $\SetTp_2$ by finding its $\frac{1}{3}$-$\frac{2}{3}$ separator, and replace it with $\SetTp_1$ and $\SetTp_2$.
    \end{enumerate}

    The final list $\List$ gives $\SetT_{1}, \dots, \SetT_{\Delta}$ as required.
\end{proof}

We now prove Corollary~\ref{cor:target tree decomp} for partitioning $\OptTree$.

\begin{corollary} \label{cor:target tree decomp}
    Let $\TargetSize \geq \TargetSize_{\IterInd-1} \geq 1$ be integers, $\TargetDiv > 1$ be a real value, and $\OptTree$ be a tree satisfying $\NumEdges(\OptTree) = \TargetSize$. Then $\OptTree$ can be partitioned into $\OptNumSub$ edge-disjoint out-subtrees $\OptTree_{1}, \dots, \OptTree_{\OptNumSub}$ such that for each $\OptSubInd \in [1,\OptNumSub] \colon \lceil \frac{1}{3} \max\{ \min\{ \frac{\TargetSize}{\TargetDiv}, \TargetSize_{\IterInd-1} \}, 1 \} \rceil \leq \NumEdges(\OptTree_{\OptSubInd}) \leq \lfloor \max\{ \min\{\frac{\TargetSize}{\TargetDiv}, \TargetSize_{\IterInd-1} \}, 1 \} \rfloor$.
\end{corollary}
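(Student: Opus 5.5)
We plan to apply Lemma~\ref{lem:balanced tree decomp} with $\TreeWeight$ set to one third of the upper target. Write $x \coloneqq \max\{ \min\{ \frac{\TargetSize}{\TargetDiv}, \TargetSize_{\IterInd-1} \}, 1 \}$. The claim is that $\OptTree$ splits into edge-disjoint out-subtrees with sizes in $[\lceil x/3 \rceil, \lfloor x \rfloor]$. The natural choice is $\TreeWeight = x/3 > 0$.

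First we check the hypothesis $\NumEdges(\OptTree) \geq \TreeWeight$. Because $\TargetSize_{\IterInd-1} \leq \TargetSize$ and $\TargetSize \geq 1$, we have $\min\{\frac{\TargetSize}{\TargetDiv}, \TargetSize_{\IterInd-1}\} \leq \TargetSize$. Hence $x \leq \max\{\TargetSize, 1\} = \TargetSize$, and so $\NumEdges(\OptTree) = \TargetSize \geq x \geq x/3 = \TreeWeight$.

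Lemma~\ref{lem:balanced tree decomp} then gives a partition $\OptTree_1, \dots, \OptTree_\OptNumSub$ with $x/3 \leq \NumEdges(\OptTree_\OptSubInd) \leq x$ for every $\OptSubInd$. Edge counts are integers, so the lower bound becomes $\NumEdges(\OptTree_\OptSubInd) \geq \lceil x/3 \rceil$ and the upper bound becomes $\NumEdges(\OptTree_\OptSubInd) \leq \lfloor x \rfloor$. This is exactly the statement. Each part is nonempty, since $x \geq 1$ forces $\lceil x/3 \rceil \geq 1$.

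The only delicate point is the stopping condition inside the proof of Lemma~\ref{lem:balanced tree decomp}. We need that every tree which gets split yields pieces of size at least $\TreeWeight$. A tree $\SetTp$ is split only when $\NumEdges(\SetTp) > 3\TreeWeight$. The smaller piece then has at least $\frac{1}{3}\NumEdges(\SetTp) > \TreeWeight$ edges, and $\NumEdges(\SetTp) > 3\TreeWeight \geq 1$ combined with integrality gives $\NumEdges(\SetTp) \geq 2$, so a separator exists. The original tree $\OptTree$ already satisfies $\NumEdges(\OptTree) \geq \TreeWeight$. Each split strictly decreases sizes, so the procedure terminates. Together these show that all final pieces lie in $[\TreeWeight, 3\TreeWeight]$, as the lemma asserts.
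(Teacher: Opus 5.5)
Your proof is correct and follows the paper's route: apply Lemma~\ref{lem:balanced tree decomp} with $s$ equal to one third of the target upper bound, then round both bounds using integrality of edge counts. The only difference is that the paper handles the case $\max\{\min\{\frac{b}{q}, b_{i-1}\},1\} = 1$ separately by taking single edges, whereas your uniform choice $s = x/3 \geq \frac{1}{3}$ covers that case too; your check that $3s \geq 1$ forces every tree that gets split to have at least two edges also makes explicit a hypothesis the lemma's statement leaves implicit (the lemma cannot hold for $s < \frac{1}{3}$).
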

\begin{proof}
    In the case where $\TargetSize \leq \TargetDiv$ or $\TargetSize_{\IterInd-1} \leq 1$, simply make each edge in $\OptTree$ an out-subtree. Then it holds that for each $\OptSubInd \in [1,\OptNumSub] \colon \NumEdges(\OptTree_{\OptSubInd}) = 1 = \lfloor \max\{ \min\{ \frac{\TargetSize}{\TargetDiv}, \TargetSize_{\IterInd-1} \}, 1 \} \rfloor$.
    
    In the case where $\TargetSize > \TargetDiv$ and $\TargetSize_{\IterInd-1} > 1$, let $\TreeWeight = \frac{1}{3} \min\{ \frac{\TargetSize}{\TargetDiv}, \TargetSize_{\IterInd-1} \} $, satisfying both $\TreeWeight > 0$ and $\NumEdges(\SetT) = \TargetSize \geq \TreeWeight$. It follows from Lemma~\ref{lem:balanced tree decomp} that $\OptTree$ can be partitioned into $\OptNumSub$ edge-disjoint out-subtrees $\OptTree_{1}, \dots, \OptTree_{\OptNumSub}$ such that for each $\OptSubInd \in [1,\OptNumSub]\colon \frac{1}{3} \min\{ \frac{\TargetSize}{\TargetDiv}, \TargetSize_{\IterInd-1} \} \leq \NumEdges(\OptTree_{\OptSubInd}) \leq \min\{ \frac{\TargetSize}{\TargetDiv}, \TargetSize_{\IterInd-1} \}$. Since $\NumEdges(\OptTree_{\OptSubInd})$ is an integer at least 1, these bounds simplify to $\lceil \frac{1}{3} \max\{ \min\{ \frac{\TargetSize}{\TargetDiv}, \TargetSize_{\IterInd-1} \}, 1 \} \rceil \leq \NumEdges(\OptTree_{\OptSubInd}) \leq \lfloor \max\{ \min\{ \frac{\TargetSize}{\TargetDiv}, \TargetSize_{\IterInd-1} \}, 1 \} \rfloor$. This proves the corollary in this case, concluding the proof.
\end{proof}

\paragraph*{Approximation Factor of \RA{}.}
\begin{lemma}
\label{lem:RG approx of sol}
    Let $(\Graph, \Funcf, \TargetSize, \Vertv)$ be an instance of \DRCSMa{}. For a given call to \RA{}, let $\RecLevel$ be the integer satisfying $\TargetDiv^{\RecLevel-1} < \TargetSize \leq \TargetDiv^{\RecLevel}$. Then \RA{} outputs an out-tree $\SetS$ satisfying
    \begin{align*} 
        \Funcf(\SetS) \geq \frac{1}{\RecLevel+1}\Funcf(\OptTree).
    \end{align*}
\end{lemma}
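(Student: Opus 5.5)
We prove the statement by induction on the recursion level $\RecLevel$, treating the sub-instance solved by each recursive call as an instance of \DRCSMa{} with objective $\Funcf(\,\cdot \mid \SetS_{\leq \IterInd-1})$. This objective is again non-negative monotone submodular, so the inductive hypothesis applies to it.

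\emph{Base case.} When $\TargetSize = 1$, the optimal out-tree is a single edge $(\Vertv,\Vertw)$. \RA{} picks the best out-neighbor, so it is exact, which is at least a $\frac{1}{\RecLevel+1}$-approximation. When $\TargetSize \leq \TargetDiv$ (level $\RecLevel = 1$ with $\TargetSize \geq 2$), every guessed size is $1$ and every recursive call is exact. The argument below then gives a factor of $\frac{1}{2}$.

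\emph{Recursive case.} Fix iteration $\IterInd$. Apply Corollary~\ref{cor:target tree decomp} with $\TargetSize_{\IterInd-1}$ to partition $\OptTree$ into out-subtrees $\OptTree_{\IterInd,1},\dots,\OptTree_{\IterInd,\OptNumSub(\IterInd)}$, each with size in $[\TargetSubsizeMin,\TargetSubsizeMax]$. Each subtree root lies within distance $\TargetSize-1$ of $\Vertv$, so it belongs to $\SetW$. Let $\BestOptSub$ be the piece of maximum marginal density $\Funcf(\cdot\mid\SetS_{\leq\IterInd-1})/\NumEdges(\cdot)$, with root $\Bestw$ and size $\BestSubsize \leq \TargetSize/\TargetDiv$. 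By submodularity and an averaging argument over the pieces, whose sizes sum to $\TargetSize$,
\[
\Funcf(\BestOptSub\mid\SetS_{\leq\IterInd-1}) \geq \frac{\BestSubsize}{\TargetSize}\bigl(\Funcf(\OptTree)-\Funcf(\SetS_{\leq\IterInd-1})\bigr).
\]
The recursive call with guesses $(\Bestw,\BestSubsize)$ has level at most $\RecLevel-1$, because $\BestSubsize \leq \TargetSize/\TargetDiv \leq \TargetDiv^{\RecLevel-1}$. By induction, that call returns a tree with marginal gain at least $\frac{1}{\RecLevel}\Funcf(\BestOptSub\mid\SetS_{\leq\IterInd-1})$. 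The greedy density rule together with monotonicity (adding the path $\SetP_\IterInd$ only helps) then gives
\[
\Funcf(\SetS_{\leq\IterInd}\mid\SetS_{\leq\IterInd-1}) \geq \frac{\TargetSubsize_\IterInd}{\RecLevel\,\TargetSize}\bigl(\Funcf(\OptTree)-\Funcf(\SetS_{\leq\IterInd-1})\bigr).
\]
The loop stops once $\sum_\IterInd \TargetSubsize_\IterInd = \TargetSize$. Chaining these inequalities yields
\[
\Funcf(\OptTree)-\Funcf(\SetS) \leq \prod_\IterInd\Bigl(1-\frac{\TargetSubsize_\IterInd}{\RecLevel\TargetSize}\Bigr)\Funcf(\OptTree) \leq e^{-1/\RecLevel}\Funcf(\OptTree).
\]
Hence $\Funcf(\SetS)\geq(1-e^{-1/\RecLevel})\Funcf(\OptTree)\geq\frac{1}{\RecLevel+1}\Funcf(\OptTree)$, using $1-e^{-x}\geq \frac{x}{1+x}$ at $x = 1/\RecLevel$.

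\emph{Main obstacle.} The delicate point is the averaging step. Corollary~\ref{cor:target tree decomp} guarantees a partition with sizes in $[\TargetSubsizeMin,\TargetSubsizeMax]$, and the argmax in Line~\ref{line:RG select subtree greedily} ranges over exactly those sizes, so the guess $(\Bestw,\BestSubsize)$ is indeed tried. I would verify two further details. First, the loop's $\TargetSubsize_\IterInd$ values sum to exactly $\TargetSize$: the cap $\TargetSubsizeMax\leq\TargetSize_{\IterInd-1}$ prevents overshoot, and $\TargetSubsizeMin \geq 1$ guarantees termination. Second, the recursive call's level is at most $\RecLevel-1$, using $\TargetDiv^{\RecLevel-1}<\TargetSize\leq\TargetDiv^{\RecLevel}$. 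A fallback is to track the level bound via $\lceil\log_\TargetDiv\TargetSubsize\rceil\le\RecLevel-1$ and to use the monotonicity of $\frac{1}{\RecLevel+1}$ in $\RecLevel$.
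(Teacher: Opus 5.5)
Your proposal is correct and follows the paper's proof essentially step for step. The shared steps are strong induction on $\ell$, the iteration-specific partition from Corollary~\ref{cor:target tree decomp}, the density-averaging bound, the recursive call of level at most $\ell-1$ giving a $\frac{1}{\ell}$ factor, and chaining the per-iteration contractions using $\sum_i c_i = b$. The only difference is the final calculation: you use $\prod_i\bigl(1-\frac{c_i}{\ell b}\bigr)\le e^{-1/\ell}$ and $1-e^{-x}\ge\frac{x}{1+x}$, whereas the paper equalizes the $c_i$ and applies $(1-x)^t<\frac{1}{1+tx}$; yours is slightly cleaner and gives the marginally stronger intermediate bound $1-e^{-1/\ell}$.
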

\begin{proof}
    We prove the lemma by induction on the recursion level $\RecLevel$.

    \paragraph*{Base case \texorpdfstring{($\RecLevel=0$)}{(\textit{l} = 0)}.}
    In this case, $\SetS$ is constructed in the Line~\ref{line:RG base case} block by connecting the root $\Vertv$ and a neighbor $\Bestwa$ maximizing $\Funcf( \{ \Vertv, \Bestwa \} )$. Further, $\OptTree$ has root $\Vertv$ and size $\TargetSize \leq \TargetDiv^{0} = 1$. Hence, $\Funcf(\SetS) = \Funcf(\OptTree)$, proving the lemma in the base case.

    \paragraph*{Inductive case \texorpdfstring{($\RecLevel \geq 1$)}{(\textit{l} >= 1)}.}
    In this case, $\SetS$ is constructed in the Line~\ref{line:RG inductive case} block. Assume for induction that the lemma holds for every recursion level below $\RecLevel$.

    Consider the $\IterInd$th iteration of the Line~\ref{line:RG update loop} loop. Then, by Corollary~\ref{cor:target tree decomp}, $\OptTree$ can be partitioned in a way that is specific to the $\IterInd$th iteration. That is, $\OptTree$ can be partitioned into subtrees $\OptTree_{\IterInd, 1}, \dots, \OptTree_{\IterInd, \OptNumSub(\IterInd)}$ such that for each $\OptSubInd \in [1, \OptNumSub(\IterInd)] \colon \TargetSubsizeMin \leq \NumEdges(\OptTree_{\IterInd,\OptSubInd}) \leq \TargetSubsizeMax$.
    
    Now let $\BestOptSub$ be the subtree amongst $\OptTree_{\IterInd, 1}, \dots, \OptTree_{\IterInd, \OptNumSub(\IterInd)}$ maximizing $\frac{1}{\BestSubsize} \Funcf( \BestOptSub \mid \SetS_{\leq \IterInd-1} )$, where $\BestSubsize = \NumEdges( \BestOptSub )$. By a standard analysis, the maximality of $\BestOptSub$, and submodularity, Ineq.~\eqref{eqn:greedy density target subtree} below holds.
    \begin{align}
        \frac{1}{\BestSubsize} \Funcf(\BestOptSub \mid \SetS_{\leq \IterInd-1}) \geq \frac{1}{\TargetSize} \left( \Funcf(\OptTree) - \Funcf(\SetS_{\leq \IterInd-1}) \right). \label{eqn:greedy density target subtree}
    \end{align}
    Recall that $\BestSubsize = \NumEdges( \BestOptSub)$ and let $\Bestw$ be the root of $\BestOptSub$. Observe that $\BestOptSub$ is a feasible out-tree for the instance $(\Graph, \Funcf( \, \cdot \mid \SetS_{\leq \IterInd-1}), \BestSubsize, \Bestw)$ of \DRCSMa{}. Also, there is a recursive call in the inner Line~\ref{line:RG guess w and c} loop which solves this instance since \RA{} must guess $\Vertw = \Bestw$ and $\TargetSubsize = \BestSubsize$.
    Moreover, this recursive call has recursion level at most $\RecLevel-1$ since $\TargetSubsize \leq \frac{\TargetSize}{\TargetDiv} \leq \TargetDiv^{\RecLevel-1}$.
    Therefore, by induction, this recursive call outputs the out-subtree $\SetS^{\Bestw, \BestSubsize}_{\IterInd}$ satisfying
    \begin{align}
        \Funcf( \SetS^{\Bestw, \BestSubsize}_{\IterInd} \mid \SetS_{\leq \IterInd - 1}) &\geq \frac{1}{\RecLevel} \Funcf( \BestOptSub \mid \SetS_{\leq \IterInd - 1}). \label{eqn:recur subtree approx factor}
    \end{align}
    In the $\IterInd$th Line~\ref{line:RG update sol} update, \RA{} actually adds the out-subtree with maximum marginal density along with its connecting path, namely $\SetS^{\Vertw_{\IterInd}, \TargetSubsize_{\IterInd}}_{\IterInd} \cup \SetP_{\IterInd}$. Hence, we prove Ineq.~\eqref{eqn:recur partial sol gap} for all iterations $\IterInd$ below; the 1st inequality holds by the monotonicity of $\Funcf$, the 2nd by the maximality of $\SetS^{\Vertw_{\IterInd}, \TargetSubsize_{\IterInd}}_{\IterInd}$, the 3rd by Ineq.~\eqref{eqn:recur subtree approx factor}, and the 4th by Ineq.~\eqref{eqn:greedy density target subtree}.
    \begin{align}
        \frac{1}{\TargetSubsize_{\IterInd}}  \Funcf(\SetS^{\Vertw_{\IterInd}, \TargetSubsize_{\IterInd}}_{\IterInd}  \cup \SetP_{\IterInd} \mid \SetS_{\leq \IterInd-1})
        &\geq \frac{1}{\TargetSubsize_{\IterInd}}  \Funcf(\SetS^{\Bestw, \BestSubsize}_{\IterInd} \mid \SetS_{\leq \IterInd-1}) \notag \\
        &\geq \frac{1}{\BestSubsize} \Funcf( \SetS^{\Bestw, \BestSubsize}_{\IterInd} \mid \SetS_{\leq \IterInd - 1}) \notag \\
        &\geq \frac{1}{\BestSubsize \RecLevel} \Funcf( \BestOptSub \mid \SetS_{\leq \IterInd - 1}) \notag \\
        &\geq \frac{ 1 }{\TargetSize \RecLevel} \left( \Funcf(\SetT) - \Funcf(\SetS_{\leq \IterInd-1}) \right), \notag \\ 
        \Funcf(\SetT) - \Funcf(\SetS_{\leq\IterInd}) &\leq \left( 1 - \frac{\TargetSubsize_{\IterInd}}{\TargetSize \RecLevel} \right) \left( \Funcf(\SetT) - \Funcf(\SetS_{\leq\IterInd-1}) \right). \label{eqn:recur partial sol gap}
    \end{align}

    Since \RA{} makes $\NumSubtrees$ updates to construct $\SetS = \SetS_{\leq \NumSubtrees}$, we chain Ineq.~\eqref{eqn:recur partial sol gap} $\NumSubtrees$ times below. The 2nd inequality holds as $\prod_{\IterInd=1}^{\NumSubtrees} ( 1 - \frac{\TargetSubsize_{\IterInd}}{\TargetSize \RecLevel} )$ is maximized when $\TargetSubsize_{\IterInd} = \frac{\TargetSize}{ \NumSubtrees}$ for all $\IterInd$, noting that $\sum_{\IterInd = 1}^{\NumSubtrees} \TargetSubsize_{\IterInd} \leq \TargetSize$ is enforced by the Line~\ref{line:RG update loop} loop; Ineq.~\eqref{eqn:recur final sol gap} then holds by the non-negativity of $\Funcf$.
    \begin{align}
        \Funcf(\SetT) - \Funcf(\SetS_{\leq\NumSubtrees}) &\leq \left( \prod_{\IterInd=1}^{\NumSubtrees} \left( 1 - \frac{\TargetSubsize_{\IterInd}}{\TargetSize \RecLevel} \right) \right) \left( \Funcf(\SetT) - \Funcf(\SetS_{\leq 0}) \right) \notag \\
        &\leq \left( 1 - \frac{1}{\NumSubtrees \RecLevel} \right)^{\NumSubtrees} \left( \Funcf(\SetT) - \Funcf(\SetS_{\leq 0}) \right) \notag \\
        &\leq \left( 1 - \frac{1}{\NumSubtrees \RecLevel} \right)^{\NumSubtrees} \Funcf( \SetT ). \label{eqn:recur final sol gap}
    \end{align}
    Finally, we upper bound $\left( 1 - \frac{1}{\RecLevel \NumSubtrees} \right)^{\NumSubtrees}$ in the right-hand side of Ineq.~\eqref{eqn:recur final sol gap} using the inequality $(1-x)^{\NumSubtrees} < \frac{1}{1+\NumSubtrees x}$, which holds for all $x \in (-1, 1)$ and $\NumSubtrees \geq 0$. From there, rearranging gives the lemma inequality.
    \begin{align}
        \Funcf(\SetT) - \Funcf(\SetS_{\leq\NumSubtrees}) &\leq \frac{1}{1+\frac{\NumSubtrees}{\RecLevel \NumSubtrees}} \Funcf(\SetT) \notag \\
        &= \frac{\RecLevel}{\RecLevel+1} \Funcf(\SetT), \notag \\
        \Funcf(\SetS_{\leq \NumSubtrees}) &\geq \frac{1}{\RecLevel+1} \Funcf(\SetT). \tag*{\qedhere}
    \end{align}
\end{proof}

\paragraph*{Violation Factor of \RA{}.}
Before bounding the size of the output out-tree in Lemma~\ref{lem:RG sol size}, we bound the number of updates $\NumSubtrees$ to the output out-tree. 
\begin{claim} \label{claim:recur num of updates UB}
    $\NumSubtrees \leq 3 \TargetDiv + \log_{3/2} ( \frac{\TargetSize}{\TargetDiv} )$.
\end{claim}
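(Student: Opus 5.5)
We bound the number of iterations $\NumSubtrees$ of the Line~\ref{line:RG update loop} loop by splitting them into two phases according to which term attains the minimum in $\min\{ \frac{\TargetSize}{\TargetDiv}, \TargetSize_{\IterInd-1} \}$. The key fact is that each chosen size-guess satisfies $\TargetSubsize_{\IterInd} \geq \TargetSubsizeMin = \lceil \frac{1}{3} \max\{ \min\{ \frac{\TargetSize}{\TargetDiv}, \TargetSize_{\IterInd-1}\}, 1 \} \rceil$. We also use that $\TargetSubsize_{\IterInd} \leq \TargetSubsizeMax \leq \TargetSize_{\IterInd-1}$, because $\TargetSize_{\IterInd-1} \geq 1$ is an integer while the loop runs. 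Hence the remaining budget $\TargetSize_{\IterInd}$ is a nonnegative integer that strictly decreases, and the loop terminates.

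\emph{Phase 1} consists of the iterations with $\TargetSize_{\IterInd-1} > \frac{\TargetSize}{\TargetDiv}$. In each such iteration, $\TargetSubsize_{\IterInd} \geq \frac{1}{3} \cdot \frac{\TargetSize}{\TargetDiv}$, using the $\max$ with $1$ harmlessly. The budget starts at $\TargetSize_0 = \TargetSize$ and phase-1 iterations only occur while it exceeds $\frac{\TargetSize}{\TargetDiv}$. So the number of such iterations is at most $\lceil (\TargetSize - \frac{\TargetSize}{\TargetDiv}) / \frac{\TargetSize}{3\TargetDiv} \rceil = \lceil 3(\TargetDiv - 1) \rceil \leq 3\TargetDiv - 2$, up to ceiling slack. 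A cruder bound of $3\TargetDiv$ would also suffice for the first term, but I will aim for the sharper count so that some slack remains for phase 2.

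\emph{Phase 2} consists of the iterations with $\TargetSize_{\IterInd-1} \leq \frac{\TargetSize}{\TargetDiv}$. Here $\TargetSubsize_{\IterInd} \geq \lceil \frac{1}{3}\TargetSize_{\IterInd-1} \rceil$, so $\TargetSize_{\IterInd} \leq \frac{2}{3} \TargetSize_{\IterInd-1}$ whenever $\TargetSize_{\IterInd-1} \geq 1$. At the start of phase 2, the budget is at most $\frac{\TargetSize}{\TargetDiv}$. After $j$ phase-2 iterations it is at most $(\frac{2}{3})^{j} \frac{\TargetSize}{\TargetDiv}$, and since it is an integer it reaches $0$ once this quantity drops below $1$. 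This gives at most $\lfloor \log_{3/2}(\frac{\TargetSize}{\TargetDiv}) \rfloor + 1$ phase-2 iterations. Note also that when $\TargetSize_{\IterInd-1} \leq 3$ the ceiling forces the guess to be at least $1$, and when $\TargetSize_{\IterInd-1} = 1$ the last step empties the budget, which is consistent with this count.

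Adding the two phases gives $\NumSubtrees \leq 3\TargetDiv + \log_{3/2}(\frac{\TargetSize}{\TargetDiv})$. The main obstacle is the additive constants from the ceilings and from the final iteration: the slack of $2$ from phase 1 must absorb the $+1$ from phase 2. I will verify this boundary accounting carefully, including the case $\TargetSize \leq \TargetDiv$, where phase 1 is empty and $\log_{3/2}(\frac{\TargetSize}{\TargetDiv}) \leq 0$. In that case every guess is $1$, so $\NumSubtrees = \TargetSize \leq \TargetDiv \leq 3\TargetDiv + \log_{3/2}(\frac{\TargetSize}{\TargetDiv})$. This last inequality needs checking, but it holds since $x \geq 1$ implies $3x - \log_{3/2}x \geq x$; I would handle this case separately.
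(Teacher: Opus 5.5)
Your proposal is correct and is essentially the paper's argument: the same split into iterations with $b_{i-1} > b/q$ (each removes at least $b/(3q)$, so there are fewer than $3q-2$ of them) and iterations with $b_{i-1}\le b/q$ (the budget shrinks by a factor $2/3$ each time). Your phase-2 count $\lfloor \log_{3/2}(b/q)\rfloor+1$ is one sharper than the paper's $\log_{3/2}(b/q)+2$, and your separate treatment of $b\le q$ actually patches a small hole in the paper, whose inequality $(2/3)^{t_2-2}\,b/q\ge 1$ fails when $t_2=0$ and $b/q<4/9$. One slip: for $b<q$ it is phase~2, not phase~1, that is empty, since every iteration has $b_{i-1}\ge 1> b/q$; this is harmless because your direct count $t=b$ does not use the phase labels.
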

\begin{proof}
    Let $\NumSubtrees_{1}$ be the number of iterations of the Line~\ref{line:RG update loop} loop where $\TargetSize \geq \TargetSize_{\IterInd-1} > \frac{\TargetSize}{\TargetDiv}$ holds, and $\NumSubtrees_{2}$ be the number of remaining iterations where $\frac{\TargetSize}{\TargetDiv} \geq \TargetSize_{\IterInd-1} > 0$ holds.
    
    In each of the first $\NumSubtrees_{1}$ iterations, \RA{} decrements $\TargetSize_{\IterInd-1}$ by at least $\TargetSubsizeMin \geq \frac{\TargetSize}{3 \TargetDiv}$ to give $\TargetSize_{\IterInd}$. Thus, it must hold that $\TargetSize - (\NumSubtrees_{1}-1) \frac{\TargetSize}{3 \TargetDiv} > \frac{\TargetSize}{\TargetDiv}$. Solving for $\NumSubtrees_{1}$ gives $\NumSubtrees_{1} < 3\TargetDiv - 2$.

    In each of the remaining $\NumSubtrees_{2}$ iterations, \RA{} decrements $\TargetSize_{\IterInd-1}$ by at least $\TargetSubsizeMin \geq \frac{\TargetSize_{\IterInd-1}}{3}$  to give $\TargetSize_{\IterInd}$, with the last iteration possibly having $\TargetSize_{\IterInd-1} = 1$. Thus, it must hold that $\left(\frac{2}{3}\right)^{(\NumSubtrees_{2} - 2)} \frac{\TargetSize}{\TargetDiv} \geq 1$. Solving for $\NumSubtrees_{2}$ gives $\NumSubtrees_{2} \leq \log_{3/2}(\frac{\TargetSize}{\TargetDiv}) + 2$.
    
    We now have $\NumSubtrees = \NumSubtrees_{1} + \NumSubtrees_{2} \leq 3 \TargetDiv + \log_{3/2} ( \frac{\TargetSize}{\TargetDiv} )$, proving the claim.
\end{proof}

\begin{lemma} \label{lem:RG sol size}
    Let $(\Graph, \Funcf, \TargetSize, \Vertv)$ be an instance of \DRCSMa{}. Let $\TargetDiv > 1$ be the size divisor given to \RA{}, and $\RecLevel$ be the integer satisfying $\TargetDiv^{\RecLevel-1} < \TargetSize \leq \TargetDiv^{\RecLevel}$. Then \RA{} outputs an out-tree $\SetS$ satisfying
    \begin{align} 
        \NumEdges(\SetS) \leq \left(1 + 3\RecLevel\TargetDiv + \RecLevel \log_{\frac{3}{2}} \left( \frac{\TargetSize}{\TargetDiv} \right) \right) \TargetSize. \notag
    \end{align}
\end{lemma}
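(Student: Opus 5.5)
The proof goes by induction on the recursion level $\RecLevel$, in the same way as Lemma~\ref{lem:RG approx of sol}. For brevity, write $\VioFactor_{\RecLevel}(\TargetSize) \coloneqq 1 + 3\RecLevel\TargetDiv + \RecLevel \log_{3/2}(\frac{\TargetSize}{\TargetDiv})$.

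\emph{Base case ($\RecLevel = 0$, i.e.\ $\TargetSize = 1$).} The output is the single edge $(\Vertv, \Bestwa)$, so $\NumEdges(\SetS) = 1 \leq \TargetSize$. The claimed bound reads $(1 + \RecLevel\cdot(\ldots))\TargetSize = \TargetSize$ when $\RecLevel = 0$, so it holds.

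\emph{Inductive case ($\RecLevel \geq 1$).} The output is the union over $\IterInd \in [1,\NumSubtrees]$ of $\SetS^{\Vertw_\IterInd,\TargetSubsize_\IterInd}_\IterInd \cup \SetP_\IterInd$, so
\[
\NumEdges(\SetS) \leq \sum_{\IterInd=1}^{\NumSubtrees} \NumEdges(\SetP_\IterInd) + \sum_{\IterInd=1}^{\NumSubtrees} \NumEdges(\SetS^{\Vertw_\IterInd,\TargetSubsize_\IterInd}_\IterInd).
\]
I bound the two sums separately.

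\emph{Path term.} Each $\Vertw_\IterInd \in \SetW$ satisfies $\Dist(\Vertv,\Vertw_\IterInd) \leq \TargetSize - 1$, so each path has at most $\TargetSize$ edges. By Claim~\ref{claim:recur num of updates UB}, the path term is at most $\NumSubtrees \TargetSize \leq (3\TargetDiv + \log_{3/2}(\frac{\TargetSize}{\TargetDiv}))\TargetSize$.

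\emph{Subtree term.} Each recursive call has size constraint $\TargetSubsize_\IterInd \leq \frac{\TargetSize}{\TargetDiv} \leq \TargetDiv^{\RecLevel-1}$, so its recursion level $\RecLevel_\IterInd$ is at most $\RecLevel - 1$. By the induction hypothesis,
\[
\NumEdges(\SetS^{\Vertw_\IterInd,\TargetSubsize_\IterInd}_\IterInd) \leq \VioFactor_{\RecLevel_\IterInd}(\TargetSubsize_\IterInd)\, \TargetSubsize_\IterInd .
\]
The bound $\VioFactor_{\RecLevel}(\TargetSize)$ is monotone in both $\RecLevel$ and $\TargetSize$, and $\TargetSubsize_\IterInd \leq \TargetSize$. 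Hence each factor is at most $\VioFactor_{\RecLevel-1}(\TargetSize) = 1 + 3(\RecLevel-1)\TargetDiv + (\RecLevel-1)\log_{3/2}(\frac{\TargetSize}{\TargetDiv})$. The loop enforces $\sum_\IterInd \TargetSubsize_\IterInd = \TargetSize$, so the subtree term is at most $\VioFactor_{\RecLevel-1}(\TargetSize)\,\TargetSize$.

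\emph{Combining.} Adding the two terms gives
\[
\NumEdges(\SetS) \leq \left(1 + 3\RecLevel\TargetDiv + \RecLevel\log_{3/2}\left(\frac{\TargetSize}{\TargetDiv}\right)\right)\TargetSize,
\]
which is exactly the claimed bound.

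The main obstacle is the monotonicity step when $\TargetSubsize_\IterInd < \TargetDiv$. Then $\log_{3/2}(\frac{\TargetSubsize_\IterInd}{\TargetDiv})$ is negative, so the inductive bound may not be stated correctly as written. The cleanest fix is to prove the slightly stronger statement with $\log_{3/2}^{+}$, i.e.\ $\max\{0,\log_{3/2}(\cdot)\}$. Claim~\ref{claim:recur num of updates UB} also needs this form: when $\TargetSize \leq \TargetDiv$, the $\NumSubtrees_2$ iterations still number at least one. It then suffices to check that the additive constants from Claim~\ref{claim:recur num of updates UB} (the $-2$ and $+2$) cancel as used there. At level $1$ the recursive calls are base cases, with factor $1$, so the top-level bound holds directly.
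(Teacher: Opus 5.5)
Your proof is correct and follows essentially the paper's argument. Both induct on the level: the connecting paths contribute at most $tb$, bounded via Claim~\ref{claim:recur num of updates UB}, and the recursive subtrees contribute at most the level-$(\ell-1)$ factor times $\sum_i c_i = b$. Inducting directly on the closed-form factor, as you do, is slightly cleaner than the paper's $(1+\ell t)b$ induction, which glosses over the fact that the subcalls have their own iteration counts.

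The ``obstacle'' in your last paragraph is not real, and no $\log^{+}$ fix is needed; that change would in any case make the bound weaker, not stronger. At fixed level $\ell_i \ge 0$ the factor is nondecreasing in $c_i$, so you may first raise $c_i$ to $b$. Replacing $\ell_i$ by $\ell-1$ can then only increase the factor, because $b>q$ when $\ell\ge 2$, and for $\ell=1$ every subcall has factor exactly $1$. Also, your remark about the Claim is off: for $b<q$ there are no second-phase iterations at all, and the Claim holds there simply because $t=b\le 3q+\log_{3/2}(b/q)$.
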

\begin{proof}
    Let $\SolSizeFunc(\TargetSize)$ be the function that gives the worst-case number of edges in an out-tree $\SetS$ output by \RA{}, given a size constraint of $\TargetSize$.

    By the construction of $\SetS$, the recurrence relation below holds for $\SolSizeFunc(\TargetSize)$. We explain this recurrence below. Recall that $\TargetSubsize_{\IterInd}$ denotes the size constraint used to recursively construct the $\IterInd$th out-subtree added to $\SetS$.
    \begin{itemize}
        \item The $\TargetSize = 1$ case of Eq.~\eqref{eqn:RG sol size rec} holds since, for $\TargetSize=1$, \RA{} finds a solution of size 1 (Line~\ref{line:RG base case update sol}).
        
        \item The $\TargetSize \geq 2$ case of Eq.~\eqref{eqn:RG sol size rec} holds since, for $\TargetSize \geq 2$, \RA{} finds a solution by adding $\NumSubtrees$ subtrees and connecting paths (Line~\ref{line:RG update sol}). The $\IterInd$th added subtree has size at most $ \SolSizeFunc(\TargetSubsize_{\IterInd})$ (Line~\ref{line:RG find w tree}). Moreover, the path $\SetP_{\IterInd}$ connecting $\Vertv$ to the $\IterInd$th out-subtree has length at most $\TargetSize$ (Line~\ref{line:RG set of rec centers}).
        
        \item Constraint~\eqref{eqn:RG num conn paths} holds by Claim~\ref{claim:recur num of updates UB} and the fact that $\NumSubtrees \leq \TargetSize$.
        
        \item Constraint~\eqref{eqn:RG target subsize} holds by the assignments of $\TargetSubsizeMin$ and $\TargetSubsizeMax$ in Lines~\ref{line:RG assign target subsize min} and \ref{line:RG assign target subsize max}, and since, for each $\IterInd$, $\TargetSubsizeMin \leq \TargetSubsize_{\IterInd} \leq \TargetSubsizeMax$.
        \item Constraint~\eqref{eqn:RG sum of target subsizes} holds since the Line~\ref{line:RG update loop} loop only terminates once $\TargetSubsize_{1} + \dots + \TargetSubsize_{\NumSubtrees} = \TargetSize$.
    \end{itemize}
    \begin{align}
        \SolSizeFunc(\TargetSize) &= \begin{cases}
        \begin{aligned}
            &1, &&\TargetSize = 1 \\
            &\sum_{\IterInd=1}^{\NumSubtrees} ( \SolSizeFunc( \TargetSubsize_{\IterInd} ) + \TargetSize ), &&\TargetSize \geq 2
        \end{aligned} \ ,
        \end{cases} \label{eqn:RG sol size rec} \\
        \text{where} \notag \\
        1 \leq \NumSubtrees &\leq \min\left\{ 3 \TargetDiv + \log_{\frac{3}{2}} \left( \frac{\TargetSize}{\TargetDiv} \right), \TargetSize \right\}, \label{eqn:RG num conn paths} \\
        \forall \IterInd \in [1, \NumSubtrees] \colon 1 \leq \TargetSubsize_{\IterInd} &\leq \max \left\{ 1, \frac{\TargetSize}{\TargetDiv} \right\}, \label{eqn:RG target subsize} \\
        \sum_{\IterInd=1}^{\NumSubtrees} \TargetSubsize_{\IterInd} &= \TargetSize. \label{eqn:RG sum of target subsizes}
    \end{align}     
    
    Now we prove the lemma by upper bounding $\SolSizeFunc(\TargetSize)$ and, therefore, $\NumEdges(\SetS)$. We prove Ineq.~\eqref{eqn:RG sol size UB} below by induction on the recursion level, which is the integer $\RecLevel$ satisfying $\TargetDiv^{\RecLevel-1} < \TargetSize \leq \TargetDiv^{\RecLevel}$. Note that Ineq.~\eqref{eqn:RG sol size UB} clearly holds when $\RecLevel = 0$ since $\TargetSize \leq \TargetDiv^{\RecLevel} = 1$ and this means that $\SolSizeFunc(\TargetSize) = 1$ according to Eq.~\eqref{eqn:RG sol size rec}.
    \begin{align}
        \SolSizeFunc(\TargetSize) &\leq (1 + \RecLevel \NumSubtrees) \TargetSize. \label{eqn:RG sol size UB}
    \end{align}

    \paragraph*{Base Case \texorpdfstring{($\RecLevel=1$)}{(\textit{l} = 1)}.}
    In this case, $1 < \TargetSize \leq \TargetDiv$, so $\frac{\TargetSize}{\TargetDiv} \leq 1$. Then, by Constraint~\eqref{eqn:RG target subsize}, it holds that for all iterations $\IterInd \in [1, \NumSubtrees] \colon \TargetSubsize_{\IterInd} = 1 $. Therefore, by the $\TargetSize = 1$ case of Eq.~\eqref{eqn:RG sol size rec}, it holds that $\SolSizeFunc(\TargetSubsize_{\IterInd}) = 1$. Thus, we simplify the $\TargetSize \geq 2$ case of Eq.~\eqref{eqn:RG sol size rec} below; the inequality holds by the bound $\NumSubtrees \leq \TargetSize$ as implied by Constraint~\eqref{eqn:RG num conn paths}.
    \begin{align}
        \SolSizeFunc(\TargetSize) &= \sum_{\IterInd=1}^{\NumSubtrees} \SolSizeFunc(\TargetSubsize_{\IterInd}) + \NumSubtrees \TargetSize
        = \NumSubtrees + \NumSubtrees \TargetSize
        \leq \TargetSize + \NumSubtrees \TargetSize
        = (1 + \NumSubtrees) \TargetSize. \notag
    \end{align}

    \paragraph*{Inductive Case \texorpdfstring{($\RecLevel \geq 2$)}{(\textit{l} >= 2)}.} Assume for induction that Ineq.~\eqref{eqn:RG sol size UB} holds for every recursion level below $\RecLevel$. By Constraint~\eqref{eqn:RG target subsize} and $\TargetSize \leq \TargetDiv^{\RecLevel}$, we have that for all iterations $\IterInd \in [1, \NumSubtrees] \colon 1 \leq \TargetSubsize_{\IterInd} \leq \frac{\TargetSize}{\TargetDiv} \leq \TargetDiv^{\RecLevel-1} $. Thus, we can bound each $\SolSizeFunc(\TargetSubsize_{\IterInd})$ by induction. Using this, we simplify the $\TargetSize \geq 2$ case of Eq.~\eqref{eqn:RG sol size rec}; the 3rd equality holds by Constraint~\eqref{eqn:RG sum of target subsizes} since it requires that $\sum_{\IterInd=1}^{\NumSubtrees} \TargetSubsize_{\IterInd} = \TargetSize$.
    \begin{align}
        \SolSizeFunc(\TargetSize) &= \sum_{\IterInd=1}^{\NumSubtrees} \SolSizeFunc(\TargetSubsize_{\IterInd}) + \NumSubtrees \TargetSize
        \leq \sum_{\IterInd=1}^{\NumSubtrees} (1 + (\RecLevel-1) \NumSubtrees ) \TargetSubsize_{\IterInd} + \NumSubtrees \TargetSize \notag \\
        &= (1 + (\RecLevel-1) \NumSubtrees ) \sum_{\IterInd=1}^{\NumSubtrees}  \TargetSubsize_{\IterInd} + \NumSubtrees \TargetSize
        = (1 + (\RecLevel-1) \NumSubtrees) \TargetSize + \NumSubtrees \TargetSize \notag \\
        &= (1 + \RecLevel \NumSubtrees) \TargetSize. \notag
    \end{align}
    
    We have proved Ineq.~\eqref{eqn:RG sol size UB} in both the base case and inductive case. Finally, substituting $\NumSubtrees \leq 3 \TargetDiv + \log_{3/2} \left( \frac{\TargetSize}{\TargetDiv} \right)$ from Constraint~\eqref{eqn:RG num conn paths} into Ineq.~\eqref{eqn:RG sol size UB} proves the lemma.
\end{proof}

\paragraph*{Running Time of \RA{}.}
\begin{lemma} \label{lem:RG running time}
    At recursion level $\RecLevel$, \RA{} runs in time $ O(\NumVertices^{\RecLevel+1} \TargetSize^{2\RecLevel+2})$.
\end{lemma}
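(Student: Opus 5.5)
We prove the bound by induction on the recursion level $\RecLevel$, counting value-oracle queries. Let $R(\RecLevel)$ be the worst-case number of queries made by a call to \RA{} at level $\RecLevel$ with size constraint $\TargetSize$. The base case $\RecLevel = 0$ (i.e., $\TargetSize = 1$) is immediate: Line~\ref{line:RG base case select w} makes one query per out-neighbor of $\Vertv$, so $R(0) = O(\NumVertices)$. This is within $O(\NumVertices^{1} \TargetSize^{2})$.

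For the inductive step ($\RecLevel \geq 1$), we count the work done in one call.
\begin{itemize}
\item The number of iterations of the Line~\ref{line:RG update loop} loop is $\NumSubtrees \leq \TargetSize$. This holds because each iteration decreases $\TargetSize_{\IterInd}$ by $\TargetSubsize_{\IterInd} \geq 1$; it is also recorded in Constraint~\eqref{eqn:RG num conn paths}.
\item In each iteration, the inner Line~\ref{line:RG guess w and c} loop ranges over $\Vertw \in \SetW$, giving at most $\NumVertices$ choices, and over $\TargetSubsize \in [\TargetSubsizeMin, \TargetSubsizeMax]$, giving at most $\TargetSize$ choices. So there are at most $\NumVertices \TargetSize$ recursive calls per iteration.
\item Each recursive call has size constraint $\TargetSubsize \leq \max\{1, \frac{\TargetSize}{\TargetDiv}\}$. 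Hence its level is at most $\RecLevel - 1$, as argued in Lemma~\ref{lem:RG approx of sol}, and its size constraint is at most $\TargetSize$.
\item The greedy selection in Line~\ref{line:RG select subtree greedily} adds $O(\NumVertices \TargetSize)$ further queries per iteration.
\end{itemize}

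Combining these counts gives the recurrence
\[
R(\RecLevel) \leq \TargetSize \cdot \NumVertices \TargetSize \cdot \bigl( R(\RecLevel-1) + O(1) \bigr),
\]
where, by induction, $R(\RecLevel-1) = O(\NumVertices^{\RecLevel} \TargetSize^{2\RecLevel})$. Here we use that the level-$(\RecLevel-1)$ bound is monotone in the size constraint, so it holds with $\TargetSubsize \leq \TargetSize$ replaced by $\TargetSize$. The recurrence then gives $R(\RecLevel) = O(\NumVertices^{\RecLevel+1} \TargetSize^{2\RecLevel+2})$, as claimed. Shortest-path computations (the BFS for $\SetW$ and the paths $\SetP_{\IterInd}$) make no oracle queries, so they do not affect the count, consistent with the paper's convention.

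The main technical point is that the constants hidden in the $O(\cdot)$ must not compound with depth. To avoid this, we prove an explicit bound such as $R(\RecLevel) \leq C\,\NumVertices^{\RecLevel+1}\TargetSize^{2\RecLevel+2}$ with a fixed constant $C$. This closes because
\[
\TargetSize^{2}\NumVertices\bigl(C\,\NumVertices^{\RecLevel}\TargetSize^{2\RecLevel} + 1\bigr) \leq C\,\NumVertices^{\RecLevel+1}\TargetSize^{2\RecLevel+2} + \NumVertices\TargetSize^{2}.
\]
The additive term $\NumVertices\TargetSize^{2}$ is absorbed by using a slightly stronger hypothesis, for example $R(\RecLevel) \leq 2\,\NumVertices^{\RecLevel+1}\TargetSize^{2\RecLevel+2} - 1$ (for $\NumVertices, \TargetSize \geq 2$), or it is simply tolerated given that $\RecLevel$ is treated as a constant.

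A separate subtlety is the edge case $\TargetSize \leq \TargetDiv$ with $\TargetSize \geq 2$, where the children are base cases. Here the recurrence uses $R(0)$ directly and the bound still holds.
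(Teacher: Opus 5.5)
Your proof is correct and follows the paper's own route. Both argue by induction on $\ell$: at most $b$ iterations, and in each iteration at most $nb$ recursive calls (each at level at most $\ell-1$ with size constraint at most $b$) plus $nb$ selection queries, giving $b\cdot nb\,\bigl(O(n^{\ell}b^{2\ell})+1\bigr)=O(n^{\ell+1}b^{2\ell+2})$. Your explicit tracking of constants is a refinement the paper skips; done naively the constant grows as $C_\ell=C_{\ell-1}+1$, which is harmless for constant $\ell$. One small correction: your sample hypothesis $R\le 2n^{\ell+1}b^{2\ell+2}-1$ closes only if you use $c\le b-1$, which holds since $\max\{1,b/q\}<b$ when $b\geq 2$. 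With the crude bound $c\le b$ you get exactly $2n^{\ell+1}b^{2\ell+2}$, which misses the target by one.
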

\begin{proof}
    We prove the lemma by induction on $\RecLevel$. Consider the base case $\RecLevel = 0$. It holds that $\TargetSize \leq \TargetDiv^\RecLevel = 1$, so the Line~\ref{line:RG base case} block is run. In this block, there are at most $\NumVertices$ value oracle queries. Thus, the total running time is $\NumVertices = O(\NumVertices^{\RecLevel+1} \TargetSize^{2\RecLevel+2})$.
    
    Now consider the inductive case $\RecLevel \geq 1$. In each iteration of the Line~\ref{line:RG update loop} loop, there are $\NumVertices \TargetSize$ recursive calls to \RA{} as well as $\NumVertices \TargetSize$ value oracle queries in Line~\ref{line:RG select subtree greedily}. Each recursive call has recursion level $\RecLevel-1$, so by induction has running time $ O(\NumVertices^{\RecLevel} (\frac{\TargetSize}{\TargetDiv})^{2\RecLevel}) = O(\NumVertices^{\RecLevel} \TargetSize^{2\RecLevel})$. Thus, the running time of each iteration of the Line~\ref{line:RG update loop} loop is $\NumVertices \TargetSize (O(\NumVertices^{\RecLevel} \TargetSize^{2\RecLevel}) + 1) = O(\NumVertices^{\RecLevel+1} \TargetSize^{2\RecLevel+1}) + \NumVertices \TargetSize$. Lastly, there are at most $\TargetSize$ iterations of the Line~\ref{line:RG update loop} loop. Thus, the total running time is $\TargetSize (O(\NumVertices^{\RecLevel+1} \TargetSize^{2\RecLevel+1}) + \NumVertices \TargetSize) = O(\NumVertices^{\RecLevel+1} \TargetSize^{2\RecLevel+2}) + \NumVertices \TargetSize^2 = O(\NumVertices^{\RecLevel+1} \TargetSize^{2\RecLevel+2})$.
\end{proof}

\subsection{Analysis of \texorpdfstring{\RAd{}}{\RA-\textit{d}}} \label{sec:RAd analysis}
Recall that \RAd{} simply takes an instance $(\Graph, \Funcf, \Card, \Vertv)$ of \DRCSMa{} and outputs the solution from \RA$(\Graph, \Funcf, \Card, \Vertv, \Card^{\frac{1}{\RecDepth}})$. We state the performance guarantees of \RAd{} in Theorem~\ref{thm:RAd}, which we prove after proving a few simple claims.

\begin{theorem}\label{thm:RAd}
    Let $(\Graph, \Funcf, \Card, \Vertv)$ be an instance of \DRCSMa{} and $\RecDepth \geq 1$ be an integer. Then \RAd{} is a $(\frac{1}{\RecDepth+1}, (\RecDepth+1)^2 \Card^{\frac{1}{\RecDepth}})$-approximation algorithm for \DRCSMa{} that runs in time $O(\NumVertices^{\RecDepth+1} \Card^{2\RecDepth+2})$.
\end{theorem}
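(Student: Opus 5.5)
The plan is to instantiate Theorem~\ref{thm:RA} with $\TargetSize = \Card$ and $\TargetDiv = \Card^{\frac{1}{\RecDepth}}$, and then simplify the resulting bounds. The first step is to pin down the recursion level $\RecLevel$ of the top-level call. Since $\Card \leq \TargetDiv^{\RecDepth} = \Card$, we get $\RecLevel \leq \RecDepth$. If $\Card = 1$, then $\TargetDiv = 1$ violates the requirement $\TargetDiv > 1$. I will handle this degenerate case separately: the base case of \RA{} already returns an optimal out-tree of size $1$, so all bounds hold trivially. For $\Card \geq 2$, we have $\TargetDiv > 1$ and the recursion level satisfies $\RecLevel \leq \RecDepth$. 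Potentially $\RecLevel < \RecDepth$ if floating-point-type issues arise, but all the bounds below are monotone in $\RecLevel$, so the upper bound $\RecLevel \le \RecDepth$ suffices.

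\emph{Approximation factor.} By Lemma~\ref{lem:RG approx of sol}, $\Funcf(\SetS) \geq \frac{1}{\RecLevel+1}\Funcf(\OptTree) \geq \frac{1}{\RecDepth+1}\Funcf(\OptTree)$, using $\RecLevel \leq \RecDepth$.

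\emph{Violation factor.} By Lemma~\ref{lem:RG sol size},
\[
\NumEdges(\SetS) \leq \Bigl(1 + 3\RecDepth \Card^{1/\RecDepth} + \RecDepth \log_{3/2}\bigl(\Card^{(\RecDepth-1)/\RecDepth}\bigr)\Bigr)\Card .
\]
The main obstacle is to show that this is at most $(\RecDepth+1)^2 \Card^{1/\RecDepth}\,\Card$. Set $x = \Card^{1/\RecDepth} > 1$. Then the logarithmic term becomes $\RecDepth(\RecDepth-1)\log_{3/2} x$. I would bound it using an elementary inequality of the form $\log_{3/2} x \leq c\,x$ for a suitable constant $c$. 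The maximum of $\ln x / x$ is $1/e$, so $\log_{3/2} x \leq \frac{x}{e\ln(3/2)} \approx 0.91x < x$. This gives
\[
1 + 3\RecDepth x + \RecDepth(\RecDepth-1)x \leq x + 3\RecDepth x + \RecDepth^2 x - \RecDepth x = (\RecDepth^2 + 2\RecDepth + 1)x = (\RecDepth+1)^2 x ,
\]
where we used $1 \le x$. If the constant were slightly off, I would fall back on the slack between $3\RecDepth$ and $2\RecDepth$. Here the arithmetic works out exactly, so this step should go through.

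\emph{Running time.} Lemma~\ref{lem:RG running time} gives time $O(\NumVertices^{\RecLevel+1}\Card^{2\RecLevel+2})$, which is $O(\NumVertices^{\RecDepth+1}\Card^{2\RecDepth+2})$ since $\RecLevel \leq \RecDepth$ and $\NumVertices, \Card \geq 1$. Combining the three parts proves Theorem~\ref{thm:RAd}.
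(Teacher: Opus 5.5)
Your proposal is correct and follows the paper's proof essentially line for line. You instantiate Theorem~\ref{thm:RA} with $\TargetSize=\Card$ and $\TargetDiv=\Card^{1/\RecDepth}$, then absorb the logarithmic term using $\ln y\le y/e$ at $y=\Card^{1/\RecDepth}$ together with $e\ln\frac{3}{2}>1$, as in Claims~\ref{claim:ln b UB} and~\ref{claim:RG final sol size}. Your separate handling of $\Card=1$ (where $\TargetDiv=1$ and the recursion level is undefined) is extra care that Claim~\ref{claim:rec level of RG call} glosses over. For $\Card\ge 2$ one has $\RecLevel=\RecDepth$ exactly, so your hedge via $\RecLevel\le\RecDepth$ and monotonicity is unnecessary but harmless.
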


We first prove Claim~\ref{claim:rec level of RG call}, which concerns the recursion level of the initial call to \RA{}. Recall that the recursion level (of a given \RA{} call) is the integer $\RecLevel$ satisfying $\TargetDiv^{\RecLevel-1} < \TargetSize \leq \TargetDiv^{\RecLevel}$.
Afterwards, we prove Claims~\ref{claim:ln b UB} and \ref{claim:RG final sol size}, which are relevant to bounding the size of the tree output by \RA{}.

\begin{claim} \label{claim:rec level of RG call}
    Suppose \RA{} is called with size constraint $\TargetSize = \Card$ and size divisor $\TargetDiv = \Card^{\frac{1}{\RecDepth}}$ in its input. Then its recursion level is $\RecLevel = \RecDepth$.
\end{claim}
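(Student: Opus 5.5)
This claim follows directly from the definition of the recursion level. Substitute $\TargetDiv = \Card^{\frac{1}{\RecDepth}}$ into the defining condition $\TargetDiv^{\RecLevel-1} < \TargetSize \leq \TargetDiv^{\RecLevel}$ with $\TargetSize = \Card$. The condition becomes $\Card^{\frac{\RecLevel-1}{\RecDepth}} < \Card \leq \Card^{\frac{\RecLevel}{\RecDepth}}$. We then check that $\RecLevel = \RecDepth$ is the unique integer satisfying it.

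First, for the upper inequality, setting $\RecLevel = \RecDepth$ gives $\TargetDiv^{\RecDepth} = (\Card^{\frac{1}{\RecDepth}})^{\RecDepth} = \Card$. So $\Card \leq \TargetDiv^{\RecDepth}$ holds with equality. Second, for the lower inequality, we need $\TargetDiv^{\RecDepth-1} = \Card^{\frac{\RecDepth-1}{\RecDepth}} = \Card \cdot \Card^{-\frac{1}{\RecDepth}} < \Card$. This is equivalent to $\Card^{\frac{1}{\RecDepth}} > 1$, i.e., to $\TargetDiv > 1$. That is exactly the input requirement of \RA{}, and it holds whenever $\Card \geq 2$.

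Uniqueness follows because $\TargetDiv > 1$ makes $\RecLevel \mapsto \TargetDiv^{\RecLevel}$ strictly increasing. Hence the half-open intervals $(\TargetDiv^{\RecLevel-1}, \TargetDiv^{\RecLevel}]$ are pairwise disjoint, so at most one integer $\RecLevel$ satisfies the condition.

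The only genuine obstacle is the degenerate case $\Card = 1$. There $\TargetDiv = 1$, which violates the precondition $\TargetDiv > 1$, and the recursion level is not well defined. I would handle it by noting that for $\Card = 1$ \RA{} immediately takes the base case (Line~\ref{line:RG base case}), so the recursion structure is irrelevant. The claim is therefore only needed, and should be read, for $\Card \geq 2$. Alternatively, one can treat $\Card = 1$ separately in the proof of Theorem~\ref{thm:RAd}, where the base case returns an exact solution.
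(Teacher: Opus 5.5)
Your proof is correct and uses the same substitution as the paper, which simply observes $q^{d-1} = k^{(d-1)/d} \leq k = q^{d}$. Your version is slightly more careful: the paper writes only the non-strict inequality on the left, whereas the definition of recursion level requires $q^{d-1} < k$, which you correctly derive from $q > 1$ (i.e., $k \geq 2$); you also justify uniqueness and flag the degenerate case $k = 1$, where no recursion level exists and \textsc{RecApprox} goes straight to its base case.
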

\begin{proof}
    Since $\TargetSize = \Card$ and $\TargetDiv = \Card^{\frac{1}{\RecDepth}}$, we have that $\TargetDiv^{\RecDepth - 1} = \Card^{\frac{\RecDepth-1}{\RecDepth}} \leq \TargetSize = \Card^{\frac{\RecDepth}{\RecDepth}} = \TargetDiv^{\RecDepth}$.
\end{proof}

\begin{claim} \label{claim:ln b UB}
    For all $\RecDepth > 0$ and $\Card > 0 \colon \ln(\Card) \leq \RecDepth \Card^{\frac{1}{\RecDepth}} / e $.
\end{claim}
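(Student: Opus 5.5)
The plan is to reduce the claim to a one-variable calculus fact. Substitute $x = \Card^{\frac{1}{\RecDepth}} > 0$, so that $\ln(\Card) = \RecDepth \ln x$. After dividing by $\RecDepth > 0$, the claimed inequality becomes
\begin{align*}
    \ln x \leq \frac{x}{e} \quad \text{for all } x > 0,
\end{align*}
and this no longer depends on $\RecDepth$ or $\Card$.

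To prove the reduced inequality, consider $h(x) = \frac{x}{e} - \ln x$ on $(0, \infty)$. Its derivative is $h'(x) = \frac{1}{e} - \frac{1}{x}$, which is negative for $x < e$ and positive for $x > e$, so $h$ attains its global minimum at $x = e$, where $h(e) = 1 - 1 = 0$. Hence $h(x) \geq 0$ for all $x > 0$. Equivalently, one can apply the tangent-line bound $\ln y \leq y - 1$ with $y = x/e$, which gives $\ln x - 1 \leq \frac{x}{e} - 1$. Multiplying back by $\RecDepth$ yields $\ln(\Card) = \RecDepth \ln x \leq \RecDepth \Card^{\frac{1}{\RecDepth}}/e$, which is the claim.

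There is no real obstacle here. The only point that needs care is that $\RecDepth > 0$, so dividing and multiplying by $\RecDepth$ preserves the direction of the inequality. The case $\Card < 1$, where $\ln \Card < 0$, is covered automatically, because the reduced inequality holds for every $x > 0$.
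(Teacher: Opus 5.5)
Your proposal is correct and takes the same route as the paper: the paper also substitutes $y = k^{1/d}$ into $\ln y \leq y/e$ and rearranges by multiplying through by $d$. The only difference is that you also prove $\ln y \leq y/e$ (via the minimum of $y/e - \ln y$ at $y = e$, or via $\ln t \leq t - 1$), whereas the paper quotes it as a known fact.
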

\begin{proof}
    Begin from the inequality $\ln y \leq y / e$, and substitute $y = \Card^{\frac{1}{\RecDepth}} $ to get $\frac{1}{\RecDepth} \ln \Card \leq \TargetSize^{ \frac{1}{\RecDepth} } / e$. Rearranging this proves the claim.
\end{proof}

\begin{claim} \label{claim:RG final sol size}
    Suppose \RA{} is called with size constraint $\TargetSize = \Card$ and size divisor $\TargetDiv = \Card^{\frac{1}{\RecDepth}}$ in its input. Then \RA{} outputs an out-tree with $\NumEdges( \SetS ) \leq (\RecDepth+1)^2 \Card^{\frac{1}{\RecDepth}} \Card$.
\end{claim}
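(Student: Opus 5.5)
We plan to combine Lemma~\ref{lem:RG sol size} with Claims~\ref{claim:rec level of RG call} and \ref{claim:ln b UB}. By Claim~\ref{claim:rec level of RG call}, the initial call has recursion level $\RecLevel = \RecDepth$. With $\TargetSize = \Card$ and $\TargetDiv = \Card^{\frac{1}{\RecDepth}}$, Lemma~\ref{lem:RG sol size} gives
\begin{align}
    \NumEdges(\SetS) \leq \left(1 + 3\RecDepth \Card^{\frac{1}{\RecDepth}} + \RecDepth \log_{3/2}\left(\Card^{\frac{\RecDepth-1}{\RecDepth}}\right)\right)\Card. \notag
\end{align}
It then suffices to bound the bracketed term by $(\RecDepth+1)^2 \Card^{\frac{1}{\RecDepth}}$.

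For the logarithmic term, we write $\log_{3/2}(\Card^{\frac{\RecDepth-1}{\RecDepth}}) = \frac{\RecDepth-1}{\RecDepth} \cdot \frac{\ln \Card}{\ln(3/2)}$. Applying Claim~\ref{claim:ln b UB} gives $\ln \Card \leq \RecDepth \Card^{\frac{1}{\RecDepth}}/e$, so
\begin{align}
    \RecDepth \log_{3/2}\left(\Card^{\frac{\RecDepth-1}{\RecDepth}}\right) \leq \frac{(\RecDepth-1)\RecDepth}{e\ln(3/2)} \Card^{\frac{1}{\RecDepth}}. \notag
\end{align}
Since $e \ln(3/2) \approx 1.10 \geq 1$, this is at most $\RecDepth(\RecDepth-1)\Card^{\frac{1}{\RecDepth}}$.

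Using $1 \leq \Card^{\frac{1}{\RecDepth}}$, the whole bracket is then at most
\begin{align}
    (1 + 3\RecDepth + \RecDepth^2 - \RecDepth)\Card^{\frac{1}{\RecDepth}} = (\RecDepth^2 + 2\RecDepth + 1)\Card^{\frac{1}{\RecDepth}} = (\RecDepth+1)^2 \Card^{\frac{1}{\RecDepth}}, \notag
\end{align}
which is the claimed bound.

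The main obstacle is edge cases, because the algebra above only works when the constants line up exactly and $\Card^{\frac{\RecDepth-1}{\RecDepth}} \geq 1$ (true for $\Card \geq 1$). The first case is $\RecDepth = 1$: the log term vanishes and the bound becomes $1 + 3\Card \leq 4\Card$, which holds. The second case is $\Card = 1$: \RA{} returns a single edge, and the bound holds trivially.
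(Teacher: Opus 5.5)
Your proposal is correct and is essentially the paper's proof: substitute $\TargetSize=\Card$, $\TargetDiv=\Card^{1/\RecDepth}$ and $\RecLevel=\RecDepth$ into Lemma~\ref{lem:RG sol size}, bound $\ln\Card$ via Claim~\ref{claim:ln b UB}, use $e\ln(3/2)\geq 1$ and $\Card^{1/\RecDepth}\geq 1$, and collect terms into $(\RecDepth+1)^2\Card^{1/\RecDepth}$. Your separate treatment of $\Card=1$ (where $\TargetDiv=1$ and the recursion level is not well defined) is a harmless addition that the paper leaves implicit. The $\RecDepth=1$ case needs no special handling, since the general computation already covers it.
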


\begin{proof}
    By the assignments of $\TargetSize = \Card$ and $\TargetDiv = \Card^{\frac{1}{\RecDepth}}$ and by Claim~\ref{claim:rec level of RG call}, the recursion level is $\RecLevel = \RecDepth$. By substituting these values into Theorem~\ref{thm:RA}, we upper bound $\NumEdges(\SetS)$ below, proving the claim; the 2nd inequality holds by Claim~\ref{claim:ln b UB}.
    \begin{align}
        \NumEdges(\SetS)
        &\leq \left(1 + 3\RecLevel\TargetDiv + \RecLevel \log_{\frac{3}{2}} \left( \frac{\Card}{\TargetDiv} \right) \right) \Card
        = \left( 1 + 3 \RecDepth \Card^{\frac{1}{\RecDepth}} + \RecDepth \log_{\frac{3}{2}} \left( \Card^{\frac{\RecDepth-1}{\RecDepth}} \right) \right) \Card \notag \\
        &= \left( 1 + 3 \RecDepth \Card^{\frac{1}{\RecDepth}} + \frac{\RecDepth-1}{\ln \frac{3}{2}} \ln ( \Card ) \right) \Card
        \leq \left( 1 + 3 \RecDepth \Card^{\frac{1}{\RecDepth}} + \frac{\RecDepth-1}{\ln \frac{3}{2}} \frac{\RecDepth \Card^{\frac{1}{\RecDepth}}}{e} \right) \Card \notag \\
        &\leq \left(\Card^{\frac{1}{\RecDepth}} + 3 \RecDepth \Card^{\frac{1}{\RecDepth}} + \RecDepth(\RecDepth-1) \Card^{\frac{1}{\RecDepth}} \right) \Card
        = \left(1 + 2\RecDepth + \RecDepth^2 \right) \Card^{\frac{1}{\RecDepth}} \Card \notag \\
        &= (\RecDepth+1)^{2} \Card^{\frac{1}{\RecDepth}} \Card. \tag*{\qedhere}
    \end{align}
\end{proof}

\begin{proof}[Proof of Theorem~\ref{thm:RAd}]
    \RAd{} finds the output out-tree $\SetS$ by calling \RA{} with size constraint $\TargetSize = \Card$ and size divisor $\TargetDiv = \Card^{\frac{1}{\RecDepth}}$. By Claim~\ref{claim:rec level of RG call}, the recursion level of this \RA{} call is $\RecLevel = \RecDepth$. Thus, it follows from Theorem~\ref{thm:RA} that $\SetS$ has value $\Funcf(\SetS) \geq \frac{1}{\RecDepth+1} \Funcf(\OptTree)$. Further, by Claim~\ref{claim:RG final sol size}, we have that $\NumEdges(\SetS) \leq (\RecDepth+1)^{2} \Card^{\frac{1}{\RecDepth}} \Card$. Hence, \RAd{} is a $(\frac{1}{\RecDepth+1}, (\RecDepth+1)^2 \Card^{\frac{1}{\RecDepth}})$-approximation algorithm for \DRCSMa{}.

    Finally, by Theorem~\ref{thm:RA}, $\TargetSize = \Card$, and the recursion level being $\RecLevel = \RecDepth$, the running time of \RAd{} is $O(\NumVertices^{\RecLevel+1} \TargetSize^{2\RecLevel+2}) = O(\NumVertices^{\RecDepth+1} \Card^{2\RecDepth+2})$. This proves the lemma.
\end{proof}

\section{Related Work} \label{sec:related work}
\paragraph*{Variants of (Directed Rooted) \CSM{}}
Kuo et al. \cite{kuo2014maximizing} introduced the variant of \CSMa{} known as \emph{Maximum Connected Submodular function with a Budget constraint} (\MCSBa). Here, we are given an undirected graph $\Graph$ with positive integer vertex costs, a non-negative monotone submodular function $\Funcf$ on the vertices of $\Graph$, and a budget $\Budget$. The goal is to select a connected subgraph of $\Graph$, of cost at most $\Budget$, whose vertex set maximizes $\Funcf$. Kuo et al. \cite{kuo2014maximizing} showed a polynomial time $\Omega(\frac{1}{(\Degree+1) \sqrt{\Budget}})$-approximation algorithm for \MCSBa{}, where $\Degree$ is the maximum degree in $\Graph$. D'Angelo et al. \cite{DAngelo2022} later improved on this by giving a polynomial time $\Omega(\frac{1}{\sqrt{\Budget}})$-approximation algorithm for the directed generalization of \MCSBa{} known as \emph{Directed Unrooted Submodular Tree} (\DUSTa{}).

Ghuge and Nagarajan \cite{Ghuge2022} introduced the variant of \DRCSMa{} known as \emph{Submodular Tree Orienteering} (\STOa{}). Here, we are given a directed graph $\Graph$ with positive integer edge costs, a non-negative monotone submodular function $\Funcf$, a budget $\Budget$, and a root vertex $\Vertv$. The goal is to find an out-tree in $\Graph$, with root $\Vertv$ and cost at most $\Budget$, whose vertex set maximizes $\Funcf$. Ghuge and Nagarajan showed a quasi-polynomial time $\Omega(\frac{\log \log \NumOptVertices}{\log \NumOptVertices})$-approximation algorithm, where $\NumOptVertices$ is the number of vertices in an optimal tree. This approximation is tight as implied by a known hardness result for Group Steiner Tree (see below).

D'Angelo et al. \cite{DAngelo2022} introduced the variant of \DRCSMa{} known as \emph{Directed Rooted Submodular Tree} (\DRSTa{}). \DRSTa{} is defined like \STOa{} except that positive integer costs are assigned to vertices rather than edges; in fact, D'Angelo et al. showed an approximation-preserving reduction from \STOa{} to \DRSTa{}. D'Angelo et al. \cite{DAngelo2022} showed, for every $\Errorv \in (0, 1]$, a polynomial time $O(\frac{\Errorv^{3}}{\sqrt{\Budget}})$-approximation algorithm that violates the budget by at most a factor of $(1+\Errorv)$.

\paragraph*{Variants of (Directed Rooted) \CMC{}}
Ran et al. \cite{Ran2016} introduced the variant of \CMCa{} known as \emph{Connected Budgeted maximum Coverage} (\CBCa{}) (they called this problem \emph{Maximum Weight Budgeted Connected Set Cover}). Here, we are given a universe of elements $\Univ$ wherein each element has a non-negative weight, an undirected graph $\Graph$ where each vertex is a subset of $\Univ$ and has a non-negative cost, and a budget $\Budget$. The goal is to select a connected subgraph of $\Graph$, of cost at most $\Budget$, whose vertex set maximizes the total weight of its covered elements. Ran et al. studied \CBCa{} under the assumption that every pair of subsets (vertices) with a non-empty intersection is adjacent in $\Graph$, which is a realistic assumption in, e.g., wireless sensor network applications. They gave a polynomial time $\Omega(\frac{1}{(\Degree + 1) \log \NumVertices})$-approximation algorithm, where $\Degree$ is the maximum degree in $\Graph$ and $\NumVertices$ is the number of vertices in $\Graph$.
D'Angelo et al. \cite{DAngelo2022} later improved the approximation factor to $\Omega(\frac{1}{\log \NumVertices})$ under the same assumption.
D'Angelo and Delfaraz \cite{DAngelo2025} studied \CBCa{} in general; they showed, for every $\Errorv \in (0,1]$, a polynomial time $\Omega(\frac{\Errorv^2}{\log(\NumVertices + |\Univ|) \log |\Univ|})$-approximation algorithm that violates the budget by at most a factor of $(1 + \Errorv)$.

D'Angelo and Delfaraz \cite{DAngelo2025} also introduced and studied the variant of \DRCMCa{} known as \emph{Directed rooted Connected Budgeted maximum Coverage} (\DCBCa{}). \DCBCa{} generalizes \CBCa{} by letting $\Graph$ be a directed graph and by specifying a root vertex $\Vertv$; the goal is to select an out-tree, with root $\Vertv$ and cost at most $\Budget$, whose vertex set maximizes the total weight of its covered elements. D'Angelo and Delfaraz \cite{DAngelo2025} gave, for every $\Errorv \in (0,1]$, a polynomial time $\Omega(\frac{\Errorv^2}{\sqrt{\NumVertices} \log^{2} |\Univ|})$-approximation algorithm that violates the budget by at most a factor of $(1 + \Errorv)$. They mentioned that their techniques cannot be directly applied to the generalizations of \CBCa{} and \DCBCa{} where the objective function is monotone submodular.

We briefly mention that previous works have studied \CMCa{} in special geometric settings, achieving polynomial time constant-factor approximations \cite{Huang2015, Huang2019}. Also, previous works have studied the special case of \CMCa{} known as \emph{\BCDS{}}, again achieving polynomial time constant-factor approximations \cite{Avrachenkov2014, Khuller2014, Lamprou2021}. 
This problem has application to selecting a limited set of connected and influential users in a social network.

\paragraph*{Group Steiner Tree}  
In \emph{Group Steiner Tree} (\GSTa{}), we are given an undirected graph $\Graph$ with edge costs and a collection of subsets of the vertices, called \emph{groups}. The goal is to select a minimum cost tree in $\Graph$ that contains at least one vertex from every group. This problem is polylogarithmic approximable by both polynomial time \cite{Garg2000, Charikar1998, Zosin2002, Chekuri2006} and quasi-polynomial time \cite{Charikar1999, Grandoni2019} algorithms; in particular Grandoni et al. \cite{Grandoni2019} gave a quasi-polynomial time $O(\frac{\log^{2} \NumVertices}{ \log \log \NumVertices})$-approximation algorithm and showed a matching hardness result by modifying the hardness result of Halperin and Krauthgamer \cite{Halperin2003}.

\GSTa{} generalizes \emph{Set Cover} and is a special case of \emph{Directed Steiner Tree} (\DSTa{}). Interestingly, \DSTa{} has so far appeared to be more challenging than \GSTa{} for polynomial time algorithms. Charikar et al. \cite{Charikar1999} gave the current-best polynomial time algorithm for \DSTa{}, achieving a $O(\frac{\NumTerm^{\Error} \log \NumTerm}{\Error^2})$-approximation in polynomial time for a constant $\Error > 0$, where $\NumTerm$ is the number of terminal vertices to be covered.

\paragraph*{Hardness Results for \GSTa{} and Budgeted Variants of \CSMa{}}
For \GSTa{}, there are known approximation hardness results that apply to quasi-polynomial time algorithms and, thus, to polynomial time algorithms as well. Letting $\NumVertices$ be the input size, Halperin and Krauthgamer \cite{Halperin2003} showed that, for every fixed $\Error > 0$, a $O( \log^{2-\Error} \NumVertices )$-approximation is not possible in quasi-polynomial time assuming $ \text{NP} \nsubseteq \text{ZPTIME}(n^{\polylog n}) $. This result holds even when the input graph is restricted to being a tree, in which case edge and vertex costs are equivalent. Grandoni et al. \cite{Grandoni2019} modified this hardness result to show that a $o(\frac{\log^{2} \NumVertices}{ \log \log \NumVertices})$-approximation is not possible in quasi-polynomial time assuming the Projection Game Conjecture and $ \text{NP} \nsubseteq \bigcap_{0 < \Error < 1} \text{ZPTIME}( 2^{n^{\Error}} ) $. They also show that this is tight by providing a quasi-polynomial time $O(\frac{\log^{2} \NumVertices}{ \log \log \NumVertices})$-approximation algorithm.

For variants of \CSMa{} with edge or vertex costs, there are near-logarithmic approximation hardness results as implied by the above hardness results for \GSTa{} \cite{Halperin2003, Grandoni2019}; we explain these implications below. We explain how the hardness results for \GSTa{} and the budgeted variant of \CSMa{} (with edge costs). Consider a maximization variant of \GSTa{}, where we are additionally given a budget $\Budget$, and the goal now is to select a tree in $\Graph$, of cost at most $\Budget$, that maximizes the number of groups it covers. This is just a special case of budgeted \CSMa{} with edge costs, since the objective function to be maximized is a coverage function. Then, for a value $\alpha \geq 1$, every $\frac{1}{\alpha}$-approximation algorithm for budgeted \CSMa{} implies an $(\alpha \ln \NumVertices)$-approximation algorithm for \GSTa{} by a set-covering approach. Thus, the aforementioned hardness results imply that, for budgeted \CSMa{}, $\Omega( \frac{1}{\log^{1-\Error} \NumVertices} )$ and $\omega(\frac{\log \log \NumVertices}{\log \NumVertices})$-approximations are not possible in quasi-polynomial time (given the respective hardness assumptions). These hardness results extend to the budgeted variants of (Directed Rooted) \CSMa{} mentioned above, including those with vertex costs rather than edge costs.
However, they do not carry over easily to \CSMa{} (which has uniform costs) since the original hardness results for \GSTa{} rely on constructing a tree instance with carefully assigned edge costs.

We also mention that, for the vertex-cost variant of \CSMa{}, Kortsarz and Nutov \cite{kortsarz2011approximating} showed that, even when allowing the budget to be violated by a universal constant factor, a $\omega(\frac{1}{\log \log \NumVertices })$-approximation is not possible in quasi-polynomial time assuming $\text{NP} \nsubseteq \text{Quasi}(\text{P})$. In fact, Kortsarz and Nutov proved their hardness result for the special case of the problem where the objective function is additive.

\paragraph*{Polymatroid Steiner Tree} In \emph{\PST} (\PSTa{}), we are given an undirected graph $\Graph$ with edge costs and a non-negative monotone integer-valued submodular function, $\Funcf$, on the vertex set of $\Graph$. The goal is to select a minimum cost tree $\SetS \subseteq \Graph$ with $\Funcf(\SetS) = \Covf$, i.e., the vertex set of $\SetS$ is a base of the polymatroid associated with $\Funcf$. This problem generalizes \GSTa{} and Covering Steiner Tree.

\PSTa{} was introduced by Calinescu and Zelikovsky \cite{Calinescu2005}, motivated by an application to sensor networks. They gave a polynomial time $O( \log^{1+\Error} \NumVertices \log \NumTerm )$-approximation algorithm, where $\NumTerm = \Funcf(\VertexSet)$, for the case where $\Graph$ is a tree; this algorithm extends the combinatorial algorithm of Chekuri et al. \cite{Chekuri2006} for \GSTa{}. For \PSTa{} on general undirected graphs, a polynomial time $O( \log^{2+\Error} \NumVertices \log \NumTerm )$-approximation is obtained by taking a tree embedding of the input graph and running the above algorithm on it.

Im et al. showed \cite{im2016minimum} that the above algorithm implies a bicriteria $(\Omega(1), O(\log^{2+\Error} \NumVertices))$-approximation for \SO{}; the same implication holds for \STOa{} on undirected graphs as this is essentially equivalent to Submodular Orienteering. This further implies a polynomial time $\Omega(\frac{1}{\log^{2+\Error} \NumVertices})$-approximation for \CSMa{} by selecting a valuable size-$\Card$ subtree.

\paragraph*{Polymatroid Directed Steiner Tree}
In \emph{\PDST} (\PDSTa{}), we are given a directed graph $\Graph$ with edge costs, a non-negative monotone integer-valued submodular function, $\Funcf$, on the vertex set of $\Graph$, and a root vertex $\Vertv$. The goal is to select a minimum cost out-tree $\SetS \subseteq \Graph$ with root $\Vertv$ and $\Funcf(\SetS) = \Covf$.

\PDSTa{} was also introduced by Calinescu and Zelikovsky \cite{Calinescu2005}. They gave, for each integer $\RecDepth \geq 2$, a $\RecDepth^{3} \NumTerm^{\frac{1}{\RecDepth}}$-approximation algorithm that runs in time $O(\NumVertices^{\RecDepth+1} \NumTerm^{2\RecDepth+2})$, where $\NumTerm = \Funcf(\VertexSet)$; this algorithm extends the algorithm by Charikar et al. \cite{Charikar1999} for \DST{}. Ghuge and Nagarajan \cite{Ghuge2022} showed that their quasi-polynomial time algorithm for \STOa{} implies, by a set-covering approach, a quasi-polynomial time $O(\frac{\log^2 \NumTerm}{\log \log \NumTerm})$-approximation algorithm for \PDSTa{}. Recently, Chekuri et al. \cite{Chekuri2024} considered \PDSTa{}, as well as its special cases, when $\Graph$ is planar, showing polynomial time polylogarithmic approximation algorithms. Their work builds on ideas from Friggstad and Mousavi \cite{Friggstad2023} for \DST{} on planar graphs.

\paragraph*{\RAd{} Compared to Algorithm for Polymatroid Directed Steiner Tree}
Recall that, for each integer $\RecDepth \geq 1$, our algorithm \RAd{} (Algorithm~\ref{alg:RA}) finds a bicriteria $( \frac{1}{\RecDepth+1}, (\RecDepth+1)^2 \Card^{\frac{1}{\RecDepth}} )$-approximation for \DRCSMa{} in time $O(\NumVertices^{\RecDepth+1} \Card^{2 \RecDepth+2})$ (Theorem~\ref{thm:RAd}) by calling \RA{} with size constraint $\Card$, root $\Vertv$, and value $\TargetDiv = \Card^{\frac{1}{\RecDepth}}$. Similarly, for each integer $\RecDepth \geq 2$, Calinescu and Zelikovsky's algorithm finds a $\RecDepth^{3} \NumTerm^{\frac{1}{\RecDepth}}$-approximation for \PDSTa{} in time $O(\NumVertices^{\RecDepth+1} \NumTerm^{2\RecDepth+2})$, where $\NumTerm = \Funcf(\VertexSet)$. Calinescu and Zelikovsky's \PDSTa{} algorithm uses a recursive greedy strategy similar to our algorithm \RA{}. In general, their algorithm takes as input $(\Funcf, \RecLevel, \NumTerm, \Vertv, \SetX)$, where $\Funcf$ is the objective function, $\RecLevel$ is the recursion level, $\NumTerm$ is the target value, $\Vertv$ is the root vertex, and $\SetX$ is a set of already spanned vertices. Their algorithm constructs an out-tree $\SetS$ by adding cost-efficient out-subtrees (constructed recursively) until $\Funcf(\SetS \mid \SetX) \geq \NumTerm$. However, the main difference between their algorithm and \RA{} is that, for each solution update, they iterate over guesses, $\NumTermp$, of the marginal gain of a cost-efficient out-subtree (these guesses are passed to the recursive calls as the target values). This is why their algorithm's running time depends on $\NumTerm = \Covf$, which may not be polynomially bounded in $\GroundSetSize$. In \RA{} on the other hand, for each solution update, it iterates over guesses of the size of a valuable out-subtree, which are at most $\Card$.

We point out that it is possible to adapt the \PDSTa{} algorithm to achieve a bicriteria approximation for \DRCSMa{} similar to the bicriteria $( \frac{1}{\RecDepth+1}, (\RecDepth+1)^2 \Card^{\frac{1}{\RecDepth}} )$-approximation of \RAd{}. However, this requires additional steps to ensure that the \PDSTa{} algorithm runs in polynomial time and so that it returns an out-tree $\SetS$ with value $\Funcf(\SetS) \geq \Omega(\OptValue)$ and size $O(\Card^{\frac{1}{\RecDepth}}) \Card$ (for constant $\RecDepth$); this results in an extra polynomial factor in the running time compared to that of \RAd{}. Below we sketch our approach to achieving a bicriteria approximation for \DRCSMa{} using the \PDSTa{} algorithm, though we omit the details.

Let $(\Graph = (\VertexSet, \EdgeSet), \Funcg, \Card, \Vertv)$ be an instance of \DRCSMa{}. We first follow the approach of Chekuri and Pal \cite{Chekuri2005} to scale the objective function $\Funcg$ to give an integer-valued function $\Funcgp$ such that the scaled optimal solution value, $\OptValuep$, is polynomially bounded: we guess the most valuable vertex in the (original) optimal solution, $\OptVert$, and define $\Funcgp(\SetS) = \lfloor \frac{\Card^2 \Funcg(\SetS)}{\Error \Funcg(\OptVert)} \rfloor$, where $\Error > 0$ is any constant (the rounding does not guarantee that $\Funcgp$ is submodular, but this only loses a constant-factor in the overall approximation). It follows from subadditivity that the scaled optimal solution value, $\OptValuep$, is upper bounded by $\lfloor \frac{\Card^2 \OptValue}{\Error \Funcg(\OptVert)} \rfloor \leq \lfloor \frac{\Card^2 \cdot \Card \cdot \Funcg(\OptVert)}{\Error \Funcg(\OptVert)} \rfloor \leq \frac{\Card^{3}}{\Error}$.
Now we follow an approach similar to that in Theorem 3.1 of Im et al. \cite{im2016minimum} to solve the scaled instance: we guess the scaled optimal value, $\OptValuep$, and define the truncated function $\Funcf(\SetS) = \min\{ \OptValuep, \Funcgp(\SetS) \}$. Then we run the \PDSTa{} algorithm with input $(\Funcf, \RecDepth, \OptValuep, \Vertv, \varnothing)$ and stop the algorithm once it constructs an out-tree $\SetS$ satisfying $\Funcf(\SetS) \geq \frac{\OptValuep}{2}$. Finally, we return $\SetS$ as the solution to our original instance of \DRCSMa{}. Note that, in order to prove that the returned out-tree $\SetS$ has size $O(\Card^{\frac{1}{\RecDepth}}) \Card$, we would need to invoke the height reduction lemma (Lemma 1) of Calinescu and Zelikosvsky \cite{Calinescu2005}.

\section{Conclusions} \label{sec:con}
We presented a novel polynomial time framework that, for (Directed) \textbf{CSM}, achieves a $\Omega(\frac{\varepsilon^{3}}{{r}^{\varepsilon}})$-approximation for every constant $\varepsilon \in (0, 1]$; and, for \textbf{DRCSM}, achieves a bicriteria $(\Omega(\frac{\delta \varepsilon^{3}}{{r}^{\varepsilon}}), 1 + \delta)$-approximation for every $\delta \in [\frac{1}{\Card}, 1]$ and constant $\varepsilon \in (0, 1]$. This outperforms the state of the art with respect to the size constraint, $\Card$, and the optimal solution radius, $\OptRadius$. As part of our framework, we proposed the algorithms \GR{} and \RAd{} for Directed Rooted \CSMa{}. \GR{} takes a bicriteria $(\AppFactor(\Card), \VioFactor(\Card))$-approximation subroutine and uses it to construct a $(\Omega(\AppFactor(\OptRadius)), O(\VioFactor(\OptRadius)))$-approximate solution. \RAd{} can be used as this subroutine, giving a bicriteria $(\frac{1}{\RecDepth+1}, (\RecDepth+1)^2 \Card^{\frac{1}{\RecDepth}})$-approximation in time $O(\NumVertices^{\RecDepth+1} \Card^{2\RecDepth+2})$.

A potential future direction is to extend our framework to problems with edge or vertex costs, such as \emph{Submodular Tree Orienteering} (edge costs) or \emph{Directed Rooted Submodular Tree} (vertex costs).
Further, for rooted network design problems such as \DRCSMa{} and the aforementioned problems, it is an open problem to show a polynomial time $\Omega(\frac{1}{\Card^{\Error}})$ or $\Omega(\frac{1}{\Radius^{\Error}})$-approximation algorithm that does not violate the size or budget constraint.
Lastly, it would be interesting to prove a stronger approximation hardness result for \CSMa{} than the one already implied by the cardinality constrained problem, as this could inform us on whether our algorithms achieve tight approximation factors with respect to $\Card$. If so, this would further motivate algorithms with beyond-worst-case guarantees.

\begin{acks}
    Philip Cervenjak was supported by the Elizabeth and Vernon Puzey Scholarship, and the Faculty of Engineering and Information Technology.
    Naonori Kakimura was supported by JSPS KAKENHI Grant Numbers JP23K21646, JP22H05001, and JST ERATO Grant Number JPMJER2301, Japan.
    Seeun William Umboh was supported by the Australian Government through the Australian Research Council DP240101353.

    The authors thank the anonymous reviewers for their feedback.
\end{acks}




\bibliographystyle{ACM-Reference-Format} 
\bibliography{main}


\end{document}